\documentclass[12pt, draftclsnofoot, onecolumn]{IEEEtran}

\usepackage{amsmath, amssymb, comment}
\usepackage{graphicx, color, tabu}
\usepackage[latin1]{inputenc}
\usepackage{caption, subcaption, tikz}
\usepackage{relsize, dsfont, mathtools}

\captionsetup{font={small}}
\captionsetup[subfloat]{font={small}}

\usetikzlibrary{patterns}

\newtheorem{theorem}{Theorem}
\newtheorem{corollary}{Corollary}

\newtheorem{proposition}{Proposition}
\newtheorem{proof}{Proof}

\newcommand{\sinr}{\mathsf{SINR}}
\newcommand{\PP}{\mathbb{P}}
\newcommand\E{\mathbb{E}}
\newcommand{\La}{\mathcal{L}}
\newcommand{\dd}{{\rm d}}
\newcommand{\R}{\mathbb{R}}
\newcommand{\G}{\Gamma}
\newcommand{\g}{\gamma}
\newcommand{\la}{\lambda}

\newcommand{\TWD}{\mathsf{2D}}
\newcommand{\TWU}{\mathsf{2U}}
\newcommand{\THD}{\mathsf{3D}}
\newcommand{\THU}{\mathsf{3U}}
\newcommand{\TWN}{\mathsf{2N}}
\newcommand{\THN}{\mathsf{3N}}
\newcommand{\supfn}{f_\ell(\theta, \theta_{\rm max})}
\newcommand{\Mod}[1]{\ \left(\text{mod}\ #1\right)}
\newcommand{\mfrac}[2]{\genfrac{}{}{0pt}{}{#1}{#2}}
\DeclareMathOperator*{\argmax}{arg\,max}

\IEEEoverridecommandlockouts
\setlength\abovedisplayskip{2pt}
\setlength\belowdisplayskip{2pt}
\allowdisplaybreaks
\graphicspath{{figures/}}

\begin{document}

\title{Impact of Directionality on Interference Mitigation in Full-Duplex Cellular Networks}

\author{Constantinos Psomas, \IEEEmembership{Member, IEEE}, Mohammadali Mohammadi, \IEEEmembership{Member, IEEE}, Ioannis Krikidis, \IEEEmembership{Senior Member, IEEE}, and Himal A. Suraweera, \IEEEmembership{Senior Member, IEEE}\vspace{-1cm}

\thanks{C. Psomas and I. Krikidis are with the  KIOS Research Center for Intelligent Systems and Networks, University of Cyprus, Cyprus (e-mail: \{psomas, krikidis\}@ucy.ac.cy).

M. Mohammadi is with the Faculty of Engineering, Shahrekord University, Shahrekord 115, Iran (email: m.a.mohammadi@eng.sku.ac.ir).

H. A. Suraweera is with the Department of Electrical and Electronic Engineering, University of Peradeniya, Peradeniya 20400, Sri Lanka (email: himal@ee.pdn.ac.lk).

This work was supported by the Research Promotion Foundation, Cyprus under the project FUPLEX with pr. no. CY-IL/0114/02. Part of this work was presented at the IEEE International Workshop on Signal Processing Advances in Wireless Communications, Stockholm, Sweden, 2015 \cite{SPAWC}.}}

\maketitle
\begin{abstract}
In this paper, we consider two fundamental full-duplex (FD) architectures, two-node and three-node, in the context of cellular networks where the terminals employ directional antennas. The simultaneous transmission and reception of data in non-orthogonal channels makes FD radio a potential solution for the currently limited spectrum. However, its implementation generates high levels of interference either in the form of loopback interference (LI) from the output to the input antenna of a transceiver or in the form of co-channel interference in large-scale multicell networks due to the large number of active links. Using a stochastic geometry model, we investigate how directional antennas can control and mitigate the co-channel interference. Furthermore, we provide a model which characterizes the way directional antennas manage the LI in order to passively suppress it. Our results show that both architectures can benefit significantly by the employment of directional antennas. Finally, we consider the case where both architectures are employed in the network and derive the optimal values for the density fraction of each architecture which maximize the success probability and the network throughput.
\end{abstract}

\begin{IEEEkeywords}Full-duplex, cellular networks, directional antennas, outage probability, stochastic geometry.\end{IEEEkeywords}

\section{Introduction}\label{sec:intro}
\IEEEPARstart{I}{nterference} is a fundamental notion in wireless communications. Its existence is an inevitable outcome of the concurrent use of wireless resources between multiple transmitters, that is frequency, code or time. Conventionally, the concept of orthogonality is applied to reduce it or maybe even eliminate it. For instance, in cellular networks, terminals in the same cell transmit using different carrier frequency or time slot thus restricting the co-channel interference at a receiver to out-of-cell transmitters. Furthermore, the recent Long Term Evolution (LTE) standard, implements Orthogonal Frequency Division Multiplexing (OFDM), which divides the available bandwidth into a large set of sub-carriers which are transmitted in parallel. The division is done in a way such that the frequency space between the sub-carriers is minimized but orthogonality is still achieved. However, even though orthogonality assists in the reduction of co-channel interference, it limits the available spectrum. Towards this direction, full-duplex (FD) is considered a potential technology for the next generation of communication networks \cite{JSAC, HA}.

FD is a well investigated technology which could potentially double the available spectrum and subsequently increase the data rate compared to half-duplex (HD) radio, as it employs simultaneous transmission and reception using non-orthogonal channels \cite{JSAC}. Despite its promising potential, FD radio has been overlooked, especially for large-scale multicell networks due to the high levels of interference it generates. The use of non-orthogonal channels has the critical disadvantage of increasing the interference in a cellular network, which significantly degrades its performance \cite{CP}. The existence of more active wireless links results in the escalation of both intra- and out-of-cell co-channel interference. Moreover, the non-orthogonal operation at a transceiver creates a loopback interference (LI) between the input and output antennas \cite{TR1}-\cite{BHA}. This aggregate interference at a receiver is why FD has been previously regarded as an unrealistic approach in wireless communications. In particular, the main reason is the LI which was considered to make wireless communications impractical. Consequently, the primary concern towards making FD feasible, was how to mitigate the LI and with the advancements in signal processing and antennas, many methods now exist to achieve this \cite{JSAC, HA}. These methods can be passive (channel-unaware), e.g. \cite{AS1}-\cite{EE2}, active (channel aware), e.g. \cite{TR2, NGO}, or a combination of the two.

The existence of interference in wireless networks has urged researchers to consider methods to either exploit it is such a way as to achieve power savings \cite{MZ} or manage it in a manner that would achieve performance gains. In this paper we focus on the latter case and consider a well-known method which is directional transmission and reception \cite{DIR1}, \cite{DIR2}. In the omni-directional case, the signal is transmitted in all directions and, as a result, interferes with all other terminals in the network. Therefore, by focusing the signal to a certain direction reduces the number of terminals that are affected by the interfering signal, i.e. the terminals that lie in the transmitted direction. Furthermore, compared to the omni-directional case, the directed transmitted signal can achieve a longer distance with the same power and can also reach the receiver with higher power at the same distance. As the beamwidth decreases, the gain of the signal increases and the possibility of interfering with other terminals decreases. The significance of directional antennas in large-scale multicell networks has been shown before in various contexts. In \cite{AH}, the authors studied an ad-hoc network's performance under some spatial diversity methods and showed the achieved gains. The work in \cite{JW} developed a model to investigate the impact of beam misdirection on the network's performance. The impact on the performance of a receiver in a heterogeneous HD cellular network with directional antennas is demonstrated in \cite{Wang}. Finally, the authors of \cite{TB}, provide a performance analysis of mmWave cellular networks with blockage where directional antennas are essential.

Apart from the reduction in co-channel interference, the employment of directional antennas in an FD context provides the prospect of passively suppressing the LI with antenna separation techniques \cite{AS1}, \cite{EE1}. The angle formed between the transmit and receive antennas when they point to different directions reduces the intensity of the LI and thus the final residual LI after active cancellation is minimized. Indeed, in order to bring down the residual LI as close to the noise floor as possible, in addition to active cancellation, passive suppression is also required \cite{AS3}. Given these observations, the use of directional antennas in large-scale multicell FD networks seems as a promising solution to manage and control the high levels of interference. Previous studies on large-scale FD networks were mostly concerned with single antenna scenarios. In \cite{TONG}, the authors considered a wireless ad-hoc network with both FD and HD capabilities and showed that under imperfect LI cancellation there exists a break-even point where FD and HD provide the same performance. Cellular networks were considered in \cite{GOY} where FD was implemented only at the base stations (BSs); the authors showed that the uplink is more affected by the interference compared to the downlink. The effect of multi-user interference in FD cellular networks was studied in \cite{RKM}; it was shown that without dedicated interference management, FD is not feasible in macrocell networks but it can be viable in microcell networks under certain conditions. Moreover, the authors of \cite{GOY2} studied a hybrid HD/FD cellular network and demonstrated the trade-off between the average spectral efficiency and coverage with respect to the number of FD BSs. The work in \cite{CP} considered both FD and HD-enabled users in cellular networks, showing that FD can increase the downlink performance if the LI can be significantly mitigated. In \cite{LEE}, hybrid HD/FD multi-tier cellular networks were investigated and it was shown that in order to maximize the network's throughput different tiers should operate in different duplex modes. Hybrid HD/FD cellular networks were also studied in \cite{AL} where the authors considered both cell center and cell edge users together with realistic parameters such as pulse shaping and matched filtering; it was demonstrated that FD BSs with HD users provide higher performance than FD BSs and FD users. A single-cell scenario was investigated in \cite{MM} with a multiple antenna FD access point and single antenna HD users; it was demonstrated that the average sum rate can be increased with the employment of linear precoding. Finally, the work in \cite{KOUNT} looked at FD small-cell multiple-input multiple-output (MIMO) relays and derived tight bounds for the success probability.

\subsection{Paper contributions}
In this paper, we study the performance of two FD architectures, two-node and three-node \cite{JSAC}, in cellular networks where the terminals employ directional antennas to manage and thus mitigate the high-levels of interference in the network. Specifically, the paper's contributions are as follows
\begin{itemize}
  \item We derive analytical expressions for the outage probability of the network for each architecture using stochastic geometry \cite{HAEN1}, and show that with the employment of directional antennas, the co-channel interference can be regulated in such a way as to significantly reduce it at a terminal and thus improve its performance.
  \item We derive a simple mathematical model which characterizes the behaviour of directional antennas with regards to the mitigation of the LI. Our model provides the level of the passive LI suppression at a BS as a function of the angle between the transmit and receive antennas. Our model is based on the experimental results in \cite{EE1} but we have generalized it for any practical scenario.
  \item We study the asymptotic cases when the number of employed antennas and the density of the network become large. We show that the performance of the three-node architecture is improved with the employment of more antennas. On the other hand, the performance of the two-node architecture degrades for a large number of antennas due to the high LI power gain as both transmit and receive antenna operate in the same direction. Furthermore, we show that denser networks improve the performance of both architectures and for ultra-dense networks the performance is independent of the LI.
  \item Finally, we consider the composite architecture case where both architectures are employed in the network and provide analytical expressions for the success probability and network throughput. We derive the optimal values for the density of each architecture in the composite case and show that the three-node architecture is preferred in most scenarios.
\end{itemize}
Our results show the significant gains that can be achieved by the employment of directional antennas and also show that the three-node architecture performs better due to the passive suppression of the LI.
 
The rest of the paper is organized as follows: Section \ref{sec:model} presents the network model together with the channel, interference and sectorized directional antenna model. Section \ref{sec:analysis} provides the main results for the outage probability of both downlink and uplink for both FD architectures together with special cases with closed-form expressions. In Section \ref{sec:het} we consider the composite case where both architectures are employed in the network and in Section \ref{sec:num} the simulation results are provided. Finally, the conclusion of the paper is given in Section \ref{sec:conc}.

\textit{Notation}: $\R^d$ denotes the $d$-dimensional Euclidean space, $\|x\|$ denotes the Euclidean norm of $x \in \R^d$, $\PP(X)$ denotes the probability of the event $X$ and $\E(X)$ represents the expected value of $X$, $\mathds{1}_{X}$ is the indicator function of $X$ with $\mathds{1}_{X} = 1$ if $X$ is true and $\mathds{1}_{X} = 0$ otherwise, $\csc(\theta)$ is the cosecant of angle $\theta$ and $G_{p q}^{m n} \left( x ~ \Big | ~ \mfrac{a_1,\dots, a_p}{b_1, \dots, b_q} \right)$ denotes the Meijer G-function \cite[Eq. (9.301)]{GRA}. Furthermore, $_2F_{1}(a,b;c;x)$ is the Gauss hypergeometric function \cite[Eq. (9.100)]{GRA} and we define $F(x,y) \triangleq {}_2F_{1}\left(1, 1-\frac{2}{x}; 2-\frac{2}{x}; -y\right)$.

\begin{figure}[t]\centering
\begin{subfigure}{0.35\linewidth}\centering
  \includegraphics[width=0.7\linewidth]{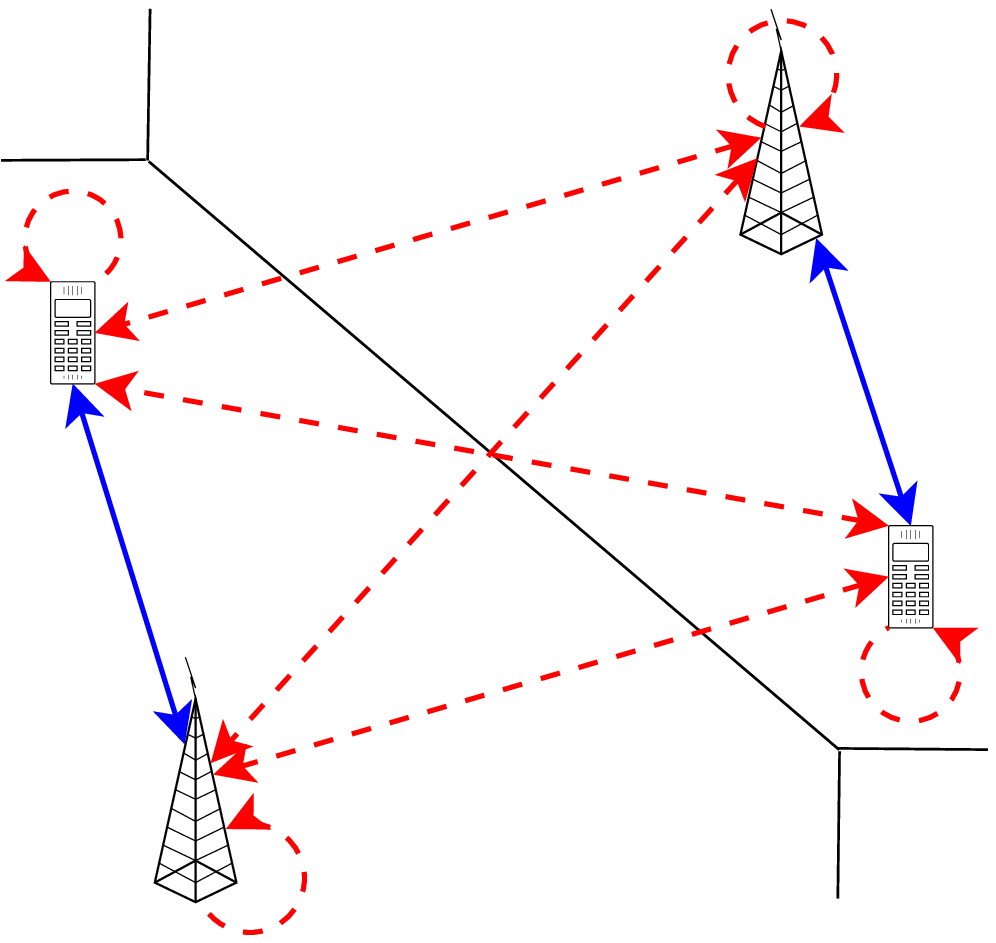}\vspace{-2mm}
  \caption{Two-node.}
  \label{fig:sce1}
\end{subfigure}\hspace{3cm}
\begin{subfigure}{0.35\linewidth}\centering
  \includegraphics[width=0.7\linewidth]{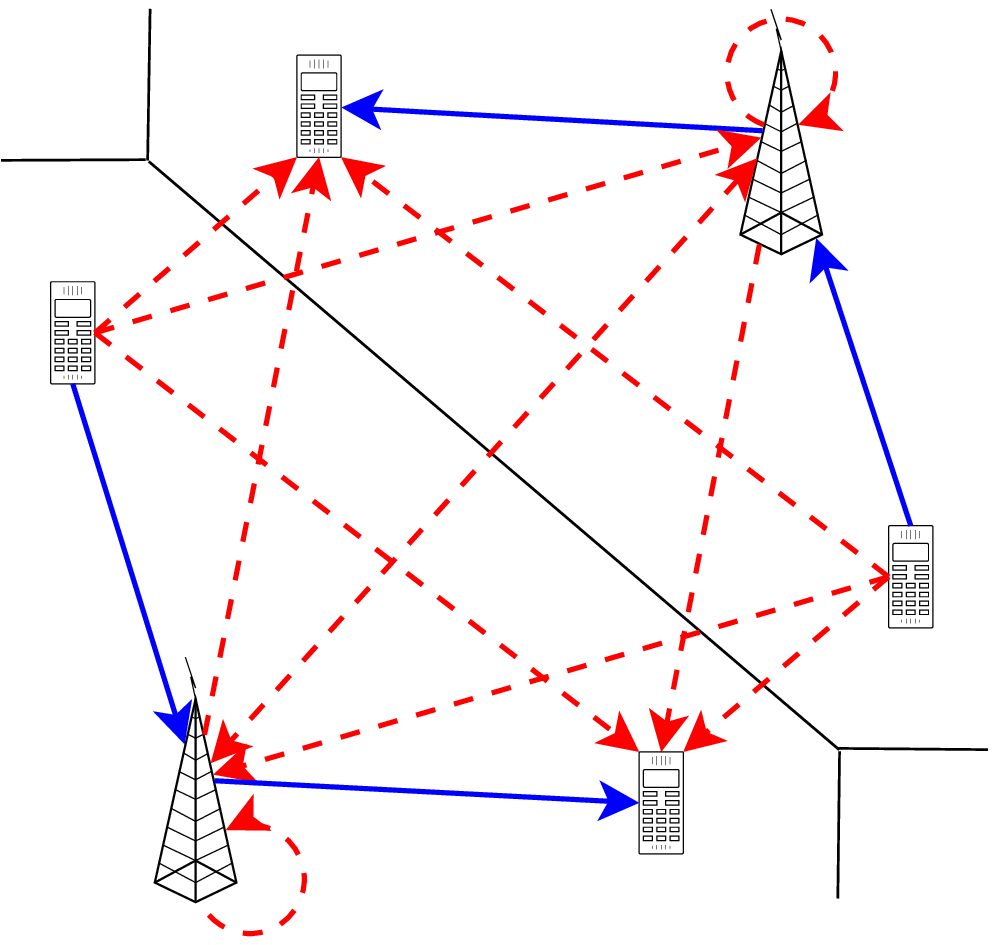}\vspace{-2mm}
  \caption{Three-node.}
  \label{fig:sce2}
\end{subfigure}
\caption{Two-node and three-node full-duplex architectures.}\label{fig:scenarios}\vspace{-10mm}
\end{figure}

\section{System Model}\label{sec:model}
FD networks can be categorized into two-node and three-node architectures \cite{JSAC}. The former, referred also as bidirectional, describes the case where both nodes, i.e., the user and the BS, have FD-capabilities. The latter describes the case where only the BS has FD-capabilities and the users operate in HD-mode. In what follows, we consider both architectures in the case where each node employs a number of directional antennas.

\subsection{Network model}
The network is studied from a large-scale point of view using stochastic geometry \cite{HAEN1}. The locations of the BSs follow a homogeneous Poisson point process (PPP) $\Phi = \{ x_i: i = 1,2,\dots\}$ of density $\la$ in the Euclidean plane $\R^2$, where $x_i \in \R^2$ denotes the location of the $i^{\rm th}$ BS. Similarly, let $\Phi' = \{ y_i: i = 1,2,\dots \}$ be a homogeneous PPP of density $\la' \gg \la$, independent of $\Phi$, representing the locations of the uplink users. A user selects to connect to the nearest BS in the plane, that is, user $i$ is served by BS $j$ if and only if $\|y_i-x_j\| < \|y_i-x_k\|$ where $x_k \in \Phi$ and $k \neq j$. Due to this inequality, the Voronoi cell formed by a BS contains multiple users ($\la'/\la$ on average) and the BS serves each uplink user in it's cell on a different channel. As such, the PPP $\Phi'$ of the uplink users in the network operating on the same channel is a thinned point process $\Psi$ with density $\lambda$. Obviously, these assumptions form correlations between the locations of the BSs and of the users so, in order to achieve tractability, we assume that the point process $\Psi$ is an independent PPP \cite{AL, JA}. Assuming the user is located at the origin $o$ and at a distance $r$ to the nearest BS, the probability density function (pdf) of $r$ is $f_r(r) = 2\pi \la re^{-\la \pi r^2}, ~r \geq 0$ \cite{HAEN1}; this distribution is also valid for the distance between two users and between two BSs. Finally, assume that all BSs transmit with the same power $P_b$ and all users with the same power $P_u$.

\subsection{Channel model}
All channels in the network are assumed to be subject to both small-scale fading and large-scale path loss. Specifically, the fading between two nodes is Rayleigh distributed and so the power of the channel fading is an exponential random variable with unit mean. The channel fadings are considered to be independent between them. The unbounded single-slope path loss model $\ell(x,y) = \|x-y\|^{-\alpha}$ is used which assumes that the received power decays with the distance between the transmitter $x$ and the receiver $y$, where $\alpha > 2$ denotes the path loss exponent; multi-slope path loss models \cite{PL1, PL2} are left for future consideration. Note that even though the bounded path loss model is more practical, we use the unbounded one to simplify our analysis. Furthermore, the effect of both models on the signal-to-interference-plus-noise ratio (SINR) statistics is small \cite{HAEN2}. Throughout this paper, we will denote the path loss exponent for the channels between a BS and a user by $\alpha_1$. The path loss exponent associated with the interfering signal propagation between two users and between two BSs will be denoted by $\alpha_2$. In reality, the path loss exponents for the signals between two BSs and between two users is different but we make this simplification since the interferences between users and between BSs are considered in independent scenarios, Section \ref{subsec:out_down} and Section \ref{subsec:out_up} respectively, and so it does not affect our analytical results and avoids notational overhead. Lastly, we assume all wireless links exhibit additive white Gaussian noise (AWGN) with zero mean and variance $\sigma_n^2$.

\subsection{Sectorized directional antennas}
Define as $M_b$ and $M_u$ the number of directional transmit/receive antennas employed at a BS and a user respectively. The main and side lobes of each antenna are approximated by a circular sector as in \cite{AH}. Therefore, the beamwidth of the main lobe is $2\pi/M_i$, $i \in \{b, u\}$. We assume that the antenna gain of the main lobe is $G_i = \frac{M_i}{1+\g_i(M_i-1)}$ where $\g_i$, $i \in \{b, u\}$ is the ratio of the side lobe level to the main lobe level \cite{AH}. Therefore, the antenna gain of the side lobe is $H_i = \g_i G_i$, $i \in \{b, u\}$. The antenna gain refers to the ability of the directional antenna to focus its energy to the intended direction and the gain is referenced to an omni-directional antenna; $M_b = M_u = 1$ refers to the omni-directional case \cite{CP}. It is assumed that the BSs employ highly adaptive directional antennas and so an active link between a user and a BS lies in the boresight direction of the antennas of both nodes \cite{TSR}, i.e., maximum power gain can be achieved\footnote{Each terminal is equipped with a set of phase shifts at the antenna elements which provide the appropriate beam pattern \cite{ANT}.}.

\begin{table}[t]\centering\tabulinesep=1mm
  \begin{tabu}{| c | c | c | c | c |}\hline
    $k$ & 1 & 2 & 3 & 4\\ \hline
    $\la_{i,j,k}$ & $\frac{\la}{M_i M_j}$ & $\frac{\la(M_j-1)}{M_i M_j}$ & $\frac{\la(M_i-1)}{M_i M_j}$ & $\frac{\la(M_i-1)(M_j-1)}{M_i M_j}$\\ \hline
    $\G_{i,j,k}$ & $G_iG_j$ & $G_iH_j$ & $G_jH_i$ & $H_iH_j$\\ \hline
  \end{tabu}
\caption{Densities $\la_{i,j,k}$ and power gains $\G_{i,j,k}$ for each thinning process $k \in \{1,2,3,4\}$, $i,j \in \{b, u\}$.}\label{tbl:thin}\vspace{-9.5mm}
\end{table}

\subsection{Interference}\label{subsec:inter}
The total co-channel interference at a node is the aggregate sum of the interfering received signals from the BSs of $\Phi$ and the uplink users of $\Psi$; we assume that the received interfering signals at a node are uncorrelated. In the two-node architecture, co-channel interference at any node results from both out-of-cell uplink users and BSs. In the three-node architecture, the BS experiences co-channel interference from out-of-cell BSs and uplink users, whereas the receiver experiences additional intra-cell interference from the uplink user. When $M_b > 1$ or $M_u > 1$ the transmitters can interfere with a receiver in four different ways \cite{AH}:
\begin{itemize}
\item[1.] Transmitting towards a receiver in the main sector,\vspace{-1mm}
\item[2.] Transmitting away from a receiver in the main sector,\vspace{-1mm}
\item[3.] Transmitting towards a receiver outside the main sector,\vspace{-1mm}
\item[4.] Transmitting away from a receiver outside the main sector,\vspace{-1mm}
\end{itemize}
where the main sector is the area covered by the main lobe of the receiver. Consider the interference received at a node $x^i \in \Phi \cup \Psi$ from all other network nodes $y^j \in \Phi \cup \Psi$, $i,j \in \{b,u\}$, $x^i \neq y^j$. To evaluate the interference, each case $k \in \{1,2,3,4\}$ needs to be considered separately. This results in each of the PPPs $\Phi$ and $\Psi$ being split into four thinning processes $\Phi_k$ and $\Psi_k$ with densities $\la_{i,j,k}$. Additionally, the power gain $\G_{i,j,k}$ of the link between $x^i$ and $y^j$ is defined as the product of the gains of the antennas associated with the link. Table \ref{tbl:thin} provides the density and power gain for each case. Note that $\sum_{k=1}^4 \la_{i,j,k} = \la$ and when $M_b = M_u = 1$ the links have no gain, i.e., $\G_{i,j,k}=1 ~\forall~ i,j,k$.

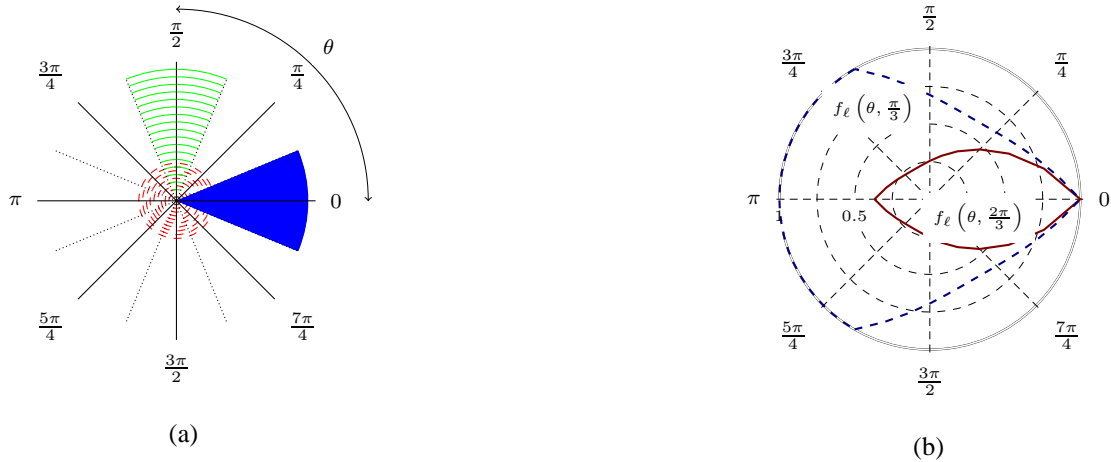
\begin{figure}[t]
  \begin{subfigure}{0.4\linewidth}\centering
    \begin{tikzpicture}[scale=0.5]
    \foreach \x/\text in {0.1,0.2,...,1} \draw [dashed,thin,red] circle(\x);
    \foreach \x/\text in {0.01,0.02,...,3.5} {
      \draw [-,thin,blue] (-22.5:\x) arc(-22.5:22.5:\x);
    }
    \foreach \x/\text in {0.1,0.3,...,3.5} {
      \draw [-,thin,green] (67.5:\x) arc(67.5:112.5:\x);
    }
    
    \foreach \ang/\lab/\dir in {
      0/{0}/right,
      1/{\frac{\pi}{4}}/above right,
      2/{\frac{\pi}{2}}/above,
      3/{\frac{3\pi}{4}}/above left,
      4/{\pi}/left,
      5/{\frac{5\pi}{4}}/below left,
      6/{\frac{3\pi}{2}}/below,
      7/{\frac{7\pi}{4}}/below right} {
      \draw (0,0) -- (\ang*45:3.7); \draw[densely dotted] (0,0) -- (22.5+\ang*45:3.5);
      \node [fill=white] at (\ang*45:3.8) [\dir] {\scriptsize $\lab$};
    }
    
    \draw [<->,ultra thin] (0:5.1) arc(0:90:5.1);
    \node [fill=white] at (45:5.1) [above right] {\scriptsize $\theta$};
    \end{tikzpicture}
    \caption{}\label{fig:ant_angle}
  \end{subfigure}\hfill
  \begin{subfigure}{0.4\linewidth}\centering
    \begin{tikzpicture}[scale=0.5]
    
    \draw[dashed, ultra thin] (0,0) circle (1);
    \draw[dashed, ultra thin] (0,0) circle (2); \node [fill=white] at (-2, 0) [below] {\tiny $0.5$};
    \draw[dashed, ultra thin] (0,0) circle (3);
    \draw[style=double, ultra thin] (0,0) circle (4);
    \node[fill=white] at (-4, 0) [below] {\tiny $1$};
    
    \foreach \ang/\lab/\dir in {
      0/0/right,
      1/{\frac{\pi}{4}}/{above right},
      2/{\frac{\pi}{2}}/above,
      3/{\frac{3\pi}{4}}/{above left},
      4/{\pi}/left,
      5/{\frac{5\pi}{4}}/{below left},
      7/{\frac{7\pi}{4}}/{below right},
      6/{\frac{3\pi}{2}}/below} {
      \draw[densely dashed, ultra thin] (0,0) -- (\ang * 180 / 4:4.2);
      \node [fill=white] at (\ang * 180 / 4:4.2) [\dir] {\scriptsize $\lab$};
    }
    
    \draw[red!50!black, thick] plot [domain=-180:180]
    (xy polar cs:angle=\x, radius={4*exp(cos(2*180/3)-cos(abs(\x)-2*180/3))});
    \draw[blue!50!black, dashed, thick] plot [domain=-180+180/3:180-180/3]
    (xy polar cs:angle=\x, radius={4*exp(cos(180/3)-cos(abs(\x)-180/3))});
    \draw[blue!50!black, dashed, thick] plot [domain=180-180/3:180+180/3]
    (xy polar cs:angle=\x, radius=4);
    \node [fill=white] at (-1.5,2.4) {\tiny $f_\ell\left(\theta, \frac{\pi}{3}\right)$};
    \node [fill=white] at (1.3,-0.5) {\tiny $f_\ell\left(\theta, \frac{2\pi}{3}\right)$};
    \end{tikzpicture}
    \caption{}\label{fig:pass_cancel}
  \end{subfigure}\vspace{-3mm}
  \caption{(a) Angle $\theta$ between antennas for $M_b = 8$. Dots correspond to the boundaries of each sector. The shaded area, solid lines and dashed lines depict the main transmission lobe, main reception lobe and side lobes respectively. (b) LI passive suppression efficiency with respect to the angle $\theta$ where the solid line depicts $\theta_{\rm max} = \frac{2\pi}{3}$ and the dashed line $\theta_{\rm max} = \frac{\pi}{3}$. The value of $\supfn$ corresponds to the fraction of the LI which cannot be suppressed.}\vspace{-8mm}
\end{figure}

Regarding the LI, we consider a model that captures the effects of both active cancellation and passive suppression. Emphasis is given to the passive suppression which is critical in mitigating the LI to the noise floor \cite{AS3}. We assume that FD-capable users and BSs employ imperfect active cancellation mechanisms. As such, we consider the residual LI channel coefficient to follow a complex Gaussian distribution with zero mean and variance $\sigma_\ell^2$ since each implementation of the cancellation mechanism can be characterized by a specific residual power \cite{TR3}, \cite{HAS}. We assume that the users employ the same imperfect active LI cancellation mechanisms and so the LI channel coefficients at the users have the same variance $\sigma_\ell^2$. Likewise, we assume that the BSs employ the same, but different to the users, imperfect active cancellation mechanisms. Furthermore, the BSs in the three-node topology are assumed to have the ability to passively suppress the LI with antenna separation techniques \cite{EE1, EE2}. This is possible since transmission and reception may be operated at different directions, separating the antennas by a certain angle (see Fig. \ref{fig:sce2}). Observe that passive suppression in the two-node architecture is not possible as both antennas always point to the same direction (see Fig. \ref{fig:sce1}). We model the effect of the passive suppression in the following way. Let $\theta \in [-\pi, \pi)$ be the angle between the two antennas (Fig. \ref{fig:ant_angle}). Let $\supfn$ denote the fraction of the LI that cannot be passively suppressed at an angle $\theta$, e.g., $\supfn = 1$ means zero passive suppression, and we assume it is given by,
\begin{align}\label{eq:ang_fun}
  \supfn = \min\left\{1,\exp\left(\cos\left(\theta_{\rm max}\right)-\cos\left(|\theta|-\theta_{\rm max}\right)\right)\right\},
\end{align}
where $\theta \in [-\pi, \pi), \theta \equiv 0 \Mod{\frac{2\pi}{M_b}}$ and $\theta_{\rm max} \in \left(0,\pi\right]$ is the angle where the maximum suppression is achieved; Appendix \ref{pass_suppress_desc} provides a detailed description of the above function. Note that the level of achievable passive LI suppression, and consequently the value of $\theta_{\rm max}$, depends on various factors such as the efficiency of antenna directionality and the environment (i.e. reflective or non-reflective) \cite{EE2}, \cite{AS3}; we assume that $\theta_{\rm max}$ increases with the antenna efficiency. Fig. \ref{fig:pass_cancel} depicts the level of passive suppression with respect to the angle $\theta$ for $\theta_{\rm max} = \frac{2\pi}{3}$ \cite{EE1} and $\theta_{\rm max} = \frac{\pi}{3}$.

\section{Performance Analysis}\label{sec:analysis}
In this section, we derive analytically the outage probability of a cellular network at both the downlink and uplink for both architectures outlined above. For the sake of fairness, user association and high-layer signaling is not taken into account and the performance of both architectures is evaluated at the physical layer in terms of the outage probability. The outage probability describes the probability that the instantaneous achievable rate of the channel is less than a fixed target rate $R$, i.e. $\PP[\log_2(1+\sinr) < R]$. Without loss of generality and following Slivnyak's Theorem \cite{HAEN1}, we execute the analysis for a typical node located at the origin but the results hold for all nodes of the network. Note that, throughout the rest of the paper, we will use the term ``receiver'' to refer to the downlink user. We denote by $u_o$ the typical receiver and by $b_o$ the typical BS and assume $b_o$ is the nearest BS to $u_o$ at a random distance $r$. Similar notation will be used for the typical nodes in the analysis of both downlink and uplink with the node of interest in each case being located at the origin.

The typical receiver $u_o$ experiences co-channel interference from the uplink users and the BSs in the network. Let $I_u$ and $I_b$ be the aggregate interference received at $u_o$ from the uplink users and the BSs (apart from $b_o$) respectively. Then $I_u$ and $I_b$ can be expressed as,
\begin{align}
I_b & = P_b \sum_{i \in \{1,2,3,4\}} \G_{u,b,i} ~ \sum_{j\in\Phi_i \setminus b_o} |g_j|^2 d_j^{-\alpha_1},\label{eq:intb}\\
I_u & = P_u \sum_{i \in \{1,2,3,4\}} \G_{u,u,i} ~~ \sum_{j\in\Psi_i} ~ |k_j|^2 D_j^{-\alpha_2},\label{eq:intu}
\end{align}
where $|g_i|^2$ and $|k_j|^2$ are the channel gains between $u_o$ and the $i^{\rm th}$ BS and $u_o$ and the $j^{\rm th}$ uplink user respectively; similarly, $d_i$ and $D_j$ are the distances between $u_o$ and the $i^{\rm th}$ BS and $u_o$ and the $j^{\rm th}$ uplink user respectively. Then, the SINR at the typical receiver $u_o$ can be written as,
\begin{align}\label{eq:sinr}
\mathsf{SINR} = \frac{P_b \G_{u,b,1} |h|^2 r^{-\alpha_1}}{\sigma_n^2 + \mathds{1}_{\rm FD}I_\ell + I_b + I_u},
\end{align}
where $|h|^2$ is the channel gain between $u_o$ and $b_o$; $\mathds{1}_{\rm FD}$ is the indicator function for the event ``$u_o$ is FD-capable"; $I_\ell$ is the residual interference at the typical node after LI cancellation and is defined as $I_\ell = P_u \G_{u,u,1} |h_\ell|^2$, where $|h_\ell|^2 \sim \exp\left(1/\sigma^2_\ell\right)$ is the residual LI channel gain at $u_0$.

The co-channel interference experienced at the typical BS $b_o$ as well as the SINR at $b_o$ can be derived in an analogous manner to above and therefore we omit their inclusion. Throughout the paper, we will use $\TWN, \TWD$ and $\TWU$ as subscripts or superscripts accordingly to refer to the two-node architecture, the two-node downlink case and the two-node uplink case respectively. Similarly, we will use $\THN, \THD$ and $\THU$ for the three-node scenario.

\subsection{Outage probability at the downlink}\label{subsec:out_down}
In what follows, we present the theorems that characterize the outage probability of an FD cellular network in the case where the two-node architecture is employed (Theorem \ref{thm:out_prob_2d}) and also in the case where the three-node architecture is employed (Theorem \ref{thm:out_prob_3d}).
\begin{theorem}\label{thm:out_prob_2d}
The outage probability of a typical receiver in the two-node architecture is
\begin{align}
P_{\TWD} = 1 - 2 \pi \la \int_0^\infty r \exp(-\la \pi r^2-s\sigma_n^2)~
\La^{\TWD}_{I_\ell}\left(s\right)\La^{\TWD}_{I_b}\left(s\right)\La^{\TWD}_{I_u}\left(s\right) \dd r,\label{eq:out_prob_2d}
\end{align}
where,
\begin{align}
\La^{\TWD}_{I_\ell}\left(s\right) &= \frac{1}{1+ s P_u G_u^2 \sigma_\ell^2},\label{eq:lap_li_2d}\\
\La^{\TWD}_{I_b}\left(s\right) &= \prod_{i\in\{1,2,3,4\}} \exp\left\{-\frac{2\pi\la_{u,b,i}}{\alpha_1-2} \frac{\G_{u,b,i}}{\G_{u,b,1}} F\left(\alpha_1, \frac{\G_{u,b,i}}{\G_{u,b,1}} \tau\right)r^2 \tau \right\},\label{eq:lap_int_b_2d}\\
\La^{\TWD}_{I_u}\left(s\right) &= 2\pi \la\int_0^\infty \!\!\rho \exp\left\{\!-\pi \rho^2\left(\la +\!\! \sum_{i\in\{1,2,3,4\}} \frac{2\la_{u,u,i}}{\alpha_2-2} \G_{u,u,i} F\left(\alpha_2, \frac{s \G_{u,u,i} P_u}{\rho^{\alpha_2}}\right)\frac{sP_u}{\rho^{\alpha_2}}\right)\!\right\}\dd \rho,\label{eq:lap_int_u_2d}
\end{align}
with $s = \frac{\tau r^{\alpha_1}}{P_b G_b G_u}$ and $\tau=2^R-1$.
\end{theorem}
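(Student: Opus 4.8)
\textbf{Proof proposal for Theorem~\ref{thm:out_prob_2d}.}
The plan is to start from the definition of the outage probability, $P_{\TWD} = \PP[\log_2(1+\sinr) < R] = \PP[\sinr < \tau]$ with $\tau = 2^R-1$, and substitute the SINR expression \eqref{eq:sinr}. Conditioning on the distance $r$ to the serving BS $b_o$, the event $\{\sinr < \tau\}$ is equivalent to $\{|h|^2 < s(\sigma_n^2 + I_\ell + I_b + I_u)\}$ where $s = \tau r^{\alpha_1}/(P_b G_b G_u)$, using $\G_{u,b,1} = G_b G_u$ from Table~\ref{tbl:thin}. Since $|h|^2 \sim \exp(1)$ is independent of everything else, $\PP[|h|^2 > x] = e^{-x}$, so $\PP[\sinr \ge \tau \mid r] = \E[\exp(-s(\sigma_n^2 + I_\ell + I_b + I_u))]$. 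Because the residual LI, the BS interference $I_b$ (from $\Phi$), and the uplink-user interference $I_u$ (from $\Psi$) are taken to be mutually independent, this expectation factorizes into $e^{-s\sigma_n^2}\La^{\TWD}_{I_\ell}(s)\La^{\TWD}_{I_b}(s)\La^{\TWD}_{I_u}(s)$, i.e.\ a product of Laplace transforms. Deconditioning over $r$ with the pdf $f_r(r) = 2\pi\la r e^{-\la\pi r^2}$ and writing $P_{\TWD} = 1 - \PP[\sinr \ge \tau]$ yields \eqref{eq:out_prob_2d}; it then remains to evaluate each of the three Laplace transforms.

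For $\La^{\TWD}_{I_\ell}(s) = \E[e^{-sI_\ell}]$ with $I_\ell = P_u\G_{u,u,1}|h_\ell|^2 = P_u G_u^2 |h_\ell|^2$ (again from Table~\ref{tbl:thin}, since $i=j=u$ and $k=1$), I use that $|h_\ell|^2 \sim \exp(1/\sigma_\ell^2)$, whose MGF gives $\E[e^{-t|h_\ell|^2}] = 1/(1+t\sigma_\ell^2)$; substituting $t = sP_uG_u^2$ gives \eqref{eq:lap_li_2d} directly. For $\La^{\TWD}_{I_b}(s)$, I split $\Phi$ into the four independent thinnings $\Phi_i$ with densities $\la_{u,b,i}$, so the Laplace transform factorizes as a product over $i\in\{1,2,3,4\}$. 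Each factor is the Laplace transform of a shot-noise field over a PPP of density $\la_{u,b,i}$ with marks $\G_{u,b,i}P_b|g_j|^2$ and path loss $d_j^{-\alpha_1}$, but crucially the serving BS $b_o$ is at distance $r$ and contributes the signal, so the interfering BSs in $\Phi_i$ (even for $i=1$) must lie outside distance... actually the exclusion is only the single point $b_o$, which has measure zero, so the integration domain for each thinning is effectively all of $\R^2$ except that for the nearest-BS conditioning one should integrate $d_j$ from an appropriate lower limit; here, since the thinnings other than the one containing $b_o$ have no exclusion and the one containing $b_o$ excludes only a null set, one integrates over $\R^2$. Applying the probability generating functional (PGFL) of the PPP and averaging the exponential channel gain $|g_j|^2$, the standard computation $\int_0^\infty (1 - \frac{1}{1+ c\, v^{-\alpha_1}})\, v\, \dd v$ evaluates via a substitution to a Gauss hypergeometric function; recognizing it as $\frac{1}{\alpha_1-2}F(\alpha_1,\cdot)\,\cdot$ with the definition $F(x,y) = {}_2F_1(1,1-2/x;2-2/x;-y)$ from the Notation paragraph, and noting $s\G_{u,b,i}P_b/r^{\alpha_1} \cdot (\text{stuff}) = (\G_{u,b,i}/\G_{u,b,1})\tau/r^{\dots}$ after inserting $s = \tau r^{\alpha_1}/(P_bG_bG_u)$, gives \eqref{eq:lap_int_b_2d} with the $r^2\tau$ scaling.

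The computation of $\La^{\TWD}_{I_u}(s)$ is the step I expect to be the main obstacle, because the uplink interferers in $\Psi$ are \emph{not} simply a PPP over all of $\R^2$ as seen from $u_o$: each interfering uplink user transmits to \emph{its own} nearest BS, but there is no exclusion region around $u_o$ for these users. However, the subtlety the paper handles is that the nearest interfering uplink user's distance is coupled to a BS at distance $\rho$; more precisely, one conditions on an auxiliary distance $\rho$ (the distance from $u_o$ to the nearest BS relevant to the uplink users' geometry, or equivalently uses the approximation that $\Psi$ is an independent PPP but with the protective near-field structure captured by integrating a nearest-neighbour distance $\rho$ with its own Rayleigh-like pdf $2\pi\la\rho e^{-\pi\la\rho^2}$). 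Conditioned on $\rho$, the interfering users form a PPP whose closest point is at distance $\rho$, and applying the PGFL over $\{v \ge \rho\}$ for each of the four thinnings $\Psi_i$ with densities $\la_{u,u,i}$ and gains $\G_{u,u,i}$, the inner integral $\int_\rho^\infty(1 - \frac{1}{1+sP_u\G_{u,u,i}v^{-\alpha_2}})v\,\dd v$ again produces a Gauss hypergeometric term of the form $\frac{\la_{u,u,i}}{\alpha_2-2}\G_{u,u,i}F(\alpha_2, sP_u\G_{u,u,i}\rho^{-\alpha_2})sP_u\rho^{-\alpha_2}$; collecting the $\la$ from the $\rho$-pdf and the sum over $i$ inside the exponent, then integrating over $\rho$, gives \eqref{eq:lap_int_u_2d}. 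The delicate points to get right are the lower limit $\rho$ on the uplink-interference integral (justifying why $u_o$'s nearest interferer behaves as if at distance $\rho$ governed by the $\la$-intensity nearest-neighbour law) and the bookkeeping of which power gains and path-loss exponents attach to which thinning; once those are pinned down, each of the three Laplace transforms is a routine PGFL-plus-hypergeometric calculation, and assembling them into \eqref{eq:out_prob_2d} completes the proof.
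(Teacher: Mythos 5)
Your overall skeleton (condition on $r$, use the exponential tail of $|h|^2$ to turn the coverage probability into a product of Laplace transforms, evaluate $\La^{\TWD}_{I_\ell}$ by the exponential MGF, and evaluate $\La^{\TWD}_{I_u}$ by conditioning on a nearest-neighbour distance $\rho$ with pdf $2\pi\la\rho e^{-\pi\la\rho^2}$ and applying the PGFL on $\{v\ge\rho\}$) is exactly the paper's argument, and those parts are fine.

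There is, however, a genuine error in your treatment of $\La^{\TWD}_{I_b}(s)$. You talk yourself out of the correct integration domain: you write that ``the exclusion is only the single point $b_o$, which has measure zero,'' and conclude that one integrates the PGFL over all of $\R^2$. That is wrong. The user associates with the \emph{nearest} BS, so conditioning on the serving BS being at distance $r$ implies that \emph{every} other BS --- in all four thinnings $\Phi_i$, since the thinning is by independent antenna orientation and not by distance --- lies outside the ball of radius $r$. The PGFL integral must therefore run from $r$ to $\infty$, exactly as the paper states (``the limits of the integral are from $r$ to $\infty$ since the nearest interfering BS is further from the associated BS''). This is not a technicality: with lower limit $r$, Gradshteyn--Ryzhik (3.194.2) produces the factor $\frac{1}{\alpha_1-2}\frac{\G_{u,b,i}}{\G_{u,b,1}}F\bigl(\alpha_1,\frac{\G_{u,b,i}}{\G_{u,b,1}}\tau\bigr)r^2\tau$ of \eqref{eq:lap_int_b_2d}, whereas integrating over all of $\R^2$ yields the $\csc(2\pi/\alpha_1)\,(\cdot)^{2/\alpha_1}$ form (as in \eqref{eq:lap_int_u_3d}), which is a different and incorrect answer here. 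So your derivation as written is internally inconsistent with the formula you claim it produces; the fix is simply to keep the exclusion ball $B(o,r)$ that you initially started to invoke and then abandoned.
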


\begin{proof}
See Appendix \ref{prf:out_prob_2d}.
\end{proof}

The main difference between the two architectures is that in the three-node case, the receiver is not subject to any LI due to the HD mode operation. Despite that, the receiver is subject to intra-cell interference. Therefore, the SINR of $u_0$ in the three-node architecture is the same as \eqref{eq:sinr} with $\mathds{1}_{\rm FD} = 0$.

\begin{theorem}\label{thm:out_prob_3d}
The outage probability of a typical receiver in the three-node architecture is
\begin{align}
P_{\THD} =
1 - 2 \pi \la \int_0^\infty r \exp(-\la \pi r^2-s\sigma_n^2)~
\La^{\THD}_{I_b}\left(s\right) \La^{\THD}_{I_u}\left(s\right) \dd r,\label{eq:out_prob_3d}
\end{align}
where,
\begin{align}
\La^{\THD}_{I_b}\left(s\right) &= \prod_{i\in\{1,2,3,4\}}\exp\left\{-\frac{2\pi\la_{u,b,i}}{\alpha_1-2} \frac{\G_{u,b,i}}{\G_{u,b,1}} F\left(\alpha_1, \frac{\G_{u,b,i}}{\G_{u,b,1}} \tau\right)r^2 \tau \right\},\label{eq:lap_int_b_3d}\\
\La^{\THD}_{I_u}\left(s\right) &= \prod_{i\in\{1,2,3,4\}}\exp\left\{-\frac{2\pi^2\la_{u,u,i}}{\alpha_2} \csc\left(\frac{2 \pi}{\alpha_2}\right) \left(s P_u \G_{u,u,i}\right)^\frac{2}{\alpha_2}\right\},\label{eq:lap_int_u_3d}
\end{align}
with $s = \frac{\tau r^{\alpha_1}}{P_b G_b G_u}$ and $\tau=2^R-1$.
\end{theorem}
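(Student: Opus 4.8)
\emph{Sketch of the intended proof.} The plan is to reuse the argument behind Theorem~\ref{thm:out_prob_2d}, exploiting the remark made just above that the SINR of $u_o$ in the three-node architecture is exactly \eqref{eq:sinr} with $\mathds{1}_{\rm FD}=0$; consequently the loopback term $I_\ell$ disappears and \eqref{eq:out_prob_3d} should coincide with \eqref{eq:out_prob_2d} after deleting the factor $\La^{\TWD}_{I_\ell}(s)$. First I would condition on the distance $r$ from $u_o$ to its serving BS $b_o$, which has density $2\pi\la r e^{-\la\pi r^2}$. Using $\G_{u,b,1}=G_uG_b$ and that the desired-link power fading $|h|^2$ is unit-mean exponential,
\begin{align*}
\PP[\sinr\ge\tau\mid r]=\PP\!\left[|h|^2\ge s(\sigma_n^2+I_b+I_u)\mid r\right]=e^{-s\sigma_n^2}\,\E\!\left[e^{-sI_b}\right]\,\E\!\left[e^{-sI_u}\right],
\end{align*}
with $s=\tau r^{\alpha_1}/(P_bG_bG_u)$, where the product form uses that $\Phi$, $\Psi$ and all fading variables are mutually independent, hence so are $I_b$ and $I_u$. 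Averaging over $r$ yields the success probability $\PP[\sinr\ge\tau]$ as the integral in \eqref{eq:out_prob_3d}, so that $P_{\THD}$ is its complement; it then remains to evaluate $\La^{\THD}_{I_b}(s)=\E[e^{-sI_b}]$ and $\La^{\THD}_{I_u}(s)=\E[e^{-sI_u}]$.

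For the BS term I would use that, conditioned on $b_o$ being the nearest BS at distance $r$, the remaining BSs form a homogeneous PPP on $\R^2\setminus B(o,r)$; assigning each an independent uniform orientation splits it into the four independent thinnings $\Phi_i\setminus b_o$ of intensities $\la_{u,b,i}$ with per-link gains $\G_{u,b,i}$ from Table~\ref{tbl:thin} ($b_o$ itself carries the useful signal and sits in the main-lobe class $\Phi_1$). Applying the probability generating functional (PGFL) of each thinned PPP and averaging the exponential interfering fading gives a product of exponentials of the integrals $\int_r^\infty \frac{sP_b\G_{u,b,i}v^{-\alpha_1}}{1+sP_b\G_{u,b,i}v^{-\alpha_1}}\,v\,\dd v$; since $sP_b\G_{u,b,i}=\frac{\G_{u,b,i}}{\G_{u,b,1}}\tau r^{\alpha_1}$, the substitution $v=rw$ together with the identity $\int_1^\infty \frac{w\,\dd w}{w^{\alpha_1}+c}=\frac{1}{\alpha_1-2}F(\alpha_1,c)$ collapses each integral to $\frac{r^2}{\alpha_1-2}\frac{\G_{u,b,i}}{\G_{u,b,1}}\tau\,F\!\left(\alpha_1,\frac{\G_{u,b,i}}{\G_{u,b,1}}\tau\right)$, giving \eqref{eq:lap_int_b_3d}. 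I expect this to come out verbatim equal to \eqref{eq:lap_int_b_2d}, since the aggregate BS interference does not see the duplex mode of the receiver.

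For the uplink-user term the one structural change is that, under the independent-PPP approximation for $\Psi$, the interfering uplink users — including the intra-cell one — form a homogeneous PPP of intensity $\la$ on the \emph{whole} plane with no exclusion ball, again split by the four beam-alignment cases into thinnings of intensities $\la_{u,u,i}$ and gains $\G_{u,u,i}$. The same PGFL-plus-fading step yields a product of exponentials of $\int_0^\infty \frac{sP_u\G_{u,u,i}v^{-\alpha_2}}{1+sP_u\G_{u,u,i}v^{-\alpha_2}}\,v\,\dd v$, and now the lower limit $0$ lets the substitution $v^{\alpha_2}=sP_u\G_{u,u,i}\,t$ reduce each integral to $\frac{(sP_u\G_{u,u,i})^{2/\alpha_2}}{\alpha_2}\int_0^\infty \frac{t^{2/\alpha_2-1}}{1+t}\,\dd t=\frac{\pi}{\alpha_2}\csc\!\left(\frac{2\pi}{\alpha_2}\right)(sP_u\G_{u,u,i})^{2/\alpha_2}$, giving \eqref{eq:lap_int_u_3d}. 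This closed form is exactly why the three-node uplink Laplace transform is simpler than its two-node counterpart \eqref{eq:lap_int_u_2d}, where the typical node is itself uplink-active and an extra nearest-BS distance has to be carried through the computation.

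The step I expect to be the real obstacle — the rest being a routine PGFL computation plus two tabulated integrals — is the Palm-type bookkeeping for the BS interference: making precise that conditioning on ``$b_o$ nearest at distance $r$'' renders $\Phi\setminus\{b_o\}$ a homogeneous PPP on $\R^2\setminus B(o,r)$, that the independent-orientation thinning preserves this, and that deleting $b_o$ from the main-lobe class leaves its intensity unchanged; along the way one must also record the mild conditions $\alpha_1>2$ and $\alpha_2>2$ that make the two integrals converge.
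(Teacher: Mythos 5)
Your proposal is correct and matches the paper's own argument: the paper likewise reduces Theorem~\ref{thm:out_prob_3d} to the proof of Theorem~\ref{thm:out_prob_2d} by setting $\mathds{1}_{\rm FD}=0$, keeping $\La^{\THD}_{I_b}=\La^{\TWD}_{I_b}$, and changing only the uplink-user term, where the presence of intra-cell interference removes the exclusion region so the PGFL integral runs from $0$ to $\infty$ and evaluates in closed form (the paper cites \cite[Eq.\ (3.194.4)]{GRA}, equivalent to your Beta-integral identity $\int_0^\infty t^{2/\alpha_2-1}(1+t)^{-1}\,\dd t=\pi\csc(2\pi/\alpha_2)$). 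Your computations of both Laplace transforms check out against \eqref{eq:lap_int_b_3d} and \eqref{eq:lap_int_u_3d}.
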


\begin{proof}
The proof follows similar steps as the proof of Theorem \ref{thm:out_prob_2d} with the main difference being in the evaluation of $\La^{\THD}_{I_u}(s)$. The receiver in the three-node architecture experiences  intra-cell interference from the uplink user in the same cell. Therefore, the limits of the integral in $\La^{\THD}_{I_u}(s)$ are, in this case, from zero to $\infty$ and \eqref{eq:lap_int_u_3d} is derived with the help of \cite[Eq. (3.194.4)]{GRA}. Finally, since $\La^{\THD}_{I_b}\left(s\right) = \La^{\TWD}_{I_b}\left(s\right)$ and $\mathds{1}_{\rm FD} = 0$ the result follows.
\end{proof}

\subsection{Outage probability at the uplink}\label{subsec:out_up}

The analysis for the outage probability at the uplink follows the same steps to above. We turn our attention to the three-node architecture which is of particular interest. We assume that each BS in the three-node architecture employs antenna separation techniques to passively suppress the LI. The level of achievable passive LI suppression is given by \eqref{eq:ang_fun} in Section \ref{subsec:inter}. In this case, the total channel gain $I_\ell$ from the LI at $b_o$ after active cancellation and passive suppression is given by,
\begin{align}\label{eq:li_3u}
I_\ell = P_u G_b^2 |h_\ell|^2 \supfn(B_0 + \g_b (1-B_0)),
\end{align}
where $|h_\ell|^2 \sim \exp\left(1/\sigma^2_\ell\right)$ and $B_0 \sim {\rm Bernoulli}\left(\frac{1}{M_b}\right)$ is a binary random variable with
\begin{align}\label{eq:bern}
B_0 =
\begin{cases}
1 & {\rm with ~prob.~}~ \frac{1}{M_b} \,~~~~(\theta = 0),\\
0 & {\rm with ~prob.~} \frac{M_b-1}{M_b} ~~~(\theta \neq 0),
\end{cases}
\end{align}
since the power gain of the LI signal is $G_b^2$ for $\theta = 0$ and $G_b H_b$ otherwise. Note that in \eqref{eq:li_3u} we consider the active cancellation and passive suppression of the LI separately. However, in reality, the active cancellation mechanism attempts to mitigate the passively suppressed LI and therefore a more ``realistic" model would be to express the variance $\sigma^2_\ell$ as a function of $\supfn$. For the sake of simplicity, we assume that $\supfn$ is a normalization factor of $|h_\ell|^2$ which makes no difference in the final results. We can now state the following theorem.

\begin{theorem}\label{thm:out_prob_3u}
The outage probability at the typical BS in the three-node architecture is,
\begin{align}
P_{\THU} = 1 - 2 \pi \lambda \int_0^\infty r \exp(-\la \pi r^2-s\sigma_n^2)~
\La^{\THU}_{I_\ell}(s) \La^{\THU}_{I_b}(s) \La^{\THU}_{I_u}(s) \dd r,\label{eq:out_prob_3u}
\end{align}
where,
\begin{align}
\La^{\THU}_{I_\ell}(s) &= \frac{1}{M_b}\left[\frac{1}{1+ s P_b G_b^2 \sigma_\ell^2} +\hspace{-5mm}
\mathlarger{\sum}\limits_{\substack{\theta \in [-\pi, \pi) \setminus \{0\}\\\theta \equiv 0 \Mod{\frac{2\pi}{M_b}}}}\!\frac{1}{1 + s P_b G_b H_b \sigma_\ell^2 \exp\left(\cos\left(\theta_{\rm max}\right)-\cos\left(|\theta|-\theta_{\rm max}\right)\right)}\right],\label{eq:lap_li_3u}\\[-3mm]
\La^{\THU}_{I_b}\left(s\right) &= 2\pi \la\int_0^\infty \!\!\rho \exp\left\{\!-\pi \rho^2\left(\la +\! \sum_{i\in\{1,2,3,4\}} \frac{2\la_{b,b,i}}{\alpha_2-2} \G_{b,b,i} F\left(\alpha_2, \frac{s \G_{b,b,i} P_b}{\rho^{\alpha_2}}\right)\frac{sP_b}{\rho^{\alpha_2}}\right)\!\right\}\dd \rho,\label{eq:lap_int_b_3u}\\
\La^{\THU}_{I_u}\left(s\right) &=\!\!\prod_{i\in\{1,2,3,4\}}\!\!\!\exp\left\{\!-\pi\la_{b,u,i}\left(\!\frac{2\pi}{\alpha_1} \csc\left(\frac{2\pi}{\alpha_1}\right) \left(s P_u\G_{b,u,i}\right)^\frac{2}{\alpha_1}\! - \int_0^\infty \!\!\frac{s P_u\G_{b,u,i}e^{-\pi\la z}}{s P_u \G_{b,u,i}+z^\frac{\alpha_1}{2}}\dd z\!\right)\!\right\},\label{eq:lap_int_u_3u}
\end{align}
with $s = \frac{\tau r^{\alpha_1}}{P_u G_b G_u}$ and $\tau = 2^R - 1$.
\end{theorem}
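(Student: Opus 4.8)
The plan is to follow the template of the proof of Theorem~\ref{thm:out_prob_2d}, interchanging the roles of base stations and uplink users, and to treat the loopback term by conditioning on the discrete transmit--receive angle $\theta$. First I would write $P_{\THU}=\PP[\sinr_{b_o}<\tau]$ with $\tau=2^R-1$ and $\sinr_{b_o}=P_u G_b G_u|h|^2 r^{-\alpha_1}/(\sigma_n^2+I_\ell+I_b+I_u)$, where $\G_{b,u,1}=G_bG_u$ is the power gain of the desired link and $|h|^2\sim\exp(1)$. Conditioning on the serving distance $r$ and using the exponential law of $|h|^2$ gives
\begin{align*}
\PP[\sinr_{b_o}<\tau\mid r]=1-\E\!\left[e^{-s(\sigma_n^2+I_\ell+I_b+I_u)}\mid r\right],\qquad s=\frac{\tau r^{\alpha_1}}{P_uG_bG_u}.
\end{align*}
Given $r$ the three interference contributions are independent ($I_\ell$ depends on the residual LI channel and on $\theta$, $I_b$ on $\Phi$ and its fadings, $I_u$ on the interfering-user process and its fadings), so the expectation factorizes as $e^{-s\sigma_n^2}\La^{\THU}_{I_\ell}(s)\La^{\THU}_{I_b}(s)\La^{\THU}_{I_u}(s)$; de-conditioning over $r$ with density $f_r(r)=2\pi\la r e^{-\la\pi r^2}$ produces \eqref{eq:out_prob_3u}, leaving the three Laplace transforms to be computed.

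For the loopback term, the residual self-interference at $b_o$ is $I_\ell=P_bG_b^2|h_\ell|^2\,\supfn\,(B_0+\g_b(1-B_0))$ as in \eqref{eq:li_3u} (the interference arising from the BS's own downlink transmission at power $P_b$), so the power gain is $G_b^2$ when $\theta=0$ and $G_bH_b=\g_bG_b^2$ otherwise, with $|h_\ell|^2\sim\exp(1/\sigma_\ell^2)$. I would condition on the admissible angles $\theta\equiv0\Mod{\frac{2\pi}{M_b}}$: with probability $1/M_b$ one has $\theta=0$, $B_0=1$, $\supfn=1$, hence $I_\ell=P_bG_b^2|h_\ell|^2$; for each of the remaining $M_b-1$ angles, with probability $1/M_b$, $I_\ell=P_bG_bH_b|h_\ell|^2\exp(\cos\theta_{\rm max}-\cos(|\theta|-\theta_{\rm max}))$, the $\min$ with $1$ being inactive at these angles. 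Since $\E[e^{-a|h_\ell|^2}]=(1+a\sigma_\ell^2)^{-1}$, averaging the conditional transforms with weights $1/M_b$ yields \eqref{eq:lap_li_3u}.

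For the co-channel terms, write $I_b=P_b\sum_i\G_{b,b,i}\sum_{x\in\Phi_i}|g_x|^2\|x\|^{-\alpha_2}$ and $I_u=P_u\sum_i\G_{b,u,i}\sum_{y\in\Psi_i}|k_y|^2\|y\|^{-\alpha_1}$ over the four thinned PPPs of Table~\ref{tbl:thin}; by the PGFL of a PPP together with $\exp(1)$ fading, each transform is a product over $i$ of factors $\exp\{-\mu_i\int(1-(1+sc_i\|x\|^{-\alpha})^{-1})\dd x\}$, where $\mu_i$ and $c_i$ are the intensity and coefficient of the $i$-th interfering class and the integral ranges over the region accessible to it. For $\La^{\THU}_{I_b}$, following the uplink modelling of \cite{AL,JA} the interfering BSs are placed outside a disk of radius $\rho$ about $b_o$ with $\rho$ distributed as $2\pi\la\rho e^{-\la\pi\rho^2}$; for fixed $\rho$ the radial integral $\int_\rho^\infty sc_i v^{1-\alpha_2}(1+sc_iv^{-\alpha_2})^{-1}\dd v$ reduces, after $v=\rho t$, to $\frac{\rho^2}{\alpha_2-2}\frac{sc_i}{\rho^{\alpha_2}}F(\alpha_2,\frac{sc_i}{\rho^{\alpha_2}})$ with $c_i=\G_{b,b,i}P_b$; adding the void exponent $-\la\pi\rho^2$, summing over $i$, and averaging over $\rho$ gives \eqref{eq:lap_int_b_3u}. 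For $\La^{\THU}_{I_u}$, an interfering uplink user at $y$ contributes only if its serving BS is nearer than $b_o$, an event of probability $1-e^{-\la\pi\|y\|^2}$, which thins the intensity at $y$ to $\la_{b,u,i}(1-e^{-\la\pi\|y\|^2})$; the PGFL then splits into the full-plane part $\frac{2\pi^2\la_{b,u,i}}{\alpha_1}\csc(\frac{2\pi}{\alpha_1})(sP_u\G_{b,u,i})^{2/\alpha_1}$ (from $\int_0^\infty(1+t^{\alpha_1/2})^{-1}\dd t=\frac{2\pi}{\alpha_1}\csc(\frac{2\pi}{\alpha_1})$ after a change of variable) minus the correction $\pi\la_{b,u,i}\int_0^\infty sP_u\G_{b,u,i}e^{-\la\pi z}(sP_u\G_{b,u,i}+z^{\alpha_1/2})^{-1}\dd z$ (substituting $z=\|y\|^2$), which is \eqref{eq:lap_int_u_3u}. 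Inserting the three transforms into \eqref{eq:out_prob_3u} closes the argument.

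The hard part is the uplink coupling between interferers and their serving base stations: the interfering-user process is genuinely correlated with $\Phi$ through the nearest-BS association rule, so both the exclusion disk used for $\La^{\THU}_{I_b}$ and the location-dependent thinning $1-e^{-\la\pi\|y\|^2}$ used for $\La^{\THU}_{I_u}$ are justified only under the independence approximations adopted from \cite{AL,JA}; pinning down these factors and their parameters is where the modelling effort concentrates. By contrast, the hypergeometric radial integral, the cosecant evaluation, and the discrete conditioning for the loopback term are routine.
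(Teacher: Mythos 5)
Your proposal is correct and mirrors the paper's own proof of this theorem: the same conditional factorization over $r$, the same nearest-interferer exclusion-disk treatment of $\La^{\THU}_{I_b}$ (averaging over $\rho$ with the nearest-neighbour pdf), the same location-dependent thinning $\la_{b,u,i}\left(1-e^{-\pi\la\|y\|^2}\right)$ for $\La^{\THU}_{I_u}$, and the same discrete conditioning on $B_0$ and the admissible angles $\theta$ for the loopback term. Your use of $P_b$ in the loopback power (rather than the $P_u$ written in \eqref{eq:li_3u}) is the reading consistent with the stated \eqref{eq:lap_li_3u}, and the paper's appendix makes the same silent substitution.
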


\begin{proof}
See Appendix \ref{prf:out_prob_3u}.
\end{proof}

The outage probability $P_{\TWU}$ of the typical BS in the two-node architecture can be easily derived from Theorem \ref{thm:out_prob_3u} and thus we exclude its representation for brevity. In this case, $\theta = 0$ and $B_0 = 1$, so \eqref{eq:li_3u} gives $I_\ell = P_u G_b^2 |h_\ell|^2$. Hence, $\La^{\TWU}_{I_b}\left(s\right) = \La^{\THU}_{I_b}\left(s\right)$, $\La^{\TWU}_{I_u}\left(s\right) = \La^{\THU}_{I_u}\left(s\right)$ and $\La^{\TWU}_{I_\ell}(s)$ is derived similarly to \eqref{eq:lap_li_2d}.

\subsection{Special cases}\label{subsec:special}
The derived expressions in Theorems \ref{thm:out_prob_2d}, \ref{thm:out_prob_3d} and \ref{thm:out_prob_3u} provide a general result for the outage probability of each scenario. However, due to the complexity of these expressions, it is difficult to gain insight on the behaviour of each scenario. Therefore, in this section, for the sake of reducing notational overhead and deriving closed-form expressions, further assumptions are considered which simplify the model. Specifically, assume that the users and BSs employ the same number of sectorized antennas $M = M_b = M_u$ and let $\g = \g_b = \g_u$. Furthermore, assume that the BSs and the users transmit with the same power, i.e. $P_b = P_u$, and consider high power transmissions which result in an interference-limited network, that is $\sigma_n^2 = 0$. Closed-form expressions for \eqref{eq:out_prob_2d} and \eqref{eq:out_prob_3u} are difficult to derive due to the extra integral in expressions \eqref{eq:lap_int_u_2d} and \eqref{eq:lap_int_b_3u} respectively and therefore we will consider an approximation to facilitate our investigations but also to help us gain insight into the network's behaviour. To approximate $\La^{\TWD}_{I_u}\left(s\right)$ $\left(\La^{\TWU}_{I_b}\left(s\right), \La^{\THU}_{I_b}\left(s\right)\right)$, we will assume that the closest interfering user (BS) is located at a distance at least $r$, i.e., the distance to the user's (BS's) associated BS (user)\footnote{An appropriate scheduling mechanism ensures this distance in order to protect the system from strong co-channel interference.} \cite{LEE}. Also, we will assume that the interference fields $\Psi_i$ in the uplink model are homogeneous with density $\la_{b,u,i}$, i.e. we ignore the integral term in \eqref{eq:lap_int_u_3u}; this provides an upper bound. By letting $\Lambda_i \triangleq \left\{\frac{1}{M^2}, \frac{(M-1)}{M^2}, \frac{(M-1)}{M^2}, \frac{(M-1)^2}{M^2}\right\}$ and $\G_i \triangleq \left\{1, \g, \g, \g^2\right\}$, $i \in \{1, 2, 3, 4\}$, we state the following.

\begin{proposition}\label{prop:out_fd_approx}
The outage probability of a typical FD-mode node is given by,
\begin{align}
P_x = 1 - 2 \pi \la \int_0^\infty r \exp\left(-\mathcal{G}_x \la \pi r^2\right) \La^x_{I_\ell}\left(s\right) \dd r, x \in \{\TWD, \TWU, \THU\},\label{eq:out_fd_approx}
\end{align}
where,
\begin{align}\label{eq:G_approx1}
\mathcal{G}_\TWD = 1 + 2\tau \sum_{i\in\{1,2,3,4\}} \Lambda_i \G_i \left(\frac{F\left(\alpha_1, \G_i \tau \right)}{\alpha_1-2} +  \frac{r^{\alpha_1-\alpha_2}F\left(\alpha_2, r^{\alpha_1-\alpha_2} \G_i \tau\right)}{\alpha_2-2}\right),
\end{align}
and
\begin{align}\label{eq:G_approx2}
\mathcal{G}_x = 1 + 2\sum_{i\in\{1,2,3,4\}} \Lambda_i\left( \frac{\pi\tau^\frac{2}{\alpha_1}}{\alpha_1}\csc\left(\frac{2\pi}{\alpha_1}\right)  \G_i^\frac{2}{\alpha_1} + \frac{\tau r^{\alpha_1-\alpha_2}}{\alpha_2-2} \G_i F\left(\alpha_2, r^{\alpha_1-\alpha_2} \G_i \tau\right)\right),
\end{align}
for $x \in \{\TWU, \THU\}$.
\end{proposition}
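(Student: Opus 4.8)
The plan is to specialize Theorem~\ref{thm:out_prob_2d} and Theorem~\ref{thm:out_prob_3u} (the latter also yielding the two-node uplink, as noted after its proof) to the symmetric regime $M_b=M_u=M$, $\g_b=\g_u=\g$, $P_b=P_u$, $\sigma_n^2=0$, and then to discharge the two stated approximations so that every nested integral either collapses or reduces to one already evaluated in those theorems. First I would record what the symmetry assumptions do to Table~\ref{tbl:thin}: one gets $G_b=G_u=:G$, $H_b=H_u=\g G$, $\la_{i,j,k}=\la\Lambda_k$, and for every pair $i,j\in\{b,u\}$ the product $\G_{i,j,k}=G^2\G_k$, hence $\G_{i,j,k}/\G_{i,j,1}=\G_k$. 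Using $P_b=P_u$ together with $s=\tau r^{\alpha_1}/(P_bG^2)$ in the downlink and $s=\tau r^{\alpha_1}/(P_uG^2)$ in the uplink, one checks $sP_u\G_{u,u,k}/r^{\alpha_2}=sP_b\G_{b,b,k}/r^{\alpha_2}=\G_k\tau r^{\alpha_1-\alpha_2}$ and $sP_u\G_{b,u,k}=\G_k\tau r^{\alpha_1}$; these identities are exactly what make the gain prefactors cancel and the hypergeometric and cosecant arguments take the forms in \eqref{eq:G_approx1}--\eqref{eq:G_approx2}.

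Next I would apply the two approximations. For $\La^{\TWD}_{I_u}$ in \eqref{eq:lap_int_u_2d} and $\La^{\THU}_{I_b}$ in \eqref{eq:lap_int_b_3u}, assuming the nearest interfering user (BS) lies at distance at least $r$ amounts to fixing the auxiliary distance $\rho$ at the deterministic value $r$, which removes the outer integral (and with it the factor $2\pi\la\rho\,e^{-\pi\la\rho^2}$), leaving $\exp\{-\pi r^2\sum_k\tfrac{2\la_{u,u,k}}{\alpha_2-2}\G_{u,u,k}F(\alpha_2,\tfrac{sP_u\G_{u,u,k}}{r^{\alpha_2}})\tfrac{sP_u}{r^{\alpha_2}}\}$ (and the analogous uplink expression); substituting the identities above turns this into $\exp\{-\la\pi r^2\sum_k\tfrac{2\Lambda_k\G_k\tau}{\alpha_2-2}r^{\alpha_1-\alpha_2}F(\alpha_2,r^{\alpha_1-\alpha_2}\G_k\tau)\}$. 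For $\La^{\THU}_{I_u}$ in \eqref{eq:lap_int_u_3u} I would drop the subtracted integral; since its integrand is non-negative, removing it makes each exponent more negative, decreases $\La^{\THU}_{I_u}$, and hence yields an \emph{upper bound} on $P_{\THU}$, leaving $\prod_k\exp\{-\pi\la_{b,u,k}\tfrac{2\pi}{\alpha_1}\csc(\tfrac{2\pi}{\alpha_1})(sP_u\G_{b,u,k})^{2/\alpha_1}\}=\exp\{-\la\pi r^2\sum_k\Lambda_k\tfrac{2\pi\tau^{2/\alpha_1}}{\alpha_1}\csc(\tfrac{2\pi}{\alpha_1})\G_k^{2/\alpha_1}\}$.

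Finally I would assemble the pieces. For the two-node downlink, $\La^{\TWD}_{I_b}$ in \eqref{eq:lap_int_b_2d} needs no approximation and, after the symmetry substitutions, equals $\exp\{-\la\pi r^2\sum_k\tfrac{2\Lambda_k\G_k\tau}{\alpha_1-2}F(\alpha_1,\G_k\tau)\}$; multiplying it by the approximated $\La^{\TWD}_{I_u}$ and by the void probability $\exp(-\la\pi r^2)$ already present in \eqref{eq:out_prob_2d} gives $\exp(-\mathcal{G}_\TWD\la\pi r^2)$ with $\mathcal{G}_\TWD$ as in \eqref{eq:G_approx1}, the ``$1$'' being the void-probability term. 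The uplink cases are identical in structure: the cosecant term of $\La^{\THU}_{I_u}$ and the approximated $\La^{\THU}_{I_b}$ (which equals the downlink $\La^{\TWD}_{I_u}$ form with $b$ in place of $u$), together with $\exp(-\la\pi r^2)$, give the two summands of $\mathcal{G}_x$ in \eqref{eq:G_approx2} for $x\in\{\THU,\TWU\}$. In all three cases $\La^x_{I_\ell}(s)$ is carried through unchanged, being already closed-form in \eqref{eq:lap_li_2d}, \eqref{eq:lap_li_3u} and the two-node-uplink analogue; the two-node uplink itself follows from the three-node-uplink computation by putting $\theta=0$, $B_0=1$ as noted after Theorem~\ref{thm:out_prob_3u}.

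I expect the main obstacle to be bookkeeping rather than anything deep: keeping the two definitions of $s$ straight between downlink and uplink, verifying that all antenna-gain products and thinning densities really do collapse to $G^2\G_k$ and $\la\Lambda_k$, and confirming that each surviving radial integral is literally one already evaluated in the theorems so that no new special function appears. I would also flag which step is which: only the $\La^{\THU}_{I_u}$ step is a genuine one-sided bound, while the minimum-distance step is a modeling assumption justified by the scheduling footnote, so the final expression should be stated as an (upper-bounding) approximation for $x\in\{\THU,\TWU\}$ and as an approximation for $\TWD$.
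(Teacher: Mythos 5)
Your proposal is correct and follows essentially the same route as the paper's own (much terser) proof: specialize the theorems under the symmetric assumptions so that the gains and densities collapse to $G^2\G_i$ and $\la\Lambda_i$, replace the lower limit $\rho$ by $r$ in the nested interference integral so the outer averaging over $\rho$ disappears, drop the inhomogeneity correction in \eqref{eq:lap_int_u_3u}, and collect the exponents into $\mathcal{G}_x$. Your explicit verification of the identities $sP_u\G_{u,u,k}/r^{\alpha_2}=\G_k\tau r^{\alpha_1-\alpha_2}$ and $sP_u\G_{b,u,k}=\G_k\tau r^{\alpha_1}$ supplies exactly the ``simple algebraic manipulations'' the paper leaves implicit.
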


\begin{proof}
We first deal with the downlink case. By setting the limits of the integral in \eqref{eq:int_approx} from $r$ to $\infty$, $\La^{\TWD}_{I_u}\left(s\right)$ changes to
\[\La^{\TWD}_{I_u}\left(s\right) = \exp\left\{-\frac{2\tau \la \pi r^{2+\alpha_1-\alpha_2}}{\alpha_2-2} \sum_{i\in\{1,2,3,4\}} \Lambda_i \G_i F\left(\alpha_2, r^{\alpha_1-\alpha_2} \G_i \tau \right)\right\},\]
again based on \cite[Eq. (3.194.2)]{GRA}. Then, the result follows from simple algebraic manipulations. The expressions for the uplink case can be derived in a similar way.
\end{proof}

We now consider the asymptotic case where the number of employed antennas tends to infinity. In this case, the sectorized antennas generate very tight beams and so co-channel interference occurs only from the side-lobes (Case 4 in Section \ref{subsec:inter}). In other words, $\la_{x,y,i} = \la$ for $i = 4$ and $\la_{x,y,i} = 0$ otherwise, $x, y \in \{b, u\}$.

\begin{proposition}\label{prop:out_fd_asym}
The outage probability of a typical FD-mode node in the asymptotic case $M \rightarrow \infty$ is given by,
\begin{align}
P_x^\infty = 1 - 2 \pi \la \int_0^\infty r \exp\left(-\mathcal{G}_x \la \pi r^2\right) \La^x_{I_\ell}\left(s\right) \dd r, x \in \{\TWD, \TWU, \THU\},\label{eq:out_fd_asym}
\end{align}
with,
\begin{align}\label{eq:G_asym1}
\mathcal{G}_\TWD = 1 + 2\tau\g^2 \left(\frac{F\left(\alpha_1, \g^2 \tau \right)}{\alpha_1-2} + \frac{r^{\alpha_1-\alpha_2}F\left(\alpha_2, r^{\alpha_1-\alpha_2} \g^2 \tau\right)}{\alpha_2-2}\right),
\end{align}
\begin{align}\label{eq:G_asym2}
\mathcal{G}_\TWU = \mathcal{G}_\THU = 1 + \frac{2\pi\tau^\frac{2}{\alpha_1}}{\alpha_1}\csc\left(\frac{2\pi}{\alpha_1}\right) \g^\frac{4}{\alpha_1} + \frac{2 r^{\alpha_1-\alpha_2} \tau}{\alpha_2-2} \g^2 F\left(\alpha_2, r^{\alpha_1-\alpha_2} \g^2 \tau\right),
\end{align}
\begin{align}\label{eq:lim_lap_3u}
\lim_{M\rightarrow \infty} \La^{\THU}_{I_\ell}\left(s\right) = \frac{1}{2\pi}\int_{-\pi}^{\pi}\frac{1}{1+\g \sigma_\ell^2 \tau r^{\alpha_1} \supfn}~\dd \theta,
\end{align}
and
\begin{align}\label{eq:lim_lap_2d}
\lim_{M\rightarrow \infty} \La^{\TWD}_{I_\ell}\left(s\right) = \lim_{M\rightarrow \infty} \La^{\TWU}_{I_\ell}\left(s\right) = \frac{1}{1 + \sigma_\ell^2 \tau r^{\alpha_1}}.
\end{align}
\end{proposition}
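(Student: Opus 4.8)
The plan is to pass to the limit $M\to\infty$ directly inside the finite-$M$ formula \eqref{eq:out_fd_approx} of Proposition \ref{prop:out_fd_approx}, since \eqref{eq:out_fd_asym} has exactly the same shape as \eqref{eq:out_fd_approx}. The only $M$-dependent ingredients of \eqref{eq:out_fd_approx}--\eqref{eq:G_approx2} are the weights $\Lambda_i$ and the antenna gains $G_b=G_u=\frac{M}{1+\g(M-1)}$; the quantities $\G_i$ are already written purely in terms of $\g$. As $M\to\infty$, $\Lambda_1,\Lambda_2,\Lambda_3\to0$ and $\Lambda_4\to1$ (recall $\sum_i\Lambda_i=1$), so only the side-lobe--side-lobe term $i=4$, with $\G_4=\g^2$, survives; moreover $G_b=G_u\to\tfrac1\g$, hence $G_bG_u\to\tfrac1{\g^2}$, $G_b^2\to\tfrac1{\g^2}$, $H_b=\g G_b\to1$ and $G_bH_b\to\tfrac1\g$. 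I would first record these elementary limits.

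Substituting $\Lambda_i\to\delta_{i,4}$ into the finite sums \eqref{eq:G_approx1} and \eqref{eq:G_approx2} (whose summands are otherwise $M$-independent) gives the limits termwise by keeping only $i=4$: \eqref{eq:G_approx1} collapses to \eqref{eq:G_asym1}, and \eqref{eq:G_approx2} collapses to \eqref{eq:G_asym2} after writing $(\g^2)^{2/\alpha_1}=\g^{4/\alpha_1}$; the identity $\mathcal{G}_\TWU=\mathcal{G}_\THU$ is inherited from Proposition \ref{prop:out_fd_approx}. For the residual-LI factors I would treat the two architectures separately. In the two-node case, with $s=\frac{\tau r^{\alpha_1}}{P_bG_bG_u}$, $P_b=P_u$ and $G_b=G_u$, the gain products cancel \emph{exactly}, so $sP_uG_u^2\sigma_\ell^2=\sigma_\ell^2\tau r^{\alpha_1}$ for every $M$; hence $\La^{\TWD}_{I_\ell}$ and $\La^{\TWU}_{I_\ell}$ already equal the right-hand side of \eqref{eq:lim_lap_2d} and no limit is needed. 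In the three-node case, the first summand of \eqref{eq:lap_li_3u} is a fixed bounded quantity multiplied by $\tfrac1{M_b}$ and therefore vanishes; likewise $sP_bG_bH_b\sigma_\ell^2=\g\sigma_\ell^2\tau r^{\alpha_1}$ exactly, so the remaining term is $\tfrac1{M_b}$ times a sum of the fixed, continuous function $\theta\mapsto\bigl(1+\g\sigma_\ell^2\tau r^{\alpha_1}\supfn\bigr)^{-1}$ over the $M_b-1$ equispaced points $\theta\equiv0\Mod{\frac{2\pi}{M_b}}$, $\theta\neq0$. Writing $\tfrac1{M_b}=\tfrac1{2\pi}\cdot\tfrac{2\pi}{M_b}$ exhibits this as a Riemann sum of mesh $\tfrac{2\pi}{M_b}$ over $[-\pi,\pi)$, which converges to $\tfrac1{2\pi}\int_{-\pi}^{\pi}\bigl(1+\g\sigma_\ell^2\tau r^{\alpha_1}\supfn\bigr)^{-1}\dd\theta$, that is, \eqref{eq:lim_lap_3u}.

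Finally, I would interchange the limit with the outer $r$-integral in \eqref{eq:out_fd_approx}: for every $M$ one has $\mathcal{G}_x\geq1$ and $\La^x_{I_\ell}(s)\leq1$, so the integrand is dominated by $re^{-\la\pi r^2}\in L^1(0,\infty)$, and dominated convergence gives \eqref{eq:out_fd_asym} with the limiting $\mathcal{G}_x$ and $\La^x_{I_\ell}$ obtained above. I expect the only genuinely delicate point to be the Riemann-sum step for $\La^{\THU}_{I_\ell}$ --- in particular verifying that $\theta\mapsto\supfn$ is continuous, including at $\theta=0$ where the two branches of the $\min$ in \eqref{eq:ang_fun} agree, so that evaluation at the sector boundaries is a legitimate Riemann approximation of the integral. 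Everything else is direct substitution of the limits from the first paragraph.
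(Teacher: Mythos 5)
Your proposal is correct and follows essentially the same route as the paper's proof: keep only the $i=4$ side-lobe term as $\Lambda_i\to\delta_{i,4}$, observe that the two-node LI factor is exactly $M$-independent because the gain products cancel, and identify the three-node LI sum as a Riemann sum converging to the integral in \eqref{eq:lim_lap_3u}. You merely make explicit several steps the paper leaves implicit (the limits $G_b\to 1/\g$, $H_b\to 1$, the continuity of $\supfn$ at $\theta=0$, and the dominated-convergence interchange of $\lim_{M\to\infty}$ with the $r$-integral), all of which check out.
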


\begin{proof}
When $M \to \infty$, the interference at a typical node occurs only from the side lobes of the other nodes in the network. In this case, $\Lambda_i = 1$ for $i = 4$ and $\Lambda_i = 0$ otherwise, $1 \leq i \leq 4$, and so \eqref{eq:G_asym1} and \eqref{eq:G_asym2} follow immediately from \eqref{eq:G_approx1} and \eqref{eq:G_approx2}. Moreover, \eqref{eq:lim_lap_3u} follows from the fact that the first term of \eqref{eq:lap_li_3u} converges to zero for  $M \rightarrow \infty$ and the remaining second term is an infinite sum which gives the definite integral. Finally, as $M_b = M_u$, $\La^{\TWD}_{I_\ell}\left(s\right)$ and $\La^{\TWU}_{I_\ell}\left(s\right)$ are independent of $M$ and remain unchanged. 
\end{proof}

The next two propositions deal with the downlink of the three-node architecture. The proofs of these propositions follow similar steps as the ones above and so we omit their inclusion.
\begin{proposition}\label{prop:out_fd_3d}
The outage probability of a typical receiver in the three-node architecture is
\begin{align}
P_\THD = 1 - 2 \pi \la \int_0^\infty r \exp\left(-\mathcal{G}_\THD \la \pi r^2\right) \dd r,
\end{align}
where,
\begin{align}\label{eq:G_3d}
\mathcal{G}_\THD = 1 + 2 \sum_{i\in\{1,2,3,4\}} \Lambda_i \left(\frac{F\left(\alpha_1, \G_i \tau \right)}{\alpha_1-2}\G_i\tau + \frac{\pi r^{\frac{2\alpha_1}{\alpha_2}-2}}{\alpha_2} \csc\left(\frac{2\pi}{\alpha_2}\right)\left(\G_i\tau\right)^{\frac{2}{\alpha_2}}\right).
\end{align}
\end{proposition}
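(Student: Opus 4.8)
The plan is to obtain the statement directly from Theorem~\ref{thm:out_prob_3d} by imposing the simplifying assumptions listed at the start of Section~\ref{subsec:special}; unlike the cases treated in Proposition~\ref{prop:out_fd_approx}, no additional approximation is needed here, because $\La^{\THD}_{I_u}(s)$ in \eqref{eq:lap_int_u_3d} is already a closed-form exponential (the intra-cell uplink user of the three-node topology makes the integration range $[0,\infty)$, as observed in the proof of Theorem~\ref{thm:out_prob_3d}). So the whole argument is a parameter substitution followed by an algebraic rearrangement.

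First I would set $\sigma_n^2 = 0$ in \eqref{eq:out_prob_3d}, which drops the factor $e^{-s\sigma_n^2}$. Since $M_b = M_u = M$ and $\g_b = \g_u = \g$, the main-lobe gains coincide, $G_b = G_u = G = \tfrac{M}{1+\g(M-1)}$, hence $H_b = H_u = \g G$. Reading the corresponding entries off Table~\ref{tbl:thin} then gives $\la_{u,b,i} = \la_{u,u,i} = \la\Lambda_i$ and $\G_{u,b,i} = \G_{u,u,i} = G^2\G_i$ for $i \in \{1,2,3,4\}$, so in particular $\G_{u,b,i}/\G_{u,b,1} = \G_i$. Substituting these into \eqref{eq:lap_int_b_3d} collapses $\La^{\THD}_{I_b}(s)$ to $\exp\!\bigl\{-\tfrac{2\pi\la\tau r^2}{\alpha_1-2}\sum_i \Lambda_i\G_i F(\alpha_1,\G_i\tau)\bigr\}$. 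For the uplink term I would additionally use $P_b = P_u$ together with $s = \tfrac{\tau r^{\alpha_1}}{P_bG_bG_u} = \tfrac{\tau r^{\alpha_1}}{P_bG^2}$, so that $s P_u \G_{u,u,i} = \tau r^{\alpha_1}\G_i$ (the factors $G^2$ cancel); then \eqref{eq:lap_int_u_3d} becomes $\La^{\THD}_{I_u}(s) = \exp\!\bigl\{-\tfrac{2\pi^2\la}{\alpha_2}\csc(\tfrac{2\pi}{\alpha_2})\,r^{2\alpha_1/\alpha_2}\sum_i \Lambda_i(\G_i\tau)^{2/\alpha_2}\bigr\}$.

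Finally I would take the product $\La^{\THD}_{I_b}(s)\,\La^{\THD}_{I_u}(s)$ together with the $\la\pi r^2$ already present in the exponent of \eqref{eq:out_prob_3d}: the total exponent equals $-\la\pi r^2$ minus an $r^2$ contribution and an $r^{2\alpha_1/\alpha_2}$ contribution, and pulling out the common factor $\la\pi r^2$ (using $r^{2\alpha_1/\alpha_2}/r^2 = r^{2\alpha_1/\alpha_2-2}$) leaves exactly the bracket $\mathcal{G}_\THD$ of \eqref{eq:G_3d}, which is the claim. The only point requiring care is the bookkeeping of the substitutions from Table~\ref{tbl:thin} and the $G^2$ cancellation in the uplink-interference term; once those are verified the statement drops out by direct computation, so there is no genuine obstacle.
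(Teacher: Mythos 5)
Your proposal is correct and matches the route the paper intends: the authors state that Propositions~\ref{prop:out_fd_3d} and \ref{prop:asym_3d} follow by the same specialization steps as Propositions~\ref{prop:out_fd_approx} and \ref{prop:out_fd_asym}, i.e., substituting $M_b=M_u$, $\g_b=\g_u$, $P_b=P_u$, $\sigma_n^2=0$ into Theorem~\ref{thm:out_prob_3d} and factoring $\la\pi r^2$ out of the exponent. Your observation that no distance-based approximation is needed here (both Laplace transforms in Theorem~\ref{thm:out_prob_3d} are already closed-form exponentials) is also accurate, and your bookkeeping of Table~\ref{tbl:thin} and the $G^2$ cancellation checks out.
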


\begin{proposition}\label{prop:asym_3d}
The outage probability of a typical receiver in the three-node architecture when $M \rightarrow \infty$ is given by,
\begin{align}
P_\THD^\infty = 1 \!-\! 2 \pi \la \!\int_0^\infty\!\!\! r \exp\!\left\{\!-\la \pi r^2 \left(1 + 2 \left(\frac{F\left(\alpha_1, \g^2 \tau \right)}{\alpha_1-2}\g^2\tau \!+\! \frac{\pi r^{\frac{2\alpha_1}{\alpha_2}-2}}{\alpha_2} \csc\left(\frac{2\pi}{\alpha_2}\right)\left(\g^2\tau\right)^{\frac{2}{\alpha_2}}\right)\!\right)\!\right\} \dd r.
\end{align}
\end{proposition}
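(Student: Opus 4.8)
The plan is to derive $P_\THD^\infty$ as the $M\to\infty$ limit of the closed-form expression in Proposition \ref{prop:out_fd_3d}, in exact analogy with the passage from Proposition \ref{prop:out_fd_approx} to Proposition \ref{prop:out_fd_asym}. Since the receiver in the three-node downlink operates in HD mode, there is no loopback term, and the only quantity in Proposition \ref{prop:out_fd_3d} that depends on $M$ is the coefficient $\mathcal{G}_\THD$ of \eqref{eq:G_3d}, which enters only through the weights $\Lambda_i=\left\{\tfrac{1}{M^2},\tfrac{M-1}{M^2},\tfrac{M-1}{M^2},\tfrac{(M-1)^2}{M^2}\right\}$; the gains $\G_i=\{1,\g,\g,\g^2\}$ are fixed.

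First I would invoke the same geometric observation used in the proof of Proposition \ref{prop:out_fd_asym}: as the number of sectors grows, the main lobes shrink to zero width, so a typical receiver is reached only through side-lobe/side-lobe links, i.e. $\Lambda_i\to 0$ for $i\in\{1,2,3\}$ and $\Lambda_4\to 1$. This is immediate from $\tfrac{1}{M^2}\to 0$, $\tfrac{M-1}{M^2}\to 0$, and $\tfrac{(M-1)^2}{M^2}\to 1$. Substituting these limits together with $\G_4=\g^2$ into \eqref{eq:G_3d} collapses the sum to its fourth term and gives
\begin{align*}
\mathcal{G}_\THD \;\longrightarrow\; 1 + 2\left(\frac{F\left(\alpha_1,\g^2\tau\right)}{\alpha_1-2}\g^2\tau + \frac{\pi r^{\frac{2\alpha_1}{\alpha_2}-2}}{\alpha_2}\csc\left(\frac{2\pi}{\alpha_2}\right)\left(\g^2\tau\right)^{\frac{2}{\alpha_2}}\right),
\end{align*}
which is precisely the bracketed coefficient of $-\la\pi r^2$ in the claimed formula.

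Next I would pass the limit through the outer $r$-integral of Proposition \ref{prop:out_fd_3d}. For every $M$ and every $r$ the integrand $r\exp\left(-\mathcal{G}_\THD\la\pi r^2\right)$ is nonnegative and, because $\mathcal{G}_\THD\ge 1$, bounded above by $r\exp\left(-\la\pi r^2\right)$, which is integrable on $[0,\infty)$; dominated convergence then justifies interchanging $\lim_{M\to\infty}$ with $\int_0^\infty\cdot\,\dd r$, yielding exactly the expression in the statement. The only point requiring a brief check — and the one I would flag as the main (though minor) obstacle — is the uniform lower bound $\mathcal{G}_\THD\ge 1$ that supplies the dominating function: this holds because each summand in \eqref{eq:G_3d} is a product of nonnegative factors, using $\Lambda_i\ge 0$, $\G_i>0$, $F\left(\alpha_j,\cdot\right)>0$, and $\csc\left(2\pi/\alpha_2\right)>0$ for $\alpha_2>2$. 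Everything else is the routine substitution already carried out in Proposition \ref{prop:out_fd_asym}, so no further computation is needed.
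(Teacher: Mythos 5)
Your proof is correct and matches the paper's intended argument: the paper omits the proof of this proposition, stating only that it follows the same steps as Propositions \ref{prop:out_fd_approx} and \ref{prop:out_fd_asym}, which is precisely what you do by sending $\Lambda_4 \to 1$ and $\Lambda_1,\Lambda_2,\Lambda_3 \to 0$ in \eqref{eq:G_3d}. Your dominated-convergence justification for interchanging the limit with the $r$-integral is a small addition of rigor beyond what the paper supplies, not a different route.
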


Observe that the passive suppression of the LI in the $\THU$ architecture is improved in the asymptotic case since the probability of no passive suppression becomes zero. On the other hand, the residual LI in the $\TWU$ architecture is not affected which reduces the potential performance gains. Furthermore, as the co-channel interference occurs only from the side-lobes, the performance is highly affected by $\g$. When $\g \rightarrow 0$, then \eqref{eq:G_asym1}, \eqref{eq:G_asym2} and \eqref{eq:lim_lap_3u} tend to one, and so $P_\THU^\infty \to 0$. The same observation can be made for $P_\THD^\infty$. On the other hand, when $\g \rightarrow 0$, $P_x^\infty$ depends entirely on $\La^x_{I_\ell}\left(s\right)$ for $x \in \{\TWD, \TWU\}$. Next, we provide the outage probability for the case $\alpha = \alpha_1 = \alpha_2$ and for specific values of $\alpha$, namely in the region $(2,4]$.

\begin{corollary}\label{cor:alpha4}
The outage probability of a typical FD-mode node when $\alpha = \alpha_1 = \alpha_2 = 4$ is given by,
\begin{align}\label{eq:alpha4}
P_x = 1-\frac{\mathcal{I}_x}{\mathcal{Y}_x\sqrt{\pi}}, x \in \{\TWD, \TWU, \THU\},
\end{align}
for $\sigma_\ell^2 > 0$, and
\begin{align}
P_x = 1-\frac{1}{\mathcal{Y}_x}, x \in \{\TWD, \TWU, \THU\},
\end{align}
for $\sigma_\ell^2 = 0$, where
\begin{align}
\mathcal{Y}_\TWD = 1 + 2\tau \sum_{i\in\{1,2,3,4\}} \Lambda_i \G_i F\left(4, \G_i \tau \right),
\end{align}
\begin{align}
\mathcal{Y}_\TWU = \mathcal{Y}_\THU = 1 + \sum_{i\in\{1,2,3,4\}} \Lambda_i\left( \frac{\pi}{2} \sqrt{\tau\G_i} + \tau \G_i F\left(4, \G_i \tau\right)\right),
\end{align}
\begin{align}
\mathcal{I}_x = G^{1, 3}_{3, 1} \left(\frac{4\sigma_\ell^2\tau}{\left(\pi\la \mathcal{Y}_x\right)^2} ~ \Bigg | ~ \mfrac{0,\frac{1}{2},0}{0} \right), x \in \{\TWD, \TWU\},
\end{align}
and
\begin{align}
\!\mathcal{I}_\THU = \frac{1}{M}\!\left[G^{1, 3}_{3, 1} \left(\frac{4\sigma_\ell^2\tau}{\left(\pi\la \mathcal{Y}_\THU\right)^2} ~ \Bigg | ~ \mfrac{0,\frac{1}{2},0}{0} \right) +\!\!\!
\mathlarger{\sum}\limits_{\substack{\theta \in [-\pi, \pi) \setminus \{0\}\\\theta \equiv 0 \Mod{\frac{2\pi}{M}}}}G^{1, 3}_{3, 1} \left(\frac{4 \sigma_\ell^2 \tau \g \supfn}{\left(\pi\la \mathcal{Y}_\THU\right)^2} ~ \Bigg | ~ \mfrac{0,\frac{1}{2},0}{0} \right)\!\right].
\end{align}
\end{corollary}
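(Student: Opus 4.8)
The plan is to specialize Proposition \ref{prop:out_fd_approx} to the case $\alpha_1 = \alpha_2 = 4$ and then carry out the outer integral over $r$ explicitly. First, observe that when $\alpha = 4$ the two summands inside $\mathcal{G}_\TWD$ in \eqref{eq:G_approx1} both lose their $r$-dependence: the factor $r^{\alpha_1 - \alpha_2}$ becomes $1$, and $\frac{1}{\alpha_1 - 2} = \frac{1}{\alpha_2 - 2} = \frac{1}{2}$, so the two $F$-terms collapse into a single contribution, giving $\mathcal{G}_\TWD = \mathcal{Y}_\TWD$ as defined in the statement. The same simplification applied to \eqref{eq:G_approx2} collapses the path-loss exponent in the $\csc$ term to $4$ and kills the $r^{\alpha_1 - \alpha_2}$ factor, yielding $\mathcal{G}_x = \mathcal{Y}_x$ for $x \in \{\TWU, \THU\}$. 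Thus in all three cases the exponent $-\mathcal{G}_x \la \pi r^2$ is simply $-\mathcal{Y}_x \la \pi r^2$ with $\mathcal{Y}_x$ a constant independent of $r$.

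Next I would handle the two regimes for $\sigma_\ell^2$. When $\sigma_\ell^2 = 0$ the Laplace transform $\La^x_{I_\ell}(s) = 1$ (from \eqref{eq:lap_li_2d}, \eqref{eq:lap_li_3u} with the variance set to zero), so \eqref{eq:out_fd_approx} reduces to $P_x = 1 - 2\pi\la \int_0^\infty r e^{-\mathcal{Y}_x \la \pi r^2}\dd r = 1 - 1/\mathcal{Y}_x$, using the elementary Gaussian integral $\int_0^\infty r e^{-c r^2}\dd r = 1/(2c)$. For $\sigma_\ell^2 > 0$, I substitute the explicit form of $\La^x_{I_\ell}(s)$ with $s = \tau r^{\alpha_1}/(P_b G_b G_u) = \tau r^4/(P_b G_b G_u)$ (and the analogous expression with $P_u$ for the uplink). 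Under the special-case assumptions $M = M_b = M_u$, $\g = \g_b = \g_u$, $P_b = P_u$, one has $G_b = G_u = G$ and $G_u^2 = G^2$, so $s P_u G_u^2 = \tau r^4$ and $\La^\TWD_{I_\ell}(s) = 1/(1 + \sigma_\ell^2 \tau r^4)$; likewise $s P_b G_b^2 = \tau r^4$ for $\TWU$. Then $P_x = 1 - 2\pi\la\int_0^\infty \frac{r\, e^{-\mathcal{Y}_x \la \pi r^2}}{1 + \sigma_\ell^2 \tau r^4}\dd r$. Changing variables $t = r^2$ turns this into $1 - \pi\la\int_0^\infty \frac{e^{-\mathcal{Y}_x \la \pi t}}{1 + \sigma_\ell^2 \tau t^2}\dd t$, which is a Laplace transform of $(1 + \sigma_\ell^2\tau t^2)^{-1}$; this is precisely the integral representation of the Meijer $G$-function $G^{1,3}_{3,1}$ appearing in $\mathcal{I}_x$, evaluated at argument $4\sigma_\ell^2\tau/(\pi\la\mathcal{Y}_x)^2$. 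I would cite the relevant entry in \cite{GRA} (an integral of the form $\int_0^\infty e^{-pt}/(1 + a t^2)\dd t$ expressed via $G$-functions, or equivalently via the cosine and sine integral functions) to obtain the stated closed form, absorbing the $1/\sqrt{\pi}$ and the $\mathcal{Y}_x$ normalization.

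For the three-node uplink case $x = \THU$, the only change is that $\La^\THU_{I_\ell}(s)$ in \eqref{eq:lap_li_3u} is a $\frac{1}{M}$-weighted sum over $\theta \equiv 0 \Mod{\frac{2\pi}{M}}$: one term with coefficient $\sigma_\ell^2\tau r^4$ (the $\theta = 0$ term, since $s P_b G_b^2 = \tau r^4$) and $M - 1$ terms with coefficient $\sigma_\ell^2 \tau r^4 \,\g\, \supfn$ (since $s P_b G_b H_b \exp(\cdots) = \tau r^4 \g f_\ell(\theta,\theta_{\rm max})$, using $H_b = \g G_b$ and the definition \eqref{eq:ang_fun} with the $\min$ achieved on the exponential branch for these angles). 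Because the outer $r$-integral is linear in $\La^\THU_{I_\ell}(s)$, I apply the same $t = r^2$ substitution and the same $G$-function identity term by term, which produces $\mathcal{I}_\THU$ as the $\frac{1}{M}$-weighted sum of Meijer $G$-functions with the corresponding arguments.

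The main obstacle I anticipate is pinning down the exact Meijer $G$-function identity: verifying that $\int_0^\infty e^{-pt}(1 + at^2)^{-1}\dd t = \frac{1}{\sqrt{\pi}\,?}\,G^{1,3}_{3,1}\!\left(\frac{?}{?} \,\big|\, {0,\frac12,0 \atop 0}\right)$ with the precise constants, parameter rows, and argument scaling $4\sigma_\ell^2\tau/(\pi\la\mathcal{Y}_x)^2$ requires careful bookkeeping of the substitution $t = r^2$, the factor $2\pi\la$ versus $\pi\la$, and the normalization of the $G$-function representation of $1/(1+u^2)$ (which is itself a $G^{1,1}_{2,2}$ or expressible through $\cos$ and $\sin$ integrals). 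Everything else — the collapse of $\mathcal{G}_x$ to $\mathcal{Y}_x$, the $\sigma_\ell^2 = 0$ evaluation, and the term-by-term treatment of the $\THU$ sum — is routine algebra once that one integral formula is in hand.
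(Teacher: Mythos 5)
Your proposal is correct and follows essentially the same route as the paper: reduce $\mathcal{G}_x$ to the $r$-independent constant $\mathcal{Y}_x$, substitute $\upsilon = r^2$, and evaluate the resulting Laplace transform of $(1+\sigma_\ell^2\tau\upsilon^2)^{-1}$ via a Meijer $G$-function identity (term by term for $\THU$). The one identity you leave open is exactly the paper's combination of $\frac{1}{1+c x^{k}} = G_{1 1}^{1 1}\left(c x^{k} \,\Big|\, \mfrac{0}{0}\right)$ with \cite[Eq.\ (7.813.2)]{GRA}, which yields the factor $\frac{1}{\sqrt{\pi}\,p}$ and the argument $4\sigma_\ell^2\tau/p^2$ with $p=\pi\la\mathcal{Y}_x$, matching the stated $\mathcal{I}_x$.
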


\begin{proof}
Using the identity $\displaystyle \frac{1}{1 + cx^k} = G_{1 1}^{1 1} \left(c x^{k} ~ \Big | ~ \mfrac{0}{0} \right)$ and applying the change of variable $r^2 = \upsilon$, the results for $\sigma_\ell^2 > 0$ follow from \cite[Eq. (7.813.2)]{GRA}. The results for $\sigma_\ell^2 = 0$ follow by a simple integration of the exponential function.
\end{proof}

\begin{corollary}\label{cor:alpha3}
Let $\alpha = \alpha_1 = \alpha_2 = \frac{m}{n}$ with $\gcd(m,n)=1$  where $\gcd(m,n)$ is the greatest common divisor of integers $m$ and $n$. Then, the outage probability $P_{\TWD}$ for $2 < \alpha < 4$ is given by,
\begin{align}\label{eq:alpha3}
P_{\TWD} = 1 - \frac{1}{q} \left(\frac{2n m^{\frac{1}{2}}}{(2\pi)^{2n + \frac{m-3}{2}}} G^{2n, 2n+m}_{2n+m, 2n} \left(\left(\frac{m}{\pi \la q}\right)^m \left(\sigma^2_\ell\tau\right)^{2n} ~ \Bigg | ~ \mfrac{\Delta(2n,0), \Delta(m,0)}{\Delta(2n,0)} \right)\right),
\end{align}
where $q = 1+\frac{4\tau}{\alpha-2} \sum_{i\in\{1,2,3,4\}} \Lambda_i \G_i F\left(\alpha, \G_i \tau \right)$ and $\Delta(a,b) = \frac{b}{a},\frac{b+1}{a},\cdots,\frac{b+a-1}{a}$.
\end{corollary}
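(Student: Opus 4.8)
The plan is to specialise Proposition~\ref{prop:out_fd_approx} to the single–exponent case, reduce $P_{\TWD}$ to a one–dimensional integral, and then evaluate that integral by a Mellin–Barnes argument combined with the Gauss multiplication formula for the Gamma function. Setting $\alpha_1=\alpha_2=\alpha$ in \eqref{eq:G_approx1}, the two summands inside the bracket collapse because $r^{\alpha_1-\alpha_2}=1$, so $\mathcal{G}_{\TWD}$ stops depending on $r$ and equals the constant $q=1+\tfrac{4\tau}{\alpha-2}\sum_{i}\Lambda_i\G_i F(\alpha,\G_i\tau)$. Under the special–case assumptions $M_b=M_u$, $P_b=P_u$ one has $sP_uG_u^2\sigma_\ell^2=\sigma_\ell^2\tau r^{\alpha}$ in \eqref{eq:lap_li_2d}, hence $\La^{\TWD}_{I_\ell}(s)=1/(1+\sigma_\ell^2\tau r^{\alpha})$, and the substitution $\upsilon=r^2$ turns \eqref{eq:out_fd_approx} into
\begin{align*}
P_{\TWD}=1-\pi\la\int_0^\infty\frac{e^{-q\la\pi\upsilon}}{1+\sigma_\ell^2\tau\,\upsilon^{m/(2n)}}\,\dd\upsilon .
\end{align*}
Everything then comes down to this integral.

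To evaluate it I would insert the contour representation $\frac{1}{1+z}=\frac{1}{2\pi i}\int_L\Gamma(\xi)\Gamma(1-\xi)z^{-\xi}\,\dd\xi$ with $z=\sigma_\ell^2\tau\,\upsilon^{m/(2n)}$ and $L$ a vertical line in the strip $0<\Re\xi<2/\alpha$, interchange the order of integration, and carry out the $\upsilon$–integral using $\int_0^\infty e^{-q\la\pi\upsilon}\upsilon^{-\xi m/(2n)}\,\dd\upsilon=(q\la\pi)^{\xi m/(2n)-1}\Gamma\!\big(1-\tfrac{m}{2n}\xi\big)$. This rewrites $P_{\TWD}$ as $1-\tfrac1q$ times a contour integral of $\Gamma(\xi)\Gamma(1-\xi)\Gamma\!\big(1-\tfrac{m}{2n}\xi\big)$ against a power of $\sigma_\ell^2\tau\,(q\la\pi)^{-m/(2n)}$.

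The integrand is not yet a Meijer $G$–function because the third Gamma factor has slope $-m/(2n)$ rather than $-1$; to repair this I rescale $\xi\mapsto 2n\xi$ and apply the Gauss multiplication formula $\Gamma(kw)=(2\pi)^{(1-k)/2}k^{kw-1/2}\prod_{j=0}^{k-1}\Gamma(w+\tfrac jk)$ three times: to $\Gamma(2n\xi)$ with $k=2n$, to $\Gamma(1-2n\xi)=\Gamma\!\big(2n(\tfrac1{2n}-\xi)\big)$ with $k=2n$, and to $\Gamma(1-m\xi)=\Gamma\!\big(m(\tfrac1m-\xi)\big)$ with $k=m$. The $(2n)^{\pm2n\xi}$ and $(2n)^{\pm1/2}$ factors cancel, and what remains is the constant $(2\pi)^{(3-4n-m)/2}m^{1/2}$, the Jacobian $2n$, a product of $2n$ factors $\Gamma(\xi+\tfrac j{2n})$, $2n$ factors $\Gamma(\tfrac{j+1}{2n}-\xi)$ and $m$ factors $\Gamma(\tfrac{j+1}{m}-\xi)$, all of unit slope, together with $z^{-\xi}$ where $z=\big(\tfrac{m}{\pi\la q}\big)^m(\sigma_\ell^2\tau)^{2n}$.

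Reading off parameters then finishes the proof: the ``$+\xi$'' block $\{j/(2n)\}_{j=0}^{2n-1}=\Delta(2n,0)$ supplies the bottom row with $M=2n$; the two ``$1-\,\cdot\,-\xi$'' blocks reverse to $\Delta(2n,0)$ and $\Delta(m,0)$ and supply the top row with $N=2n+m$; there are no denominator Gammas, so $P=2n+m$ and $Q=2n$, giving $G^{2n,\,2n+m}_{2n+m,\,2n}\!\big(z\,\big|\,\mfrac{\Delta(2n,0),\,\Delta(m,0)}{\Delta(2n,0)}\big)$, and collecting the prefactors $\tfrac1q$, $2n$, $m^{1/2}$ and $(2\pi)^{(3-4n-m)/2}=(2\pi)^{-(2n+(m-3)/2)}$ reproduces \eqref{eq:alpha3} verbatim. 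The main obstacle is bookkeeping: carrying the $(2\pi)^{\,\cdot/2}$, $m^{\,\cdot}$ and $(2n)^{\,\cdot}$ constants through the three applications of the multiplication formula and verifying that the parameter lists emerge as exact $\Delta$–sequences; a secondary point is justifying the interchange of the contour and the $\upsilon$–integral by a standard growth estimate on vertical lines, which is valid on the strip $0<\Re\xi<2/\alpha$ — nonempty precisely because $\alpha>2$, with $\alpha<4$ delimiting the regime complementary to Corollary~\ref{cor:alpha4}.
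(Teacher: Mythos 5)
Your proposal is correct and follows essentially the same route as the paper: the paper reduces \eqref{eq:out_fd_approx} to the same integral $\pi\la\int_0^\infty e^{-q\la\pi\upsilon}\left(1+\sigma_\ell^2\tau\,\upsilon^{m/(2n)}\right)^{-1}\dd\upsilon$, writes $e^{-x}$ and $(1+x)^{-1}$ as Meijer $G$-functions, and invokes the tabulated Mellin-convolution formula \cite[Eq.\ (2.24.2.1)]{PB}, whose proof is precisely your Mellin--Barnes representation followed by the Gauss multiplication formula. Your version simply re-derives that table entry in this special case, and your bookkeeping of the prefactor $2n\,m^{1/2}(2\pi)^{(3-4n-m)/2}$, the argument $\left(\frac{m}{\pi\la q}\right)^m(\sigma_\ell^2\tau)^{2n}$, and the $\Delta$-parameter lists all check out against \eqref{eq:alpha3}.
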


\begin{proof}
By applying the identity $\displaystyle e^x = G_{0 1}^{1 0} \left(-x ~ \Big | ~ \mfrac{-}{0} \right)$ and using \cite[Eq. (2.24.2.1)]{PB}, \eqref{eq:out_fd_approx} can be solved to yield \eqref{eq:alpha3}.
\end{proof}

\begin{corollary}\label{cor:3d}
The outage probability of a typical receiver in the three-node architecture when $\alpha = \alpha_1 = \alpha_2 = 4$ is
\begin{align}
P_\THD = 1 - \frac{1}{1 + \sum_{i\in\{1,2,3,4\}} \Lambda_i \left(F\left(4, \G_i \tau \right) \G_i\tau + \frac{\pi}{2} \sqrt{\G_i\tau}\right)}.\label{eq:spec_3d}
\end{align}
\end{corollary}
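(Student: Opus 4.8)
The plan is to specialize Proposition \ref{prop:out_fd_3d} to the case $\alpha_1 = \alpha_2 = 4$ and then carry out the single remaining integral in closed form. Starting from
\[
P_\THD = 1 - 2\pi\la \int_0^\infty r \exp\left(-\mathcal{G}_\THD \la \pi r^2\right) \dd r
\]
with $\mathcal{G}_\THD$ given by \eqref{eq:G_3d}, the key observation is that when $\alpha_1 = \alpha_2 = 4$ the exponent $\frac{2\alpha_1}{\alpha_2} - 2 = 0$, so the factor $r^{\frac{2\alpha_1}{\alpha_2}-2}$ becomes $r^0 = 1$ and $\mathcal{G}_\THD$ no longer depends on $r$. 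First I would substitute $\alpha_1 = \alpha_2 = 4$ into \eqref{eq:G_3d}, using $\csc(2\pi/4) = \csc(\pi/2) = 1$ and $\alpha_2 = 4$, to obtain
\[
\mathcal{G}_\THD = 1 + 2 \sum_{i\in\{1,2,3,4\}} \Lambda_i \left(\frac{F\left(4, \G_i \tau \right)}{2}\G_i\tau + \frac{\pi}{4}\left(\G_i\tau\right)^{1/2}\right) = 1 + \sum_{i\in\{1,2,3,4\}} \Lambda_i \left(F\left(4, \G_i \tau \right)\G_i\tau + \frac{\pi}{2}\sqrt{\G_i\tau}\right),
\]
which is exactly the denominator appearing in \eqref{eq:spec_3d}.

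Once $\mathcal{G}_\THD$ is a constant, the outage expression reduces to a Gaussian-type integral $2\pi\la \int_0^\infty r \exp(-\mathcal{G}_\THD \la \pi r^2)\dd r$. With the change of variable $\upsilon = r^2$, $\dd \upsilon = 2r\,\dd r$, this becomes $\pi\la \int_0^\infty \exp(-\mathcal{G}_\THD \la \pi \upsilon)\dd \upsilon = \pi\la \cdot \frac{1}{\mathcal{G}_\THD \la \pi} = \frac{1}{\mathcal{G}_\THD}$. Substituting back yields $P_\THD = 1 - 1/\mathcal{G}_\THD$, which is precisely \eqref{eq:spec_3d}. Note that here there is no residual LI term, since in the three-node downlink the receiver operates in HD mode ($\mathds{1}_{\rm FD} = 0$), so no $\La_{I_\ell}$ factor appears and the integral is genuinely elementary — this is why the corollary holds without the restriction $\sigma_\ell^2 = 0$ that was needed in Corollary \ref{cor:alpha4}.

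There is essentially no serious obstacle here; the only thing to be careful about is verifying that the $r$-dependence in $\mathcal{G}_\THD$ really does collapse when $\alpha_1 = \alpha_2 = 4$ (it does, because the intra-cell interference Laplace transform \eqref{eq:lap_int_u_3d} scales as $(s P_u \G_{u,u,i})^{2/\alpha_2}$ with $s \propto r^{\alpha_1}$, giving an $r^{2\alpha_1/\alpha_2}$ factor which, divided by the $r^2$ already pulled out into $\la\pi r^2$, leaves $r^{2\alpha_1/\alpha_2 - 2}$ — equal to $r^0$ exactly when $\alpha_1 = \alpha_2$). The remaining steps — plugging in the cosecant value, simplifying the constants $2 \cdot \tfrac12 = 1$ and $2\cdot\tfrac{\pi}{4} = \tfrac{\pi}{2}$, and evaluating the elementary Gaussian integral — are routine. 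One could alternatively derive this directly as the $\sigma_\ell^2 = 0$, HD-receiver analogue of the computation in Corollary \ref{cor:alpha4}, but specializing Proposition \ref{prop:out_fd_3d} is the most direct route.
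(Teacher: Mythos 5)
Your proposal is correct and follows essentially the same route as the paper: specialize Proposition \ref{prop:out_fd_3d} to $\alpha_1=\alpha_2=4$ (where $r^{\frac{2\alpha_1}{\alpha_2}-2}=1$ and $\csc(\pi/2)=1$ make $\mathcal{G}_\THD$ constant in $r$), then evaluate the elementary integral via the substitution $r^2=\upsilon$ to obtain $P_\THD = 1 - 1/\mathcal{G}_\THD$. Your added remarks on why the $r$-dependence collapses and why no $\La_{I_\ell}$ factor appears are correct and merely make explicit what the paper's one-line proof leaves implicit.
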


\begin{proof}
By applying the change of variable $r^2 = \upsilon$, the resulting integral gives the result.
\end{proof}

It is clear from Corollary \ref{cor:alpha4} that the performance of a typical FD-mode node in the perfect LI cancellation case ($\sigma_\ell^2 = 0$) is independent of the density $\lambda$. This is also true for the typical receiver in the three-node architecture as can be seen from Corollary \ref{cor:3d}. In particular, this independence is always valid when $\alpha_1 = \alpha_2$ and $\sigma_\ell^2 = 0$. In this case, the expressions \eqref{eq:G_approx1}, \eqref{eq:G_approx2} and \eqref{eq:G_3d} become independent of both $r$ and $\lambda$ and the final expression results from a simple integration of the exponential function (see Corollary \ref{cor:alpha4}). On the other hand, when $\sigma_\ell^2 > 0$, $P_x$, $x \in \{\TWD, \TWU, \THU\}$ does depend on the density $\la$ and thus, in this case, the denser the network the better the outage performance is. This is explained by the fact that the receiver will be closer to its associated BS and consequently the received signal will be improved which will reduce the LI effects. In particular, the distance $r$ is inversely proportional to the density, so when $\la$ becomes very large, $r^\alpha$ converges to $0$. It is clear from \eqref{eq:lap_li_2d} and \eqref{eq:lap_li_3u} that for $r^\alpha \to 0$ then $\La^x_{I_\ell}\left(s\right) \to 1$, $x \in \{\TWD, \TWU, \THU\}$. Hence, as $\la$ becomes large, $P_x$, $x \in \{\TWD, \TWU, \THU\}$ converges to the performance of perfect LI cancellation.

\section{Composite Architecture Network}\label{sec:het}
In this section, we consider the case where the two architectures are employed in the same tier. In other words, we assume that a typical cell employs the two-node architecture with probability $p^{\TWN}$ and three-node architecture with probability $p^{\THN} = 1-p^{\TWN}$. An example of our system model would be the coexistence between FD empowered machine-to-machine (M2M) type users with HD-mode conventional users of cellular/small cell networks forming a heterogeneous network (HetNet) environment such as in 5G \cite{LEE}, \cite{KOUNT}. By the thinning theorem \cite{HAEN1}, the PPP $\Psi$ is split into two smaller independent PPPs which we denote by $\Psi^{\TWN}$ and $\Psi^{\THN}$ with densities $\la^{\TWN} = p^{\TWN}\la$ and $\la^{\THN} = p^{\THN}\la$, respectively. The same applies for the thinning of the PPP $\Phi$. In order to model possible traffic asymmetry between uplink and downlink directions, we assume that the FD-mode users operate $p_u\%$ of the time in bidirectional FD-mode and $(1-p_u)\%$ of the time in downlink HD-mode. Therefore, the FD-mode users transmitting in the uplink at each time slot form an independent PPP with density $p_up^{\TWN}\la$.

In the next subsections, we will study the performance at both the downlink and uplink of this type of composite network when $M \rightarrow \infty$ together with the assumptions given in Section \ref{subsec:special}. Furthermore, we will evaluate the optimal value of $p^{\TWN}$ for the success probability of the uplink and downlink but also for the network throughput with respect to $p_u$ and $\sigma_\ell^2$. Recall that the assumption involving the distance to the closest interfering terminal leads to an upper bound for the success probability when $M > 1$.

\subsection{Performance at the downlink}
A typical receiver in the composite architecture scenario will experience the same aggregate interference from the BSs, regardless of whether the user operates in FD or HD mode, since the BSs of both architectures interfere with the user in a similar way. Hence, the Laplace transforms $\La^{\TWD}_{I_b}(s)$ and $\La^{\THD}_{I_b}(s)$ for the BS-interference in a composite architecture downlink scenario are still given by \eqref{eq:lap_int_b_2d} and \eqref{eq:lap_int_b_3d} respectively.

The Laplace transform for the interference experienced at a typical FD-mode user from the other users is given by,
\begin{align}
	\La^{\TWD}_{I_u}\left(s\right) = \exp\left\{-\frac{2\pi\la\left(p_u p^{\TWN} + p^{\THN}\right)\g^2}{\alpha-2} F\left(\alpha, \g^2\tau\right)r^2 \tau\right\}\label{eq:lap_int_u_2d_multi}
\end{align}
and at a typical HD-mode user,
\begin{align}
	\La^{\THD}_{I_u}\left(s\right) = \exp\left\{-\frac{2\pi^2\la\left(p_u p^{\TWN} + p^{\THN}\right)}{\alpha} \csc\left(\frac{2 \pi}{\alpha}\right)r^2 (\tau\g^2)^{2/\alpha}\right\}\label{eq:lap_int_u_3d_multi}
\end{align}

In what follows, we provide the outage probability of an FD and an HD-mode user in a composite architecture scenario. We state the results without proof as they are extensions of Propositions \ref{prop:out_fd_asym} and \ref{prop:asym_3d}.

\begin{figure}[t]
  \begin{subfigure}{0.5\linewidth}
    \centering
    \includegraphics[width=0.8\linewidth]{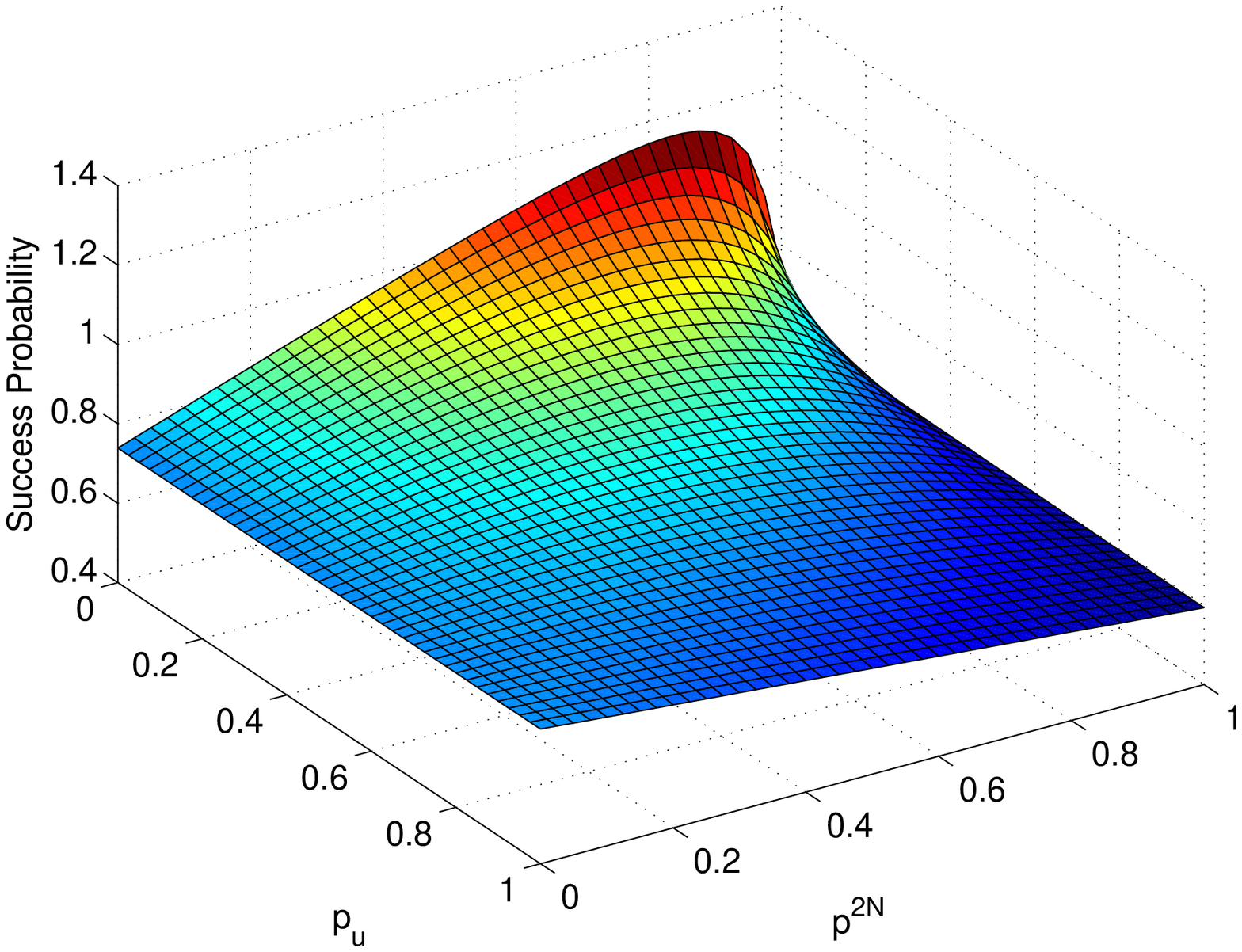}
    \caption{$\sigma_\ell^2 = -30$ dB.}
    \label{fig:success dl LLI}
  \end{subfigure}
  \begin{subfigure}{0.5\linewidth}
    \centering
    \includegraphics[width=0.8\linewidth]{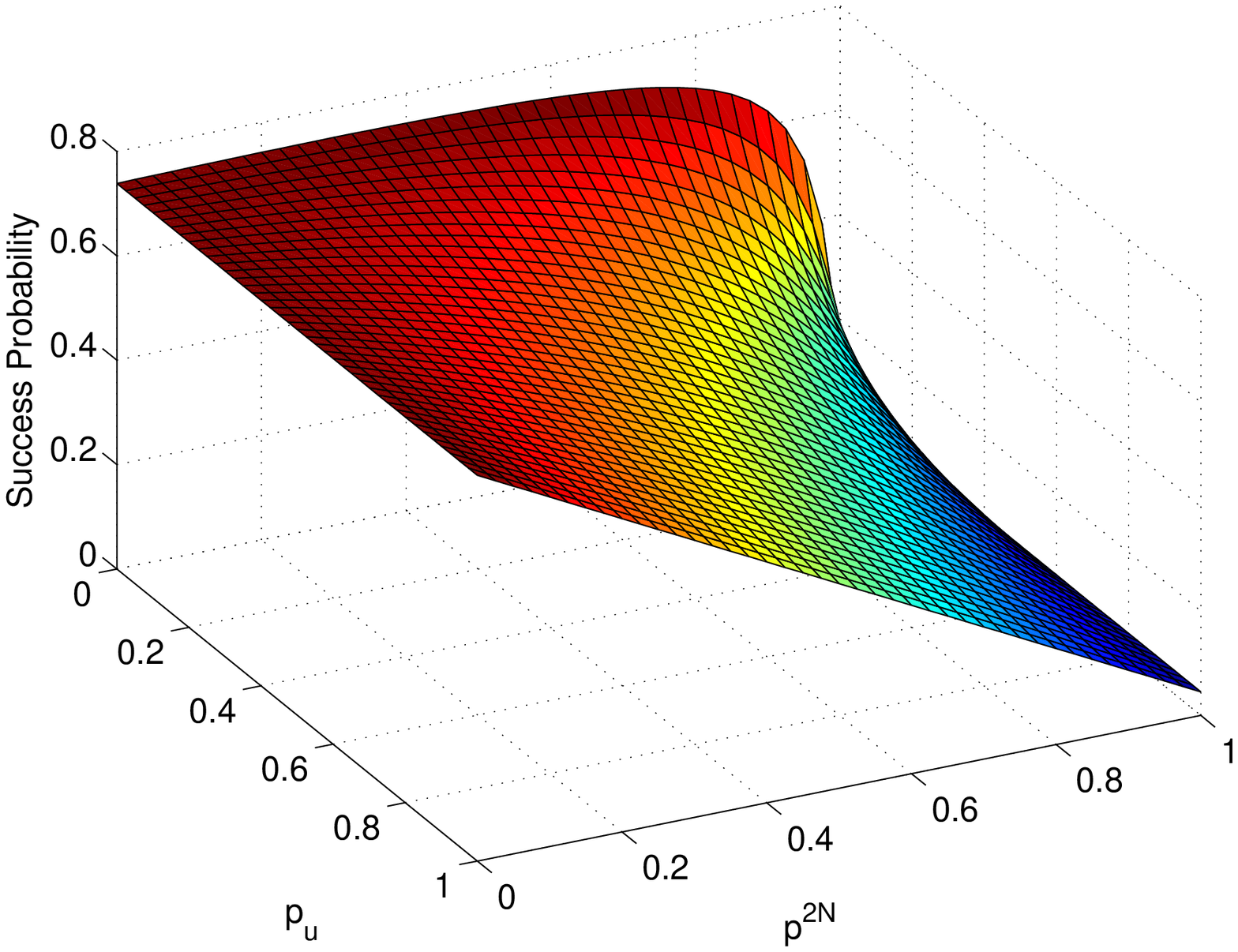}
    \caption{$\sigma_\ell^2 = 0$ dB.}
    \label{fig:success dl HLI}
  \end{subfigure}
  \caption{Success probability at downlink vs $p_u$ and $p^{\TWN}$; $\alpha=4$, $R = 1$ bpcu, $\g = 0.2$, $\la = 10^{-2}$.}\vspace{-5mm}
\end{figure}

\begin{proposition}\label{prop:multi_out_2d}
The outage probability of a typical FD-mode receiver in an FD composite architecture scenario is given by Proposition \ref{prop:out_fd_asym} with
\begin{align}\label{G_comp_down}
	\mathcal{G}_\TWD = 1 + \frac{2 \tau \g^2}{\alpha-2} F\left(\alpha, \g^2 \tau \right)\left(p_u p^{\TWN} + p^{\THN} + 1\right).
\end{align}
\end{proposition}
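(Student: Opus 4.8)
The plan is to follow the derivation of Proposition~\ref{prop:out_fd_asym} essentially verbatim, since the only quantity that changes in the composite setting is the density of the interfering uplink users seen by the typical FD-mode receiver. Starting from the general downlink expression with $\sigma_n^2 = 0$, the outage probability has the form
\begin{align*}
P_\TWD = 1 - 2\pi\la \int_0^\infty r \exp(-\la\pi r^2)\, \La^\TWD_{I_\ell}(s)\, \La^\TWD_{I_b}(s)\, \La^\TWD_{I_u}(s)\, \dd r,
\end{align*}
with $s = \tau r^{\alpha_1}/(P_b G_b G_u)$. First I would argue, as noted just before the proposition, that the base-station interference is insensitive to the composite structure: every BS transmits in the downlink regardless of which architecture its cell uses, so the interfering BS field is still the full PPP of density $\la$, and in the regime $M\to\infty$ only the side lobes (Case~4) contribute. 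Hence $\La^\TWD_{I_b}(s)$ is exactly the asymptotic form appearing in Proposition~\ref{prop:out_fd_asym}, contributing $\frac{2\tau\g^2}{\alpha-2}F(\alpha,\g^2\tau)$ to the exponent after factoring out $\la\pi r^2$. Likewise, since the typical FD receiver still performs only active LI cancellation and $M_b=M_u$, $\La^\TWD_{I_\ell}(s)$ is unchanged and equals $(1+\sigma_\ell^2\tau r^{\alpha_1})^{-1}$ as in \eqref{eq:lim_lap_2d}, i.e.\ the $\La^\TWD_{I_\ell}$ already used in Proposition~\ref{prop:out_fd_asym}.

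The substantive step is the user-interference term. By the thinning theorem the uplink process $\Psi$ splits into independent pieces $\Psi^\TWN$ and $\Psi^\THN$; of these, the two-node users transmit in the uplink only during the fraction $p_u$ of the time that they are in bidirectional mode, so the active interferers from that class form a PPP of density $p_u p^\TWN\la$, whereas the three-node uplink users always transmit and contribute density $p^\THN\la$. The three-node downlink users are HD receivers and contribute nothing. Thus the aggregate interfering-user density at a given slot is $(p_u p^\TWN + p^\THN)\la$. Feeding this density into the computation of $\La^\TWD_{I_u}(s)$ under the same $M\to\infty$ reduction (only side lobes, gain $\g^2$) and the closest-interferer-at-distance-$r$ approximation of Proposition~\ref{prop:out_fd_approx} yields \eqref{eq:lap_int_u_2d_multi}, which contributes $\frac{2\tau\g^2}{\alpha-2}F(\alpha,\g^2\tau)\,(p_u p^\TWN + p^\THN)$ to the exponent.

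Collecting the three contributions together with the $-\la\pi r^2$ coming from the nearest-BS distance density gives an exponent of the form $-\mathcal{G}_\TWD\,\la\pi r^2$ with
\begin{align*}
\mathcal{G}_\TWD = 1 + \frac{2\tau\g^2}{\alpha-2}F(\alpha,\g^2\tau)\bigl(p_u p^\TWN + p^\THN + 1\bigr),
\end{align*}
which is \eqref{G_comp_down}; substituting back reproduces exactly the integral of Proposition~\ref{prop:out_fd_asym}. The only place where care is needed — and the step I expect to be the main obstacle — is the bookkeeping of which terminals are active in a given time slot: one must verify that the downlink BS field is genuinely unaffected by the architecture split, that the three-node (HD) downlink users generate no uplink interference, and that the $p_u$ duty cycle of the two-node users enters multiplicatively in the interferer density rather than altering the per-link statistics. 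Everything after that is the same Laplace-functional and change-of-variable manipulation already carried out in the proofs of Propositions~\ref{prop:out_fd_approx} and \ref{prop:out_fd_asym}.
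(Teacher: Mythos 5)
Your proposal is correct and follows exactly the route the paper intends: it keeps the BS-interference and LI Laplace transforms from Proposition \ref{prop:out_fd_asym} unchanged, replaces the interfering-uplink-user density by $(p_u p^{\TWN}+p^{\THN})\la$ to obtain \eqref{eq:lap_int_u_2d_multi}, and collects the exponents into $\mathcal{G}_{\TWD}$ as in \eqref{G_comp_down}. The paper omits the proof precisely because it is this extension of Propositions \ref{prop:out_fd_approx} and \ref{prop:out_fd_asym}, and your bookkeeping of which terminals transmit in a given slot matches the thinning argument stated in Section \ref{sec:het}.
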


\begin{proposition}\label{prop:multi_out_3d}
The outage probability of a typical HD-mode receiver in an FD composite architecture scenario is given by,
\begin{align}\label{eq:multi_out_3d}
	P'_{\THD} = 1 - \frac{\alpha(\alpha-2)}{\left(\alpha-2\right)\left(\alpha + 2 \pi (\tau\g^2)^{\frac{2}{\alpha}} \csc\left(\frac{2 \pi}{\alpha}\right)\left(p_u p^{\TWN} + p^{\THN}\right)\right) + 2 \alpha \g^2\tau F\left(\alpha, \g^2\tau\right)}.
\end{align}
\end{proposition}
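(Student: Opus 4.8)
The plan is to obtain Proposition~\ref{prop:multi_out_3d} as a direct specialization of Theorem~\ref{thm:out_prob_3d} to the composite architecture in the asymptotic regime $M\to\infty$, under the simplifying assumptions of Section~\ref{subsec:special} ($\alpha=\alpha_1=\alpha_2$, $P_b=P_u$, $M_b=M_u$, $\g_b=\g_u=\g$, $\sigma_n^2=0$), followed by an elementary integration over $r$. Since the typical receiver here operates in HD mode it carries no loopback interference, so, exactly as in Theorem~\ref{thm:out_prob_3d}, the starting point is
\[
P'_{\THD}=1-2\pi\la\int_0^\infty r\exp(-\la\pi r^2)\,\La^{\THD}_{I_b}(s)\,\La^{\THD}_{I_u}(s)\,\dd r,
\]
with $s=\tau r^{\alpha}/(P_bG_bG_u)$, and the only thing that changes relative to the single-architecture case is which interferers populate the two Laplace transforms.

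Next I would assemble the two Laplace transforms in the limit $M\to\infty$. For the base-station interference, the composite thinning splits $\Phi$ into its two-node and three-node parts, but every base station interferes with the receiver in the same manner and the aggregate density is still $\la$; hence $\La^{\THD}_{I_b}(s)$ is unchanged (equation~\eqref{eq:lap_int_b_3d}), and as $M\to\infty$ only the side-lobe term $i=4$ survives, with $\la_{u,b,4}=\la$ and $\G_{u,b,4}/\G_{u,b,1}=\g^2$, giving $\La^{\THD}_{I_b}(s)=\exp\{-\tfrac{2\pi\la\g^2\tau}{\alpha-2}F(\alpha,\g^2\tau)r^2\}$. For the user interference, the uplink-active interferers are the FD users (present a fraction $p_u$ of the time in the two-node cells) together with all three-node uplink users, so they form a PPP of density $(p_up^{\TWN}+p^{\THN})\la$; substituting this density into~\eqref{eq:lap_int_u_3d} and taking $M\to\infty$ (again only $i=4$ survives, $\G_{u,u,4}=\g^2G_u^2$, while $P_b=P_u$ and $G_b=G_u$ force $sP_u\G_{u,u,4}=\tau\g^2r^{\alpha}$) yields precisely~\eqref{eq:lap_int_u_3d_multi}, i.e.\ $\La^{\THD}_{I_u}(s)=\exp\{-\tfrac{2\pi^2\la(p_up^{\TWN}+p^{\THN})}{\alpha}\csc(\tfrac{2\pi}{\alpha})(\tau\g^2)^{2/\alpha}r^2\}$.

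Since $\alpha_1=\alpha_2=\alpha$, every surviving exponent is proportional to $r^2$, so $\exp(-\la\pi r^2)\La^{\THD}_{I_b}(s)\La^{\THD}_{I_u}(s)$ collapses to $\exp(-\mathcal{G}\la\pi r^2)$ with
\[
\mathcal{G}=1+\frac{2\tau\g^2}{\alpha-2}F(\alpha,\g^2\tau)+\frac{2\pi(\tau\g^2)^{2/\alpha}}{\alpha}\csc\!\left(\frac{2\pi}{\alpha}\right)\!\left(p_up^{\TWN}+p^{\THN}\right),
\]
a constant in $r$. The remaining integral $\int_0^\infty r\,e^{-\mathcal{G}\la\pi r^2}\,\dd r=\tfrac{1}{2\mathcal{G}\la\pi}$ then gives $P'_{\THD}=1-1/\mathcal{G}$, and putting $\mathcal{G}$ over the common denominator $\alpha(\alpha-2)$ and simplifying reproduces the stated~\eqref{eq:multi_out_3d}. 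This is exactly the argument behind Propositions~\ref{prop:out_fd_asym} and~\ref{prop:asym_3d}; the only new ingredient is the density factor $p_up^{\TWN}+p^{\THN}$ on the user-interference term.

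The calculation is essentially routine, so the only thing that needs care is the bookkeeping of the composite/traffic-asymmetry model: that the uplink-interfering users have aggregate density $(p_up^{\TWN}+p^{\THN})\la$ while the base-station interference is insensitive to the two-node/three-node split, and that the $M\to\infty$ reduction $sP_u\G_{u,u,4}\to\tau\g^2r^{\alpha}$ genuinely holds, which uses $P_b=P_u$ together with $M_b=M_u$ (hence $G_b=G_u$). Once these are settled, the collapse of all $r$-powers to $r^2$ under $\alpha_1=\alpha_2$ and the Gaussian-type integral close the argument.
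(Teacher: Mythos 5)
Your derivation is correct and follows exactly the route the paper intends: the paper states this proposition without proof as an extension of Propositions~\ref{prop:out_fd_asym} and~\ref{prop:asym_3d}, and you correctly supply that extension by inserting the composite uplink-interferer density $(p_up^{\TWN}+p^{\THN})\la$ into \eqref{eq:lap_int_u_3d}, keeping $\La^{\THD}_{I_b}$ unchanged, taking the $M\to\infty$ side-lobe limit, and integrating $e^{-\mathcal{G}\la\pi r^2}$ to get $1-1/\mathcal{G}$, which matches \eqref{eq:multi_out_3d} after clearing the denominator $\alpha(\alpha-2)$. The bookkeeping points you flag (the $p_u$ traffic-asymmetry thinning and the cancellation $sP_u\G_{u,u,4}=\tau\g^2 r^\alpha$ under $P_b=P_u$, $G_b=G_u$) are precisely the ones that matter, and you handle them correctly.
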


Given the above two propositions, we can now state the following.

\begin{proposition}\label{prop:out_multi_down}
The outage probability of a typical receiver in an FD composite architecture scenario is given by,
\begin{align}\label{eq:out_multi_down}
  \Pi_d = p^{\TWN}P'_{\TWD} + p^{\THN}P'_{\THD},
\end{align}
where $P'_{\TWD}$ and $P'_{\THD}$ are given in Propositions \ref{prop:multi_out_2d} and \ref{prop:multi_out_3d} respectively.
\end{proposition}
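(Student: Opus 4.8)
The plan is to derive $\Pi_d$ by a law-of-total-probability argument, conditioning on whether the typical receiver's own cell runs the two-node or the three-node architecture. By Slivnyak's theorem we analyse a typical receiver $u_o$ at the origin, served by its nearest BS $b_o$ at distance $r$. Under the composite model of Section \ref{sec:het}, $\Phi$ is split by an independent thinning into $\Phi^{\TWN}$ of density $p^{\TWN}\la$ and $\Phi^{\THN}$ of density $p^{\THN}\la$, and likewise for $\Psi$; equivalently, each BS carries an independent label, ``two-node'' with probability $p^{\TWN}$ and ``three-node'' with probability $p^{\THN}=1-p^{\TWN}$. Hence the cell of $b_o$ is a two-node cell with probability $p^{\TWN}$ and a three-node cell with probability $p^{\THN}$, and this label is independent of the locations and labels of all other BSs and of the whole uplink process; consequently the aggregate out-of-cell interference received by $u_o$, coming from the composite BS field and from the composite uplink field of density $\la(p_u p^{\TWN}+p^{\THN})$ that enters \eqref{eq:lap_int_u_2d_multi}--\eqref{eq:lap_int_u_3d_multi}, has the same distribution irrespective of that label.

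First I would split the outage event according to the architecture of $b_o$'s cell. On the event that this cell is two-node, $u_o$ is the FD partner of $b_o$: it carries the residual loopback term $I_\ell$ with Laplace transform as in \eqref{eq:lim_lap_2d} and suffers only the common out-of-cell interference, so the conditional outage probability is precisely $P'_{\TWD}$ from Proposition \ref{prop:multi_out_2d}. On the complementary event $u_o$ is an HD downlink user of a three-node cell: it sees no loopback interference but the additional intra-cell term from the co-scheduled uplink user of that cell, which is exactly the configuration producing $P'_{\THD}$ in Proposition \ref{prop:multi_out_3d}. Substituting these two conditional probabilities into the total-probability decomposition $\Pi_d = p^{\TWN}\,P'_{\TWD}+p^{\THN}\,P'_{\THD}$ gives \eqref{eq:out_multi_down}.

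The hard part, which is really the only point deserving justification, is the independence invoked above: conditioning on the label of the single BS closest to the origin must not alter the distribution of the remaining (marked) point process or of the uplink process. This is the standard fact that an i.i.d. marking commutes with Palm conditioning: thinning $\Phi$ into two independent PPPs and then tagging $u_o$'s serving cell leaves the reduced Palm process of all the other marked BSs distributed as before, while the uplink field is independent of the BS marks by construction. Once this is granted, $P'_{\TWD}$ and $P'_{\THD}$ already encapsulate every remaining computation, carried out in Propositions \ref{prop:multi_out_2d} and \ref{prop:multi_out_3d}, so nothing further is needed.
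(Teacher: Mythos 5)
Your proof is correct and takes essentially the same route as the paper: the paper likewise decomposes the conditional coverage probability $\PP[\sinr \geq \tau \mid r]$ by the architecture of the typical cell with weights $p^{\TWN}$ and $p^{\THN}$, identifies each conditional term with the corresponding product of Laplace transforms (with versus without the $I_\ell$ factor and with the appropriate uplink interference term), and integrates over $r$. Your explicit justification that the i.i.d.\ marking of the serving BS is independent of the reduced Palm process and of the uplink field is a point the paper leaves implicit, but it is the same argument.
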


\begin{proof}
See Appendix \ref{prf:out_multi_down}.
\end{proof}

Note that when $p_u = 1$ then \eqref{eq:out_multi_down} becomes $\Pi_d = p^{\TWN}P_{\TWD} + p^{\THN}P_{\THD}$ where $P_{\TWD}$ and $P_{\THD}$ are given by Propositions \ref{prop:out_fd_asym} and \ref{prop:asym_3d} respectively. The cases with specific values of $\alpha$ can be easily derived in a similar manner as above and so they are excluded.

From Proposition \ref{prop:out_multi_down} we can see that the outage probability of the typical receiver depends on the HD-mode and FD-mode user densities, FD-mode user traffic, and also the LI cancellation capability of the system. Therefore, the optimal portion of the FD-mode users that maximizes the success probability of the downlink can be obtained as,
\begin{align}\label{eq:optimization DL}
	& p^{\TWN*} = \argmax_{p^{\TWN}} ~ (1-\Pi_d)\nonumber\\[-2.5mm]
	& \text{subject to}\qquad 0 \leq p^{\TWN}\leq 1.
\end{align}

The optimization problem \eqref{eq:optimization DL} is nonconvex and a globally optimal solution is difficult to obtain. In order to tackle this problem, we can resort to numerical methods, such as the projected gradient algorithm (PGA), to find a locally optimal solution. The advantage of PGA is that it only requires the evaluation of the first-order derivative of the objective function while other approaches for nonlinear programming, such as the sequential quadratic programming and the Gauss-Newton method, also require the evaluation of the second-order derivative \cite{DPB}. Since \eqref{eq:alpha4} contains Meijer G-functions, the complexity of computing the second-order derivative for solving \eqref{eq:optimization DL} is very high.

Figs. \ref{fig:success dl LLI} and \ref{fig:success dl HLI} show the success probability as a function of $p^{\TWN}$ and $p_u$ for $\sigma_\ell^2 = -30$ dB and $\sigma_\ell^2 = 0$ dB respectively and with $\alpha = 4$. We see that when the LI cancellation is imperfect and $\sigma_\ell^2 = 0$ dB, the three-node architecture (HD-mode users) is preferred. Nevertheless, when $p_u$ is decreased and particularly for values $p_u < 0.5$, a composite architecture can be used to enhance the success probability and consequently the downlink throughput. On the other hand, when the residual LI gain is negligible, i.e. when $\sigma_\ell^2 = -30$ dB, a composite architecture is preferred again for small values of $p_u$. This is expected since for large values of $p_u$ the residual LI will degrade the overall performance of the network and thus in this case the three-node architecture is preferred.

\begin{figure}[t]
	\begin{subfigure}{0.5\linewidth}\centering
		\includegraphics[width=0.8\linewidth]{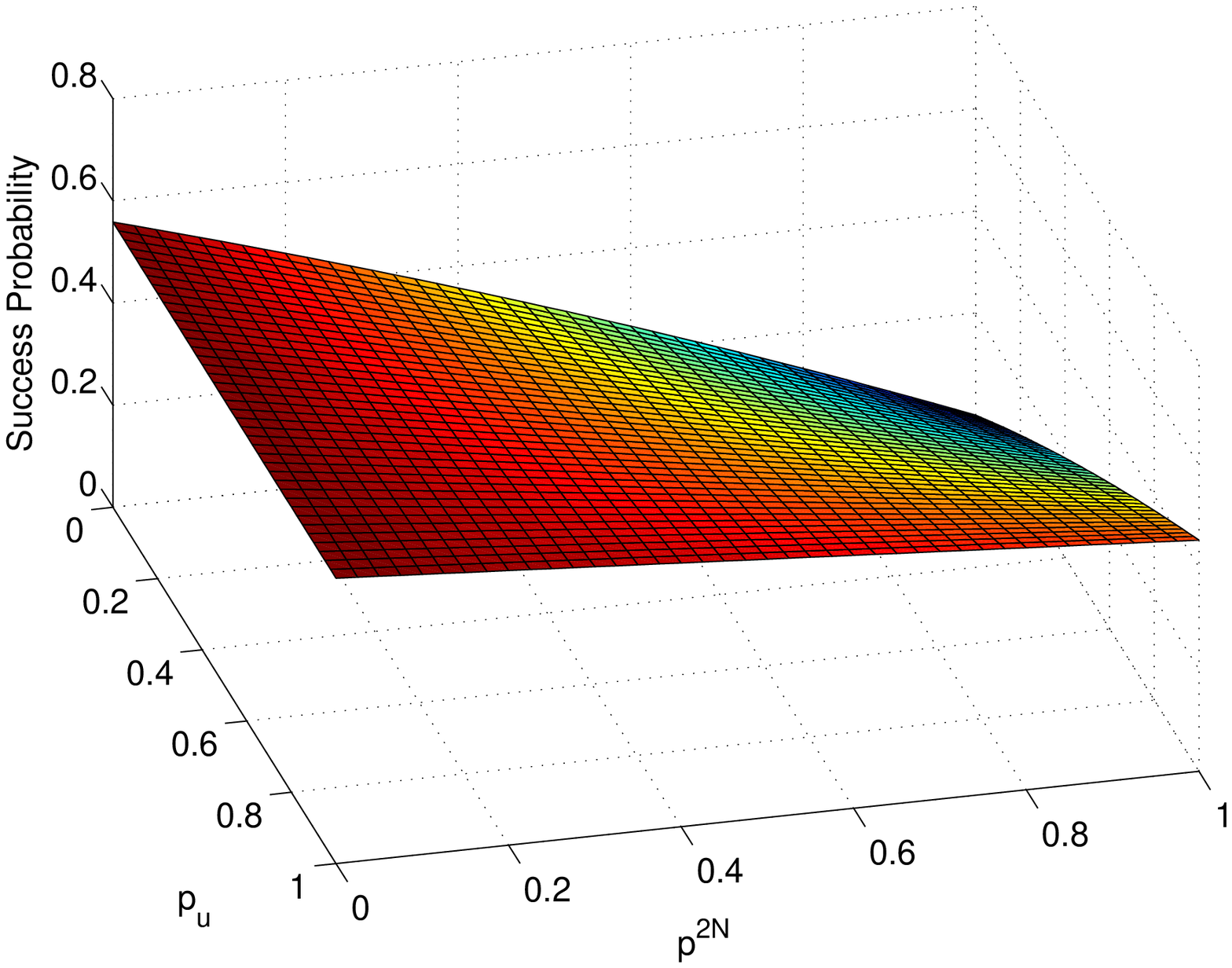}
		\caption{$\sigma_\ell^2 = -30$ dB.}
		\label{fig:success ul LLI}
	\end{subfigure}
	\begin{subfigure}{0.5\linewidth}\centering
		\includegraphics[width=0.8\linewidth]{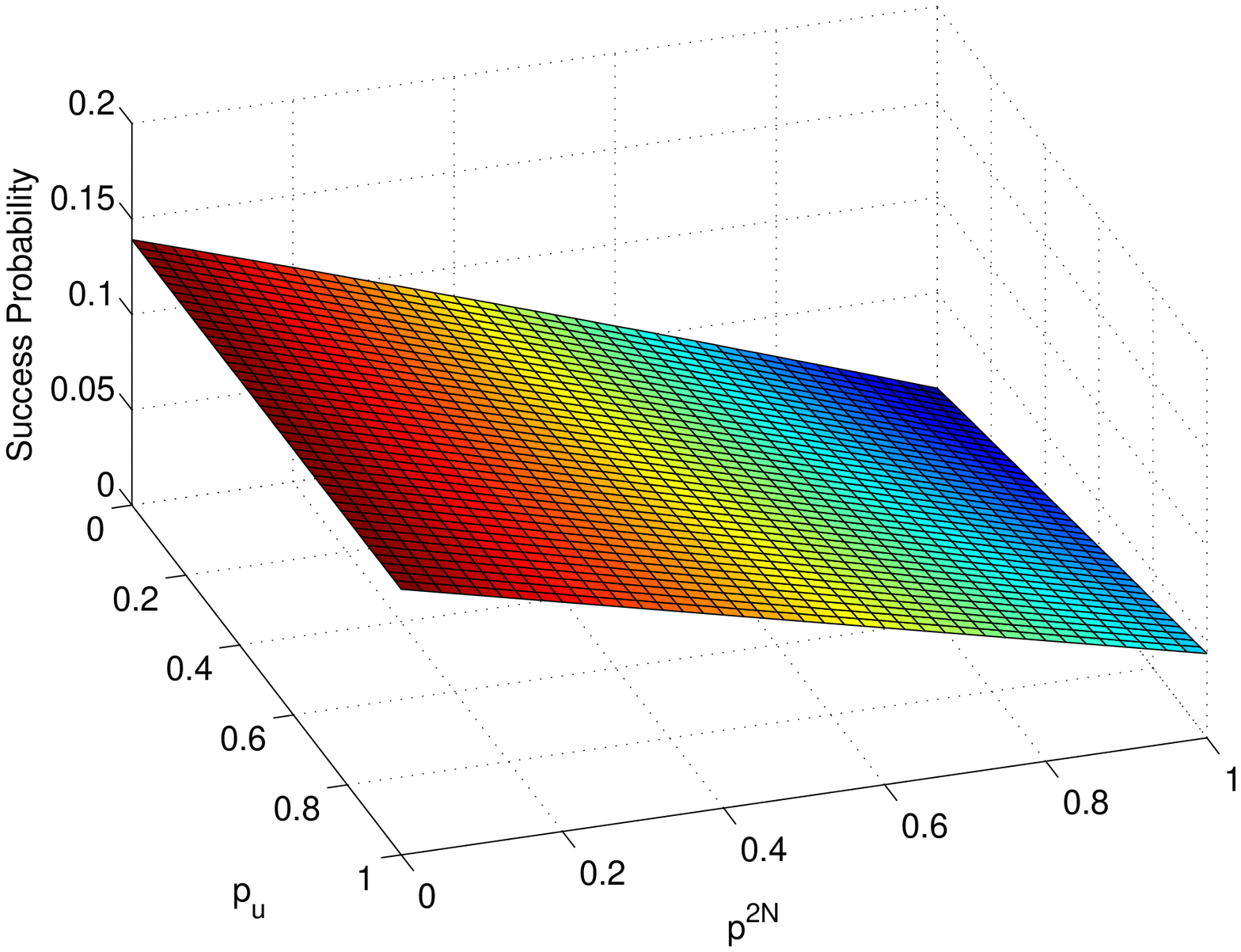}
		\caption{$\sigma_\ell^2 = 0$ dB.}
		\label{fig:success ul HLI}
	\end{subfigure}
	\caption{Success probability at uplink vs $p_u$ and $p^{\TWN}$; $\alpha_1=4$, $\alpha_2=3$, $R = 1$ bpcu, $\g = 0.2$, $\la = 10^{-2}$.}\vspace{-8mm}
\end{figure}

\subsection{Performance at the uplink}
A typical uplink BS from either architecture experiences co-channel interference from outside its cell. Therefore, under this section's assumptions, the Laplace transforms $\La^x_{I_b}\left(s\right)$, $x \in \{\TWU, \THU\}$, are obtained similarly to expression \eqref{eq:lap_int_b_2d}, as the nearest interfering BS is assumed to be at a distance $r$. Also, in a similar manner, $\La^x_{I_u}\left(s\right)$, $x \in \{\TWU, \THU\}$ are derived as expression \eqref{eq:lap_int_u_3d}. Hence, we have the following for the uplink case.

\begin{proposition}\label{prop:out_multi_up}
The outage probability of a typical uplink BS in an FD composite architecture scenario is given by,
\begin{align}\label{eq:out_multi_up}
\Pi_u = p^{\TWN}P'_{\TWU} + p^{\THN}P'_{\THU},
\end{align}
where both $P'_{\TWU}$ and $P'_{\THU}$ are given in Proposition \ref{prop:out_fd_asym} with
\begin{align}\label{G_comp_up}
\mathcal{G}_\TWU = \mathcal{G}_\THU = 1 + \frac{2\pi\tau^\frac{2}{\alpha_1}}{\alpha_1}\csc\left(\frac{2\pi}{\alpha_1}\right) \g^\frac{4}{\alpha_1}\left(p_up^{\TWN} + p^{\THN}\right) + \frac{2 r^{\alpha_1-\alpha_2} \tau}{\alpha_2-2} \g^2 F\left(\alpha_2, r^{\alpha_1-\alpha_2} \g^2 \tau\right).
\end{align}
\end{proposition}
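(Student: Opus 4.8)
The plan is to reduce Proposition~\ref{prop:out_multi_up} to Proposition~\ref{prop:out_fd_asym} by conditioning on the architecture of the typical cell and then re-deriving the three Laplace transforms of Theorem~\ref{thm:out_prob_3u} for the composite interference fields under the large-$M$ and special-case assumptions of Section~\ref{subsec:special}.

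\emph{Step 1 (conditioning).} By the thinning property of PPPs, the two-node and three-node BSs form independent homogeneous PPPs of densities $p^{\TWN}\la$ and $p^{\THN}\la$, and the architecture mark of a cell is independent of all interferer processes. Hence a typical uplink BS is of two-node type with probability $p^{\TWN}$ and of three-node type with probability $p^{\THN}$, so the law of total probability gives $\Pi_u = p^{\TWN}P'_{\TWU}+p^{\THN}P'_{\THU}$, where $P'_{\TWU}$ and $P'_{\THU}$ denote the uplink outage probabilities conditioned on the two cell types (with the uplink active). It then remains to identify each with the appropriate instance of \eqref{eq:out_fd_asym}.

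\emph{Step 2 (interference fields).} Conditioned on the cell type, I would follow the proof of Theorem~\ref{thm:out_prob_3u}. The useful-signal term, and hence $s=\tau r^{\alpha_1}/(P_uG_bG_u)$, is unchanged. Every BS of both architectures is full-duplex and transmits in the downlink, so the out-of-cell interfering BSs still form the full PPP of density $\la$; applying the ``nearest interferer at distance at least $r$'' approximation and letting $M\to\infty$ (only the side-lobe gain $\g^2$ survives) reproduces verbatim the $F(\alpha_2,\cdot)$ term of \eqref{eq:G_asym2}. The interfering uplink users, on the other hand, consist of the two-node users currently in bidirectional mode together with all three-node users, which by independent thinning and superposition form a PPP of density $\big(p_up^{\TWN}+p^{\THN}\big)\la$; since this density enters the exponent of the (approximated) term \eqref{eq:lap_int_u_3u} linearly, its $M\to\infty$ limit is the $\g^{4/\alpha_1}$ term of \eqref{eq:G_asym2} multiplied by $\big(p_up^{\TWN}+p^{\THN}\big)$. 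Collecting the ``$1$'' from the void probability, the scaled uplink-user term and the unchanged BS term yields exactly $\mathcal{G}_\TWU=\mathcal{G}_\THU$ as in \eqref{G_comp_up}, common to both cell types.

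\emph{Step 3 (residual LI) and conclusion.} The residual-LI factor is the only term that differs between the two cell types: for a two-node BS the transmit and receive beams coincide ($\theta\equiv 0$), so $\La^{\TWU}_{I_\ell}(s)$ is the Rayleigh factor \eqref{eq:lim_lap_2d}; for a three-node BS passive suppression applies and the $M\to\infty$ argument of Proposition~\ref{prop:out_fd_asym} turns $\La^{\THU}_{I_\ell}(s)$ into the integral \eqref{eq:lim_lap_3u}. Substituting these, together with the common $\mathcal{G}$, into \eqref{eq:out_fd_asym} gives $P'_{\TWU}$ and $P'_{\THU}$, and Step~1 then yields \eqref{eq:out_multi_up}. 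I expect the only delicate point to be the interference bookkeeping of Step~2 — in particular arguing that the interfering-BS field keeps its full density $\la$ (downlink activity being independent of duplex architecture and of $p_u$) while only the interfering-uplink-user field is thinned to density $\big(p_up^{\TWN}+p^{\THN}\big)\la$ — and checking that the distance-$r$ approximation and the large-$M$ limit are applied consistently with Propositions~\ref{prop:out_fd_approx} and \ref{prop:out_fd_asym}; everything else is a direct specialization of Theorem~\ref{thm:out_prob_3u}.
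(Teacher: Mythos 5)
Your proposal is correct and follows essentially the same route as the paper: conditioning on the cell type via the law of total probability (exactly as in the paper's Appendix for the downlink composite case), keeping the interfering-BS field at full density $\la$ while thinning the interfering-uplink-user field to density $\left(p_up^{\TWN}+p^{\THN}\right)\la$, and letting only the residual-LI factor distinguish $P'_{\TWU}$ from $P'_{\THU}$. This matches the factor placement in \eqref{G_comp_up}, where only the $\csc$ (uplink-user) term carries the weight $\left(p_up^{\TWN}+p^{\THN}\right)$.
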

The optimal $p^{\TWN}$, maximizing the uplink success probability, could be obtained by solving the following optimization
\begin{align}\label{eq:optimization UL}
	& p^{\TWN*} = \argmax_{p^{\TWN}} ~ (1-\Pi_u)\nonumber\\[-2.5mm]
	& \text{subject to}\qquad 0 \leq p^{\TWN}\leq 1.
\end{align}

Similarly to the downlink case, given the outage expression in Proposition \ref{prop:out_multi_up}, the optimization problem in \eqref{eq:optimization UL} does not admit a closed-form solution and therefore the optimal $p^{\TWN*}$ is efficiently solved via numerical calculation. Figs. \ref{fig:success ul LLI} and \ref{fig:success ul HLI} show the success probability as a function of $p^{\TWN}$ and $p_u$ for $\sigma_\ell^2 = -30$ dB and $\sigma_\ell^2 = 0$ dB respectively and with $\alpha_1 = 4$, $\alpha_2 = 3$. It is clear that uplink transmissions are more susceptible to the residual LI strength than the downlink ones, since all uplink transmissions are almost in outage for $\sigma_\ell^2 = 0$ dB.  This observation simply means that the LI cancellation mechanism at the BSs should be more effective than the one at the FD-mode users. Furthermore, unlike the downlink case, when $\sigma_\ell^2 = -30$ dB a composite architecture is preferred for $p_u>0.5$.

\begin{figure}[t]
	\begin{subfigure}{0.5\linewidth}
		\centering
		\includegraphics[width=0.8\linewidth]{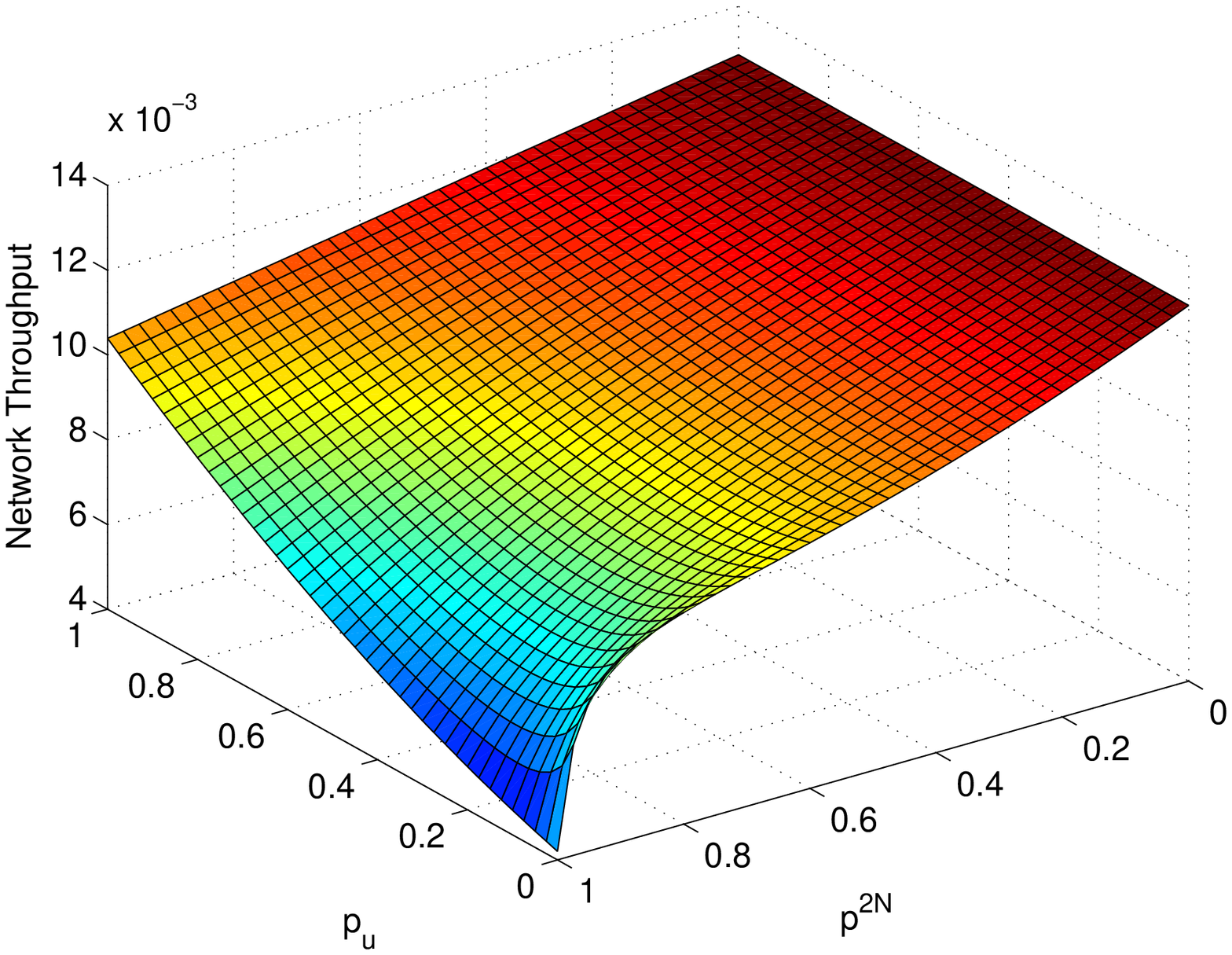}
		\caption{$\la = 10^{-2}$.}
		\label{fig:success net LLam}
	\end{subfigure}
	\begin{subfigure}{0.5\linewidth}
		\centering
		\includegraphics[width=0.8\linewidth]{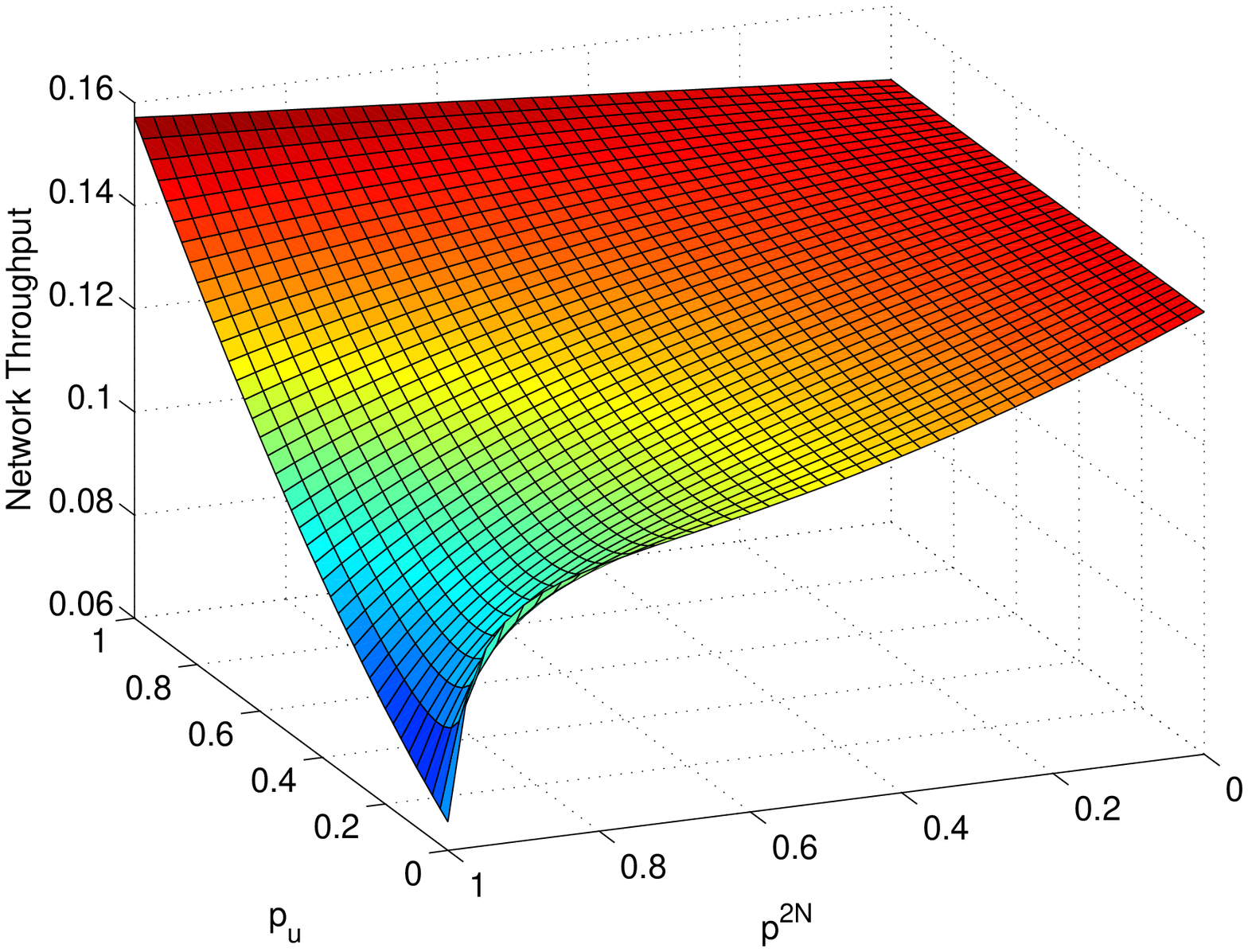}
		\caption{$\la = 10^{-1}$.}
		\label{fig:success net HLam}
	\end{subfigure}
	\caption{Network throughput vs $p_u$ and $p^{\TWN}$; $R = 1$ bpcu, $\g = 0.2$, $\sigma_\ell^2 = -30$ dB.}\vspace{-5mm}
\end{figure}

\subsection{Network throughput}
The implementation of FD-mode can potentially double the throughput of a network compared to HD-mode and hence it is a key metric for the evaluation of the network's performance. The network throughput is defined as the product of the success probability and the sum rate per unit area. When uplink and downlink independent data streams are sent on each time slot, the composite architecture throughput is given by
\begin{align}\label{eq:Outage:UL}
	T(\lambda,\tau,p^{\TWN},p^{\THN}) =\lambda (1-\Pi_d)\log_2(1+\tau) + \lambda(p_up^{\TWN} + p^{\THN}) (1-\Pi_u)\log_2(1+\tau).
\end{align}

With the assumptions from Section \ref{subsec:special} and for $p_u = 1$ we obtain the network throughput as follows,
\begin{align}
	T(\lambda,\tau,p^{\TWN}) =&
	\lambda \log_2(1+\tau)
	\Bigg(\frac{2\left(1-p^{\TWN}\right)}{1 + \frac{4 \pi}{\alpha} (\tau\g^2)^{\frac{2}{\alpha}} \csc\left(\frac{2 \pi}{\alpha}\right) +  \mathcal{G}_\TWD} +\! 2 \pi \la p^{\TWN} \int_0^\infty r e^{-\mathcal{G}_\TWD \pi \la r^2} \La^{\TWD}_{I_\ell}\left(s\right) \dd r\nonumber\\
	&+ 2 \pi \la \int_0^\infty r e^{-\mathcal{G}_\TWU \pi \la r^2} \left( p^{\TWN}\La^{\TWU}_{I_\ell}\left(s\right) + (1-p^{\TWN}) \La^{\THU}_{I_\ell}\left(s\right)\right) \dd r \Bigg),\label{eq:Throughput}
\end{align}
where $\mathcal{G}_d$ and $\mathcal{G}_u$ are given by \eqref{G_comp_down} and \eqref{G_comp_up} respectively.

The optimal $p^{2\mathsf{N}}$ could be obtained by solving the following optimization
\begin{align}\label{eq:optimization}
	&p^{\TWN*} = \argmax_{p^{\TWN}} ~ T(\lambda,\tau,p^{\TWN})\nonumber\\[-2.5mm]
	& \text{subject to}\qquad 0 \leq p^{\TWN}\leq 1.
\end{align}
The above optimization can be solved analytically  and we have the following key result,
\begin{align}\label{eq:network throughput optimization}
	p^{\TWN*}& =
	\begin{cases}
		1,
		& 2 \pi \la \int_0^\infty r\left(\frac{\La^{\TWD}_{I_\ell}\left(s\right)}{e^{\mathcal{G}_\TWD \pi \la r^2}} + \frac{\La^{\TWU}_{I_\ell}\left(s\right) - \La^{\THU}_{I_\ell}\left(s\right)}{e^{\mathcal{G}_\TWU \pi \la r^2}}\right) dr
		> \frac{2}{1 + \frac{4 \pi}{\alpha} (\tau\g^2)^{\frac{2}{\alpha}} \csc\left(\frac{2 \pi}{\alpha}\right) +  \mathcal{G}_\TWD},\\
		0 , & \text{otherwise}.
	\end{cases}
\end{align}

\begin{proof}
The objective function in \eqref{eq:optimization} is clearly a linear function of $p^{\TWN}$ and hence, the optimum solution is located on the boundaries of the region $\mathcal{C} = [0, 1]$, depending on the sign of the first order derivative of the objective function.
\end{proof}

Figs. \ref{fig:success net LLam} and \ref{fig:success net HLam} illustrate the network throughput $T$ as a function of $p^{\TWN}$ and $p_u$ for $\lambda = 10^{-2}$ and $\lambda = 10^{-1}$ respectively, with $\alpha = \alpha_1 = \alpha_2 = 4$ for the downlink, $\alpha_1 = 4$, $\alpha_2 = 3$ for the uplink and $\sigma_\ell^2 = -30$ dB. From this figures, we can see that, when $p_u=1$, the maximum $T$ can be achieved by operating all users in FD-mode (for $\lambda = 10^{-1}$) and  HD-mode (for $\lambda = 10^{-2}$) which confirms the correctness of~\eqref{eq:network throughput optimization}.

\begin{figure}[t]
  \begin{subfigure}{0.45\linewidth}\centering
    \includegraphics[width=0.8\linewidth]{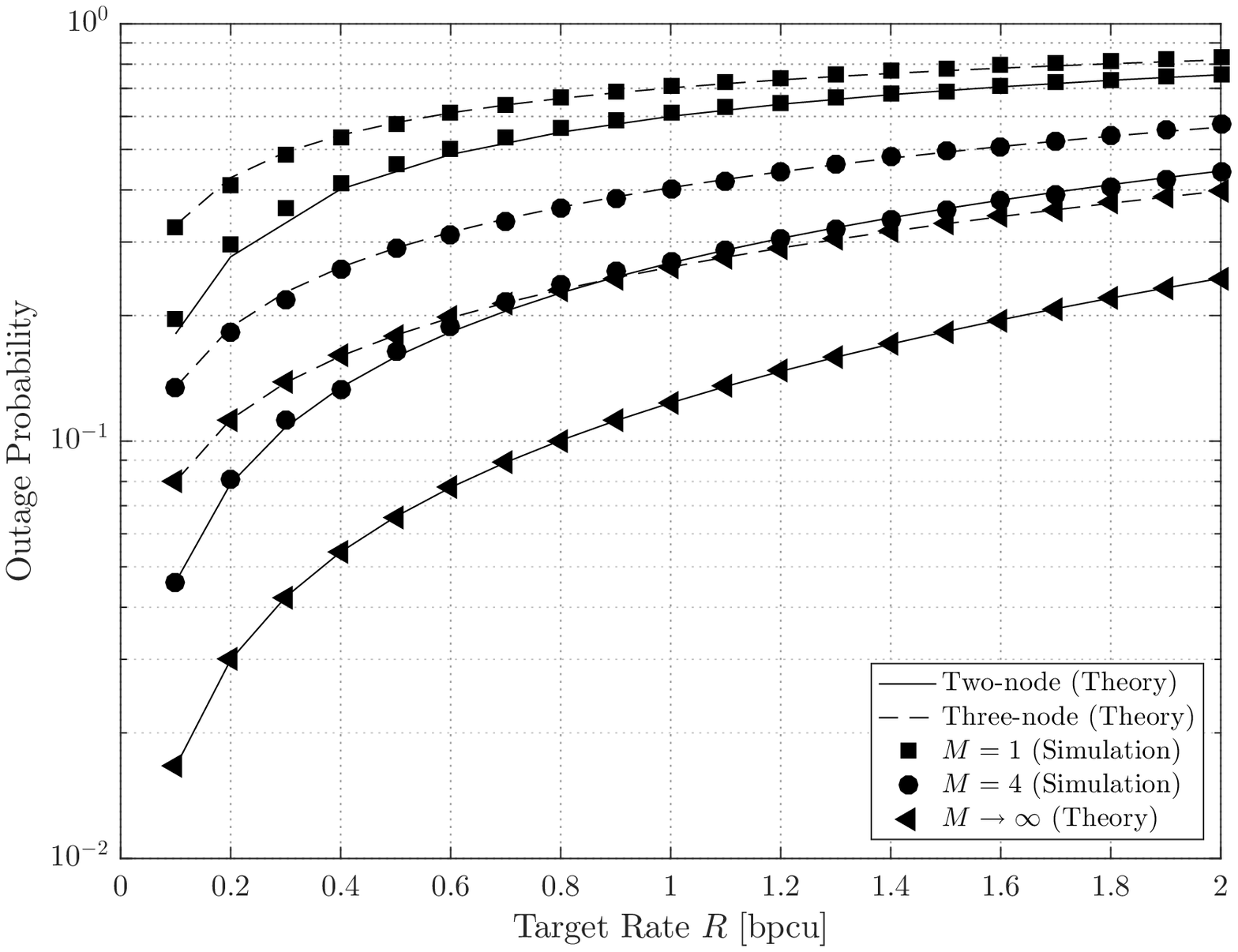}
    \caption{$\sigma_\ell^2 = 0~ (-\infty$ dB)}
    \label{fig:outage_vs_bpcu_0_down}
  \end{subfigure}\hfill
  \begin{subfigure}{0.45\linewidth}\centering
    \includegraphics[width=0.8\linewidth]{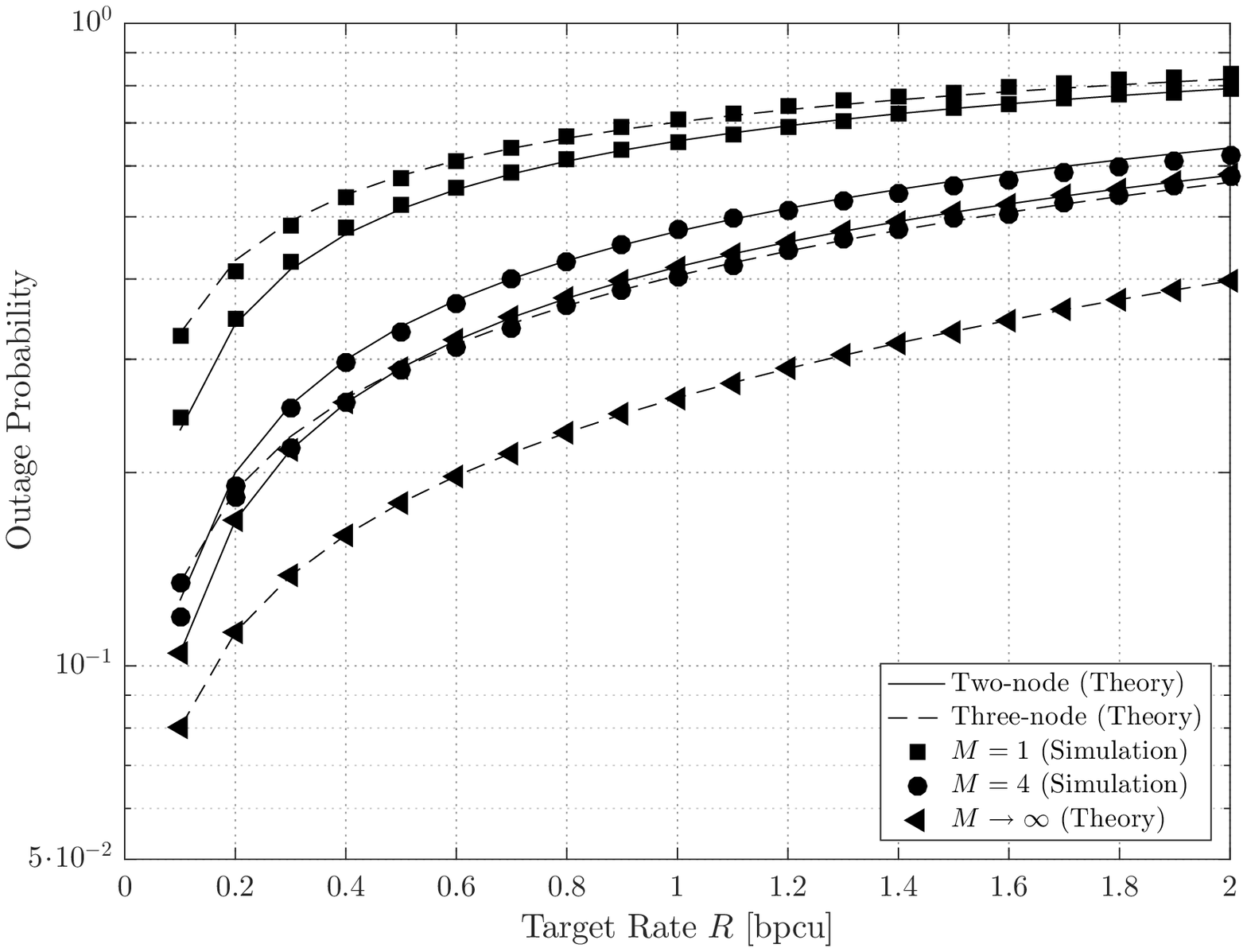}
    \caption{$\sigma_\ell^2 = -30$ dB}
    \label{fig:outage_vs_bpcu_-30db_down}
  \end{subfigure}\vspace{-3mm}
  \caption{Outage probability of downlink vs target rate $R$.}\vspace{-5mm}
\end{figure}

\section{Numerical Results}\label{sec:num}
In this section, we validate the derived expressions and evaluate the proposed model's performance. Unless otherwise stated, the results use the following parameters and assumptions: $\lambda = 10^{-2}$, $\g = \g_b = \g_u = 0.2$, $\sigma^2 = 0$, $M = M_b = M_u$ and $P_b = P_u$. Furthermore, we consider $\alpha = \alpha_1 = \alpha_2 = 4$ for the downlink and $\alpha_1 = 4$ and $\alpha_2 = 3$ for the uplink. The simulation area has a fixed radius of 500 km and the numerical results are obtained by averaging over 10 thousand realizations. The initial density of the users is large enough so that each BS serves, on average, one user. Moreover, we adopt the results in \cite{EE1} and assume that the maximum suppression is achieved at $\theta_{\rm max} = \frac{2\pi}{3}$. In the figures provided, the analytical results are depicted with dashed or solid lines and the simulations with markers except for the asymptotic cases where only analytical results are given. Note that the case $M = 1$ in all figures refers to omnidirectional antennas.

Figs. \ref{fig:outage_vs_bpcu_0_down} and \ref{fig:outage_vs_bpcu_-30db_down} depict the outage probability at the downlink with $\sigma_\ell^2 = 0$ and $\sigma_\ell^2 = -30$ dB respectively for both architectures where the dashed lines represent the analytical results and the dots the simulation results. As expected, the performance of both architectures improves with the employment of directional antennas. Furthermore, the perfect LI cancellation case clearly illustrates the significant gains that the FD radio can potentially provide. However, it is obvious from Fig. \ref{fig:outage_vs_bpcu_-30db_down} that the user in the three-node architecture outperforms the one in the two-node when $M > 1$. This is explained by the fact that the residual LI at the user in the two-node architecture is not affected by the number of directional antennas and so dominates the interference at the user which restrict its performance. The good agreement between the theoretical curves (dashed lines) and the simulation results (markers) validates our mathematical analysis.

\begin{figure}[t]
  \begin{subfigure}{0.32\linewidth}
    \includegraphics[width=\linewidth]{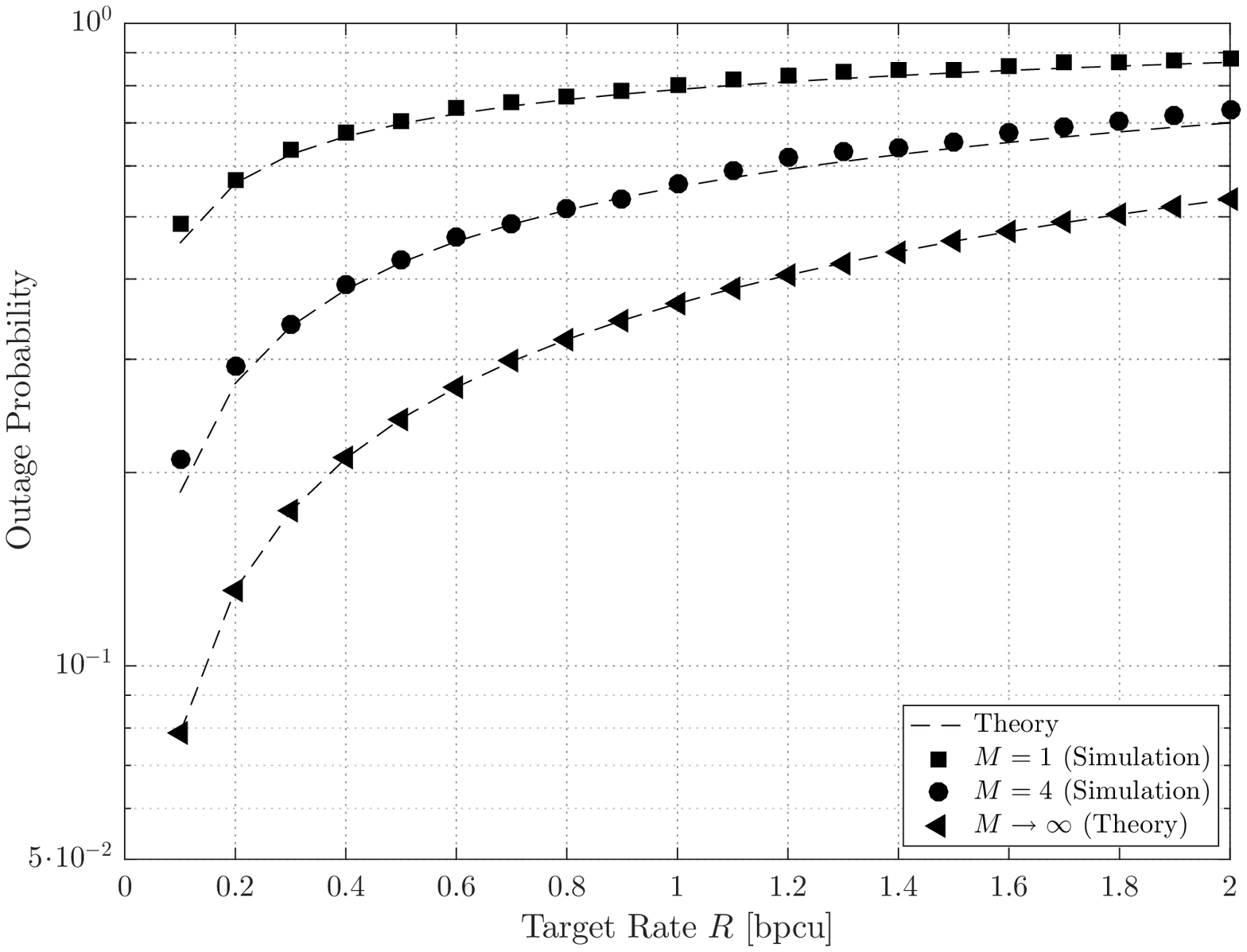}
    \caption{$\sigma_\ell^2 = 0~ (-\infty$ dB)}
    \label{fig:outage_vs_bpcu_0_up}
  \end{subfigure}
  \begin{subfigure}{0.33\linewidth}
    \includegraphics[width=\linewidth]{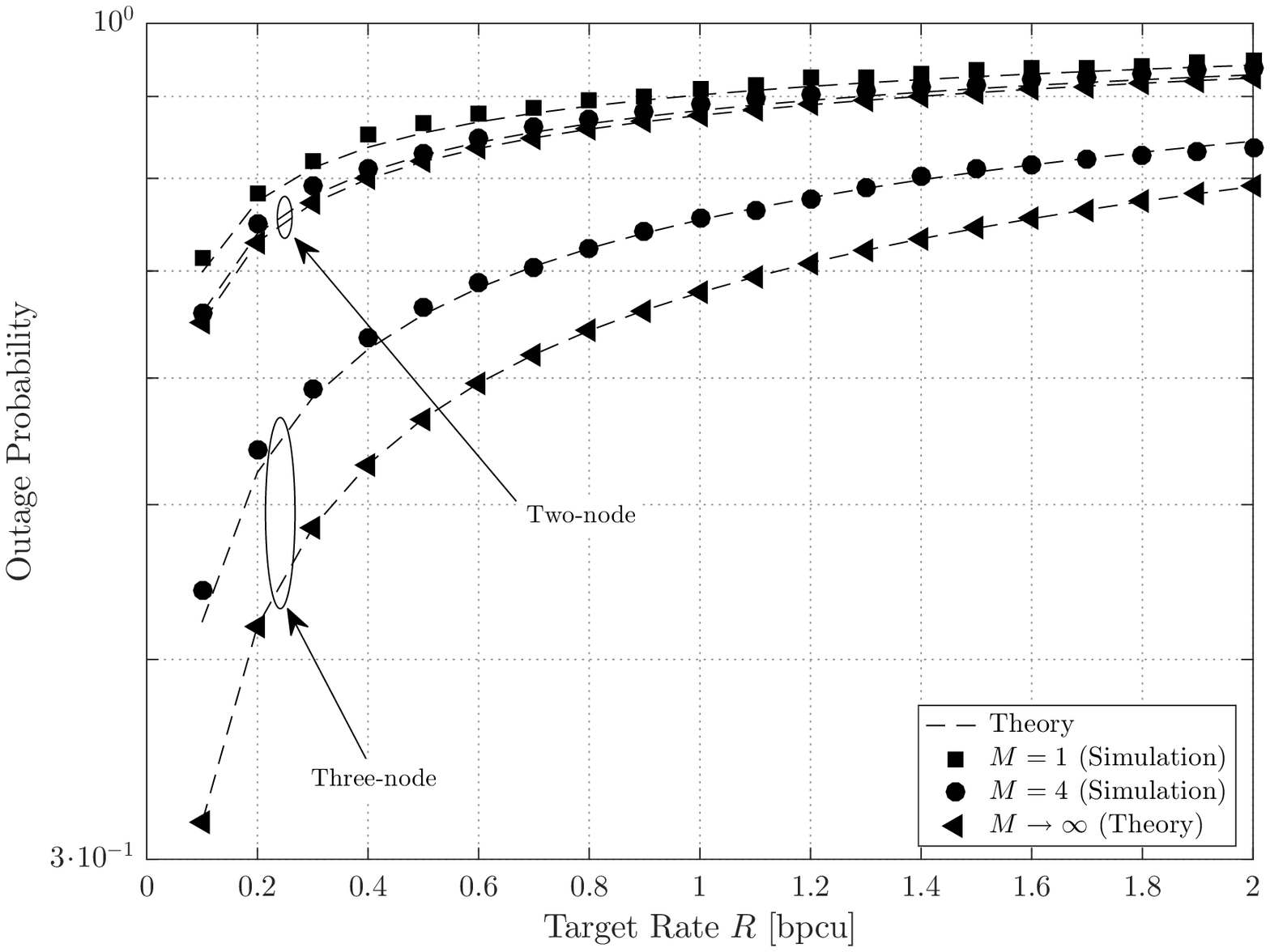}
    \caption{$\sigma_\ell^2 = -10$ dB}
    \label{fig:outage_vs_bpcu_-10db_up}
  \end{subfigure}
  \begin{subfigure}{0.33\linewidth}
    \includegraphics[width=\linewidth]{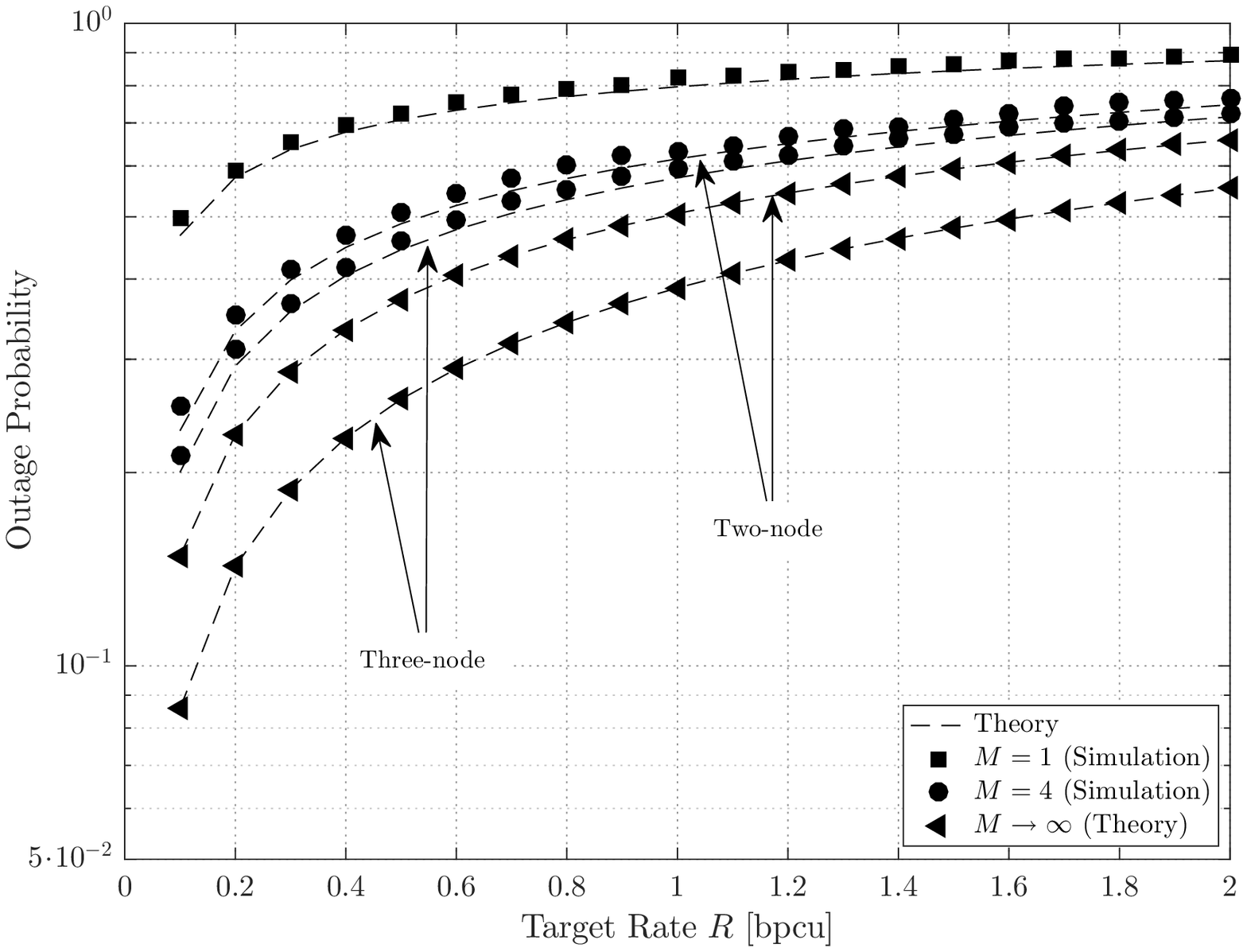}
    \caption{$\sigma_\ell^2 = -30$ dB}
    \label{fig:outage_vs_bpcu_-30db_up}
  \end{subfigure}\vspace{-0.3cm}
  \caption{Outage probability of uplink vs target rate $R$.}\vspace{-5mm}
\end{figure}

Figs. \ref{fig:outage_vs_bpcu_0_up} - \ref{fig:outage_vs_bpcu_-30db_up} illustrate the outage expressions for the uplink. Specifically, Fig. \ref{fig:outage_vs_bpcu_0_up} shows the performance under perfect LI cancellation. In this case, the performance is the same for both architectures since the BSs of the two architectures differ only in the way they handle the LI. Figs. \ref{fig:outage_vs_bpcu_-10db_up} and \ref{fig:outage_vs_bpcu_-30db_up} depict the outage with $\sigma_\ell^2 = -10$ dB and $\sigma_\ell^2 = -30$ dB respectively. It is clear that the BS in the two-node architecture finds it hard to keep up with the three-node one. The same reason applies as in the downlink case. Also, note that for the case $\sigma_\ell^2 = -10$ dB the performance of the BS in the two-node architecture achieves nearly zero improvement from $M=1$ to $M \rightarrow \infty$ since the effect of the residual LI is independent of $M$ (see Proposition \ref{prop:out_fd_asym}). On the other hand, the BS in the three-node architecture achieves a better performance due to the BS's ability to passively suppress the LI. In fact, the passive suppression becomes more efficient with the employment of more antennas.

We show the benefits from the passive suppression method in Fig. \ref{fig:outage_vs_sigma} which illustrates the performance of an FD node in terms of the outage probability, with and without passive suppression, for different values of $\sigma_\ell^2$. In the two extreme cases, $\sigma_\ell^2 \rightarrow -\infty$ dB and $\sigma_\ell^2 \rightarrow \infty$ dB, the two methods have the same performance when $M \to \infty$. In the former case, the outage converges to perfect LI cancellation performance and in the latter case the outage converges to $1$. However, for moderate values, passive suppression provides significant gains, e.g., for $\sigma_\ell^2 = -20$ dB it achieves about $40\%$ reduction. Also, to verify what we said earlier, when an FD node is unable to employ passive suppression techniques, directional antennas become beneficiary only for small values of $\sigma_\ell^2$. Indeed, for values $\sigma_\ell^2 \geq -5$ dB, the performance of active cancellation is the same for any number of antennas. Finally, Fig. \ref{fig:outage_vs_density} shows the effect of the density in the performance of FD networks. Recall from Section \ref{subsec:special} that in the asymptotic case $\lambda \rightarrow \infty$, the performance of an FD-node converges to the performance of the perfect LI cancellation case. From the figure, this occurs at $\lambda \approx 0.5$ for the three-node topology whereas the two-node topology requires $\lambda > 1$. Again, the difference lies in the passive suppression ability of the BSs in the three-node topology.

\begin{figure*}[t]
  \begin{minipage}{0.32\linewidth}\centering
    \includegraphics[width=\linewidth]{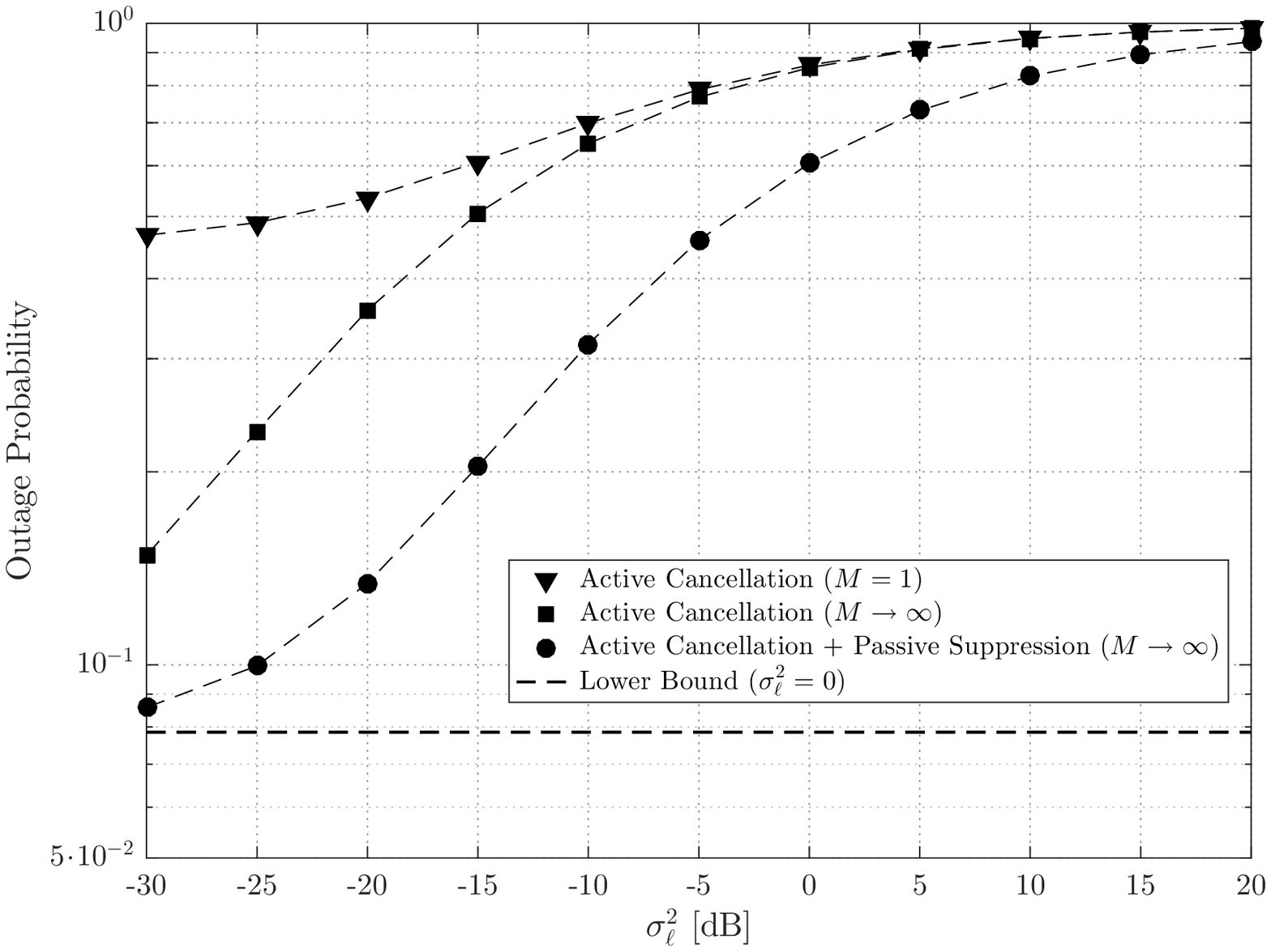}
    \captionof{figure}{Outage probability vs residual interference $\sigma_\ell^2$; $R = 0.1$ bpcu.}
    \label{fig:outage_vs_sigma}
  \end{minipage}\hspace{2mm}
  \begin{minipage}{0.32\linewidth}\centering
    \includegraphics[width=\linewidth]{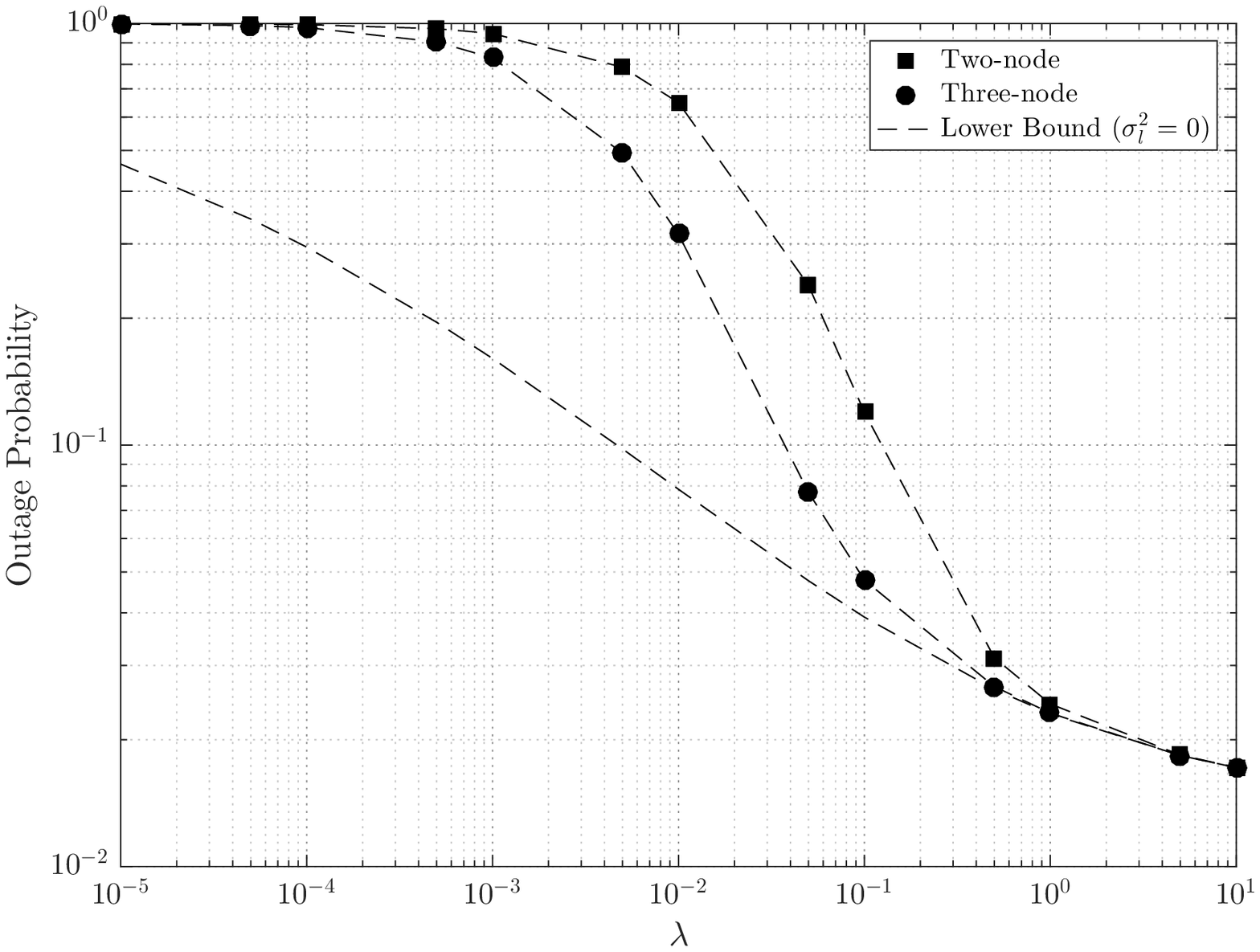}
    \captionof{figure}{Outage probability vs density $\lambda$; $R = 0.1$ bpcu, $\sigma_\ell^2 = -10$ dB.}
    \label{fig:outage_vs_density}
  \end{minipage}\hspace{2mm}
  \begin{minipage}{0.32\linewidth}\centering
    \includegraphics[width=\linewidth]{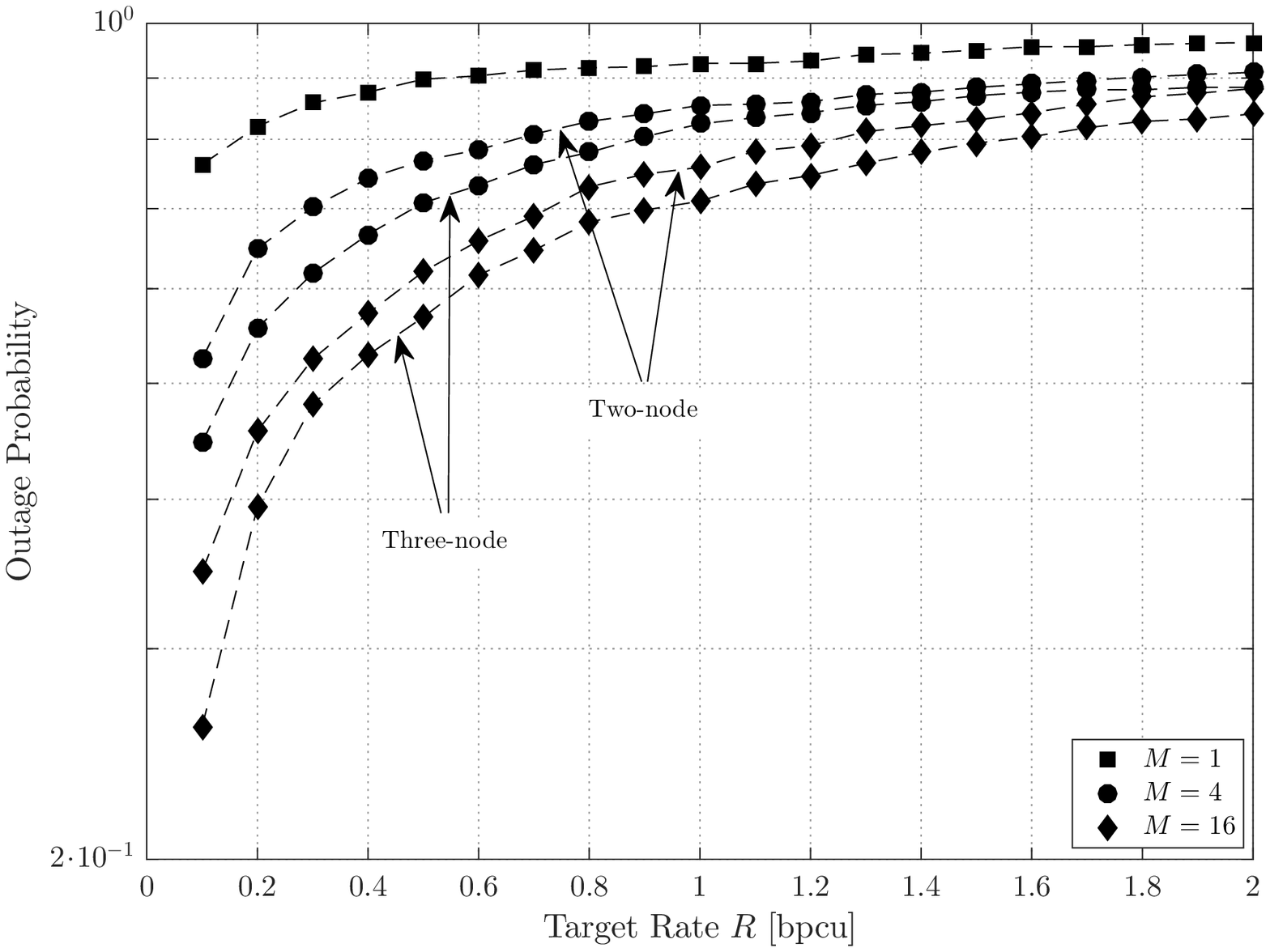}
    \captionof{figure}{Outage probability vs target rate $R$ using 3GPP model.}
    \label{fig:3gpp}
  \end{minipage}
\end{figure*}

Finally, we adopt the 3GPP model for multi-cell pico cellular networks \cite[Table 6.3-1]{3GPP} which was also used in \cite{TD} to validate our setup. In this model, the probability of a link being in line-of-sight (LOS) is given by
$p_{\rm LOS} = 0.5 - \min\left\{0.5,5\exp\left(-\frac{0.156}{r}\right)\right\}+\min\left\{0.5, 5\exp\left(-\frac{r}{0.03}\right)\right\}$, where the distance $r$ is in kilometers. The path loss $l_{\rm LOS}(r)$ between two pico base stations for the LOS case is given by a dual slope model,
\begin{equation*}
  l_{\rm LOS}(r) =
  \begin{cases}
    10^{-9.84} r^{-2}, & r < \frac{2}{3},\\
    10^{-10.19} r^{-4}, & r > \frac{2}{3},
  \end{cases}
\end{equation*}
and for the non-LOS (NLOS) case $l_{\rm NLOS}(r) = 10^{-16.94}r^{-4}$. Moreover, the path loss models between a base station and a user are $l_{\rm LOS}(r) = 10^{-10.38}r^{-2.09}$, and $l_{\rm NLOS}(r) = 10^{-14.54}r^{-3.75}$. The following parameters have been used: $P_b = 24$ dBm, $P_u = 23$ dBm, $\sigma^2_n = 5$ dB, $\sigma_\ell^2 = -30$ dB and $\lambda = 0.1$. Finally, the shadowing between two base stations is considered as lognormal with standard deviation $6$ dB and between a base station and a user has standard deviation $3$ dB for LOS and $4$ dB for NLOS. Fig. \ref{fig:3gpp} depicts the results of this model for our considered setup using simulations. It is clear from the figure that the same observations provided above can be deduced.

\section{Conclusion}\label{sec:conc}
In this paper, we have presented the impact of directional antennas on the interference mitigation in FD cellular networks. Despite the fact that the two-node architecture has by definition greater potentials, since both the BS and the user operate in FD mode, good performance is difficult to be achieved due to the inability of the terminals to restrict the residual LI. On the other hand, the three-node architecture looks more promising since, in this case, the FD-mode BS can passively suppress the LI and the HD-mode user is not affected by it. Indeed, this is also clear in the composite network where we showed that it is more beneficial for the three-node architecture to be employed in most, if not all, of the cells of the network unless the FD-mode users choose to use the uplink for small time periods. Based on our observations, we believe that the two main characteristics of the three-node architecture, i.e. passive LI suppression at the BSs and HD-mode users, makes it the most preferable architecture of the two. The three-node architecture is already regarded as the topology to be implemented first in the case of FD cellular networks due to the high complexity and energy requirements which FD will impose on future devices \cite{JSAC, GOY3}. The results of this paper are in line with this view and provide insight as to how such an architecture will perform in cellular networks with FD capabilities.

\appendices
\section{Proof of Theorem \ref{thm:out_prob_2d}}\label{prf:out_prob_2d}
To derive the outage probability, i.e. the cumulative distribution function of the SINR, we take the expectation over both small- and large- scale fading \cite{HAEN1, HAEN2}; decoupling the two is left as a potential future direction \cite{GG}. Therefore, conditioning on the nearest BS being at a distance $r$ we have,
\begin{align*}
P_{\TWD} &= \E_r \left[\PP\left[\log_2\left(1+\sinr\right) < R ~ | ~ r\right]\right] = \int_0^\infty \PP[\log_2(1+\sinr) < R ~ | ~ r]f_r(r)\dd r\\
&=1-2\pi \la\int_0^\infty re^{-\la \pi r^2}\PP[\sinr \geq 2^R-1 ~ | ~ r]\dd r.
\end{align*}
Letting $\tau = 2^R-1$, $\PP[\sinr \geq \tau ~ | ~ r]$ is the coverage probability conditioned on the distance $r$ and is given by,
\begin{align}\nonumber
\PP[\sinr \geq \tau ~ | ~ r] &=
\PP\left[|h|^2 \geq \frac{\tau r^{\alpha_1}}{P_b\G_{u,b,1}}(\sigma_n^2 + \mathds{1}_{\rm FD}I_\ell + I_b + I_u) ~ \Bigg| ~ r\right]\\\label{eq:exp}
&\stackrel{(a)}{=} \E_{I_\ell,\Phi,\Psi}\left[\exp\left(-\frac{ \tau r^{\alpha_1}}{P_b\G_{u,b,1}}(\sigma_n^2 + I_\ell + I_b + I_u)\right)\right]\\\nonumber
&\stackrel{(b)}{=} \exp\left(-\frac{\tau r^{\alpha_1}}{P_b\G_{u,b,1}}\sigma_n^2\right) \E_{I_\ell}\left[e^{-\frac{\tau r^{\alpha_1}}{P_b\G_{u,b,1}}I_\ell}\right]
\E_{I_b}\left[e^{-\frac{\tau r^{\alpha_1}}{P_b\G_{u,b,1}}I_b}\right]
\E_{I_u}\left[e^{-\frac{\tau r^{\alpha_1}}{P_b\G_{u,b,1}}I_u}\right]\\\nonumber
&\stackrel{(c)}{=} \exp\left(-\frac{\tau r^{\alpha_1}}{P_b\G_{u,b,1}}\sigma_n^2\right) \La^{\TWD}_{I_\ell}\left(\frac{\tau r^{\alpha_1}}{P_b\G_{u,b,1}}\right)
\La^{\TWD}_{I_b}\left(\frac{\tau r^{\alpha_1}}{P_b\G_{u,b,1}}\right)
\La^{\TWD}_{I_u}\left(\frac{\tau r^{\alpha_1}}{P_b\G_{u,b,1}}\right),
\end{align}
where $(a)$ follows from the fact that $|h|^2 \sim \exp(1)$ and $\mathds{1}_{\rm FD} = 1$ since the receiver in the two-node architecture is FD-capable; $(b)$ follows from the independence between $\Phi_b$ and $\Phi_u$ (and therefore the independence between $I_b$ and $I_u$); $(c)$ $\La_I(s)$ denotes the Laplace transform of the random variable $I$ evaluated at $s$. Using $s = \frac{\tau r^{\alpha_1}}{P_b\G_{u,b,1}}$, the Laplace transform of $I_\ell$ can be derived from the moment generating function (MGF) of an exponential variable since $I_\ell = P_u \G_{u,u,1} |h_\ell|^2$ and $|h_\ell|^2 \sim \exp(1/\sigma^2_\ell)$. Therefore,
\[\La^{\TWD}_{I_\ell}\left(s\right) = \frac{1}{1+ s P_u \G_{u,u,1} \sigma_\ell^2}.\]
As there is no intra-cell interference, $\La_{I_u}(s)$ needs to be evaluated conditioned on the distance, say $\rho$, from $u_0$ to the closest uplink user in the neighbouring cells. Since the densities of $\Phi$ and $\Psi$ are equal and $\rho$ is independent of $r$, we assume $\rho$ is distributed according to the pdf of the distance to the nearest neighbour as given in Section \ref{sec:model} \cite{CP, HT}. Thus the Laplace transform of $I_u$ is given by,
\begin{align}\label{eq:laplace_I_u}
\La^{\TWD}_{I_u}(s) = \E_{I_u}[e^{-sI_u} ~ | ~ \rho] = \int_0^\infty \E_{I_u}[e^{-sI_u}]f_{\rho}(\rho)\dd \rho.
\end{align}
The expected value is then evaluated as follows,
\begin{align}
\nonumber \E_{I_u}[e^{-sI_u}] &=
\prod_{i\in\{1,2,3,4\}} \E_{\Psi_i, |k_j|^2}\left[\exp(-sP_u \G_{u,u,i} \sum_{j\in\Psi_i} |k_j|^2 D_j^{-\alpha_2})\right]\\
&\stackrel{(a)}{=} \prod_{i\in\{1,2,3,4\}} \E_{\Psi_i}\left[\prod_{j\in\Psi_i}\E_k\left[\exp\left(-sP_u \G_{u,u,i} |k|^2 D_j^{-\alpha_2}\right)\right]\right]\nonumber\\\nonumber
&\stackrel{(b)}{=} \prod_{i\in\{1,2,3,4\}}\exp\left\{-2\pi\la_{u,u,i} \int_\rho^\infty \left(1-\E_{k}\left[\exp\left(-sP_u \G_{u,u,i} |k|^2 y^{-\alpha_2}\right)\right] \right) y \dd y \right\}\\
&\stackrel{(c)}{=} \prod_{i\in\{1,2,3,4\}}\exp\left\{-2\pi\la_{u,u,i} \int_\rho^\infty \left(1 - \frac{1}{1 + s P_u \G_{u,u,i} y^{-\alpha_2}} \right) y \dd y \right\}\label{eq:int_approx}\\
&\stackrel{(d)}{=} \prod_{i\in\{1,2,3,4\}} \exp\left\{-\frac{2\pi \la_{u,u,i} }{\alpha_2-2} \G_{u,u,i} F\left(\alpha_2, \frac{s \G_{u,u,i} P_u}{ \rho^{\alpha_2}}\right)s P_u \rho^{2-\alpha_2}\right\},\label{eq:exp_I_u}
\end{align}
where $(a)$ follows from the fact that $|k_j|^2$ are independent and identically distributed and also independent from the point process $\Psi$; $(b)$ follows from the probability generating functional (PGFL) of a PPP \cite{HAEN1} and the limits are from $\rho$ to $\infty$ since the closest interfering uplink user is at least at a distance $\rho$; $(c)$ follows from the MGF of an exponential random variable and since $|k|^2 \sim \exp(1)$ and finally $(d)$ is based on \cite[Eq. (3.194.2)]{GRA}.

\noindent Replacing $\E_{I_u}[e^{-sI_u}]$ with \eqref{eq:exp_I_u} gives,
\begin{align*}
\La^{\TWD}_{I_u}\left(s\right) &= 2\pi \la\int_0^\infty \rho e^{-\la \pi \rho^2}
\prod_{i\in\{1,2,3,4\}} \exp\left\{-\frac{2\pi \la_{u,u,i}}{\alpha_2-2} \G_{u,u,i} F\left(\alpha_2, \frac{s \G_{u,u,i} P_u}{ \rho^{\alpha_2}}\right)s P_u \rho^{2-\alpha_2}\right\}\dd \rho\\
&=2\pi \la\int_0^\infty \rho \exp\left\{(-\pi \rho^2\left(\la + \sum_{i\in\{1,2,3,4\}} \frac{2\la_{u,u,i}}{\alpha_2-2} \G_{u,u,i} F\left(\alpha_2, \frac{s \G_{u,u,i} P_u}{\rho^{\alpha_2}}\right)\frac{sP_u}{\rho^{\alpha_2}}\right)\right\}\dd \rho.
\end{align*}
The Laplace transform of $I_b$ can be derived similarly to above where the limits of the integral are from $r$ to $\infty$ since the nearest interfering BS is further from the associated BS. This gives,
\[\La^\TWD_{I_b}\left(s\right) = \prod_{i\in\{1,2,3,4\}}\exp\left\{-\frac{2\pi\la_{u,b,i}}{\alpha_1-2} \frac{\G_{u,b,i}}{\G_{u,b,1}} F\left(\alpha_1, \frac{\G_{u,b,i}}{\G_{u,b,1}} \tau\right)r^2 \tau \right\},\]
and the result follows.\vspace{-3mm}

\section{Passive Suppression Function}\label{pass_suppress_desc}

The passive suppression function $\supfn$ has been derived based on the experimental results in \cite{EE1}. It provides the level of passive suppression that can be achieved at a certain angle $\theta$ between the transmit and receive antenna; $\theta_{\rm max}$ is where the maximum suppression occurs. The smaller the value of $\supfn$ the better, so $\supfn = 0$ refers to perfect passive suppression and $\supfn = 1$ refers to no passive suppression, i.e., when the transmit and receive antenna operate in the same sector ($\theta = 0$), which is always true for the two-node architecture. The function never actually takes the value of $0$ as passive suppression mitigates the loopback interference but cannot erase it completely. The cosine difference $\cos\left(\theta_{\rm max}\right)-\cos\left(|\theta|-\theta_{\rm max}\right)$ was chosen due to the symmetry obtained around $\theta = 0$ and since it provides the lowest value at $\theta_{\rm max}$. Note that there are most likely many other functions that can provide a similar behaviour. As $\theta$ gets smaller the passive suppression ability diminishes since the coupling between the two antennas becomes stronger \cite{EE1}. This behaviour is captured by the cosine difference. The exponential function was chosen in order to obtain a value of $1$ at $\theta = 0$ and since it always provides a positive value. Essentially, any exponential function would produce a similar behaviour but the natural exponential function was chosen due to its popularity. We assume that $\theta_{\rm max}$ increases with the antenna efficiency and so for small $\theta_{\max}$ the achievable passive suppression is generally low and for most angles there is no passive suppression. In these cases, $\supfn$ may take values greater than $1$ and thus the $\min$ operator was used.\vspace{-2mm}

\section{Proof of Theorem \ref{thm:out_prob_3u}}\label{prf:out_prob_3u}
The proof of Theorem \ref{thm:out_prob_3u} follows similar steps as the proofs of Theorems \ref{thm:out_prob_2d} and \ref{thm:out_prob_3d}. The Laplace transform of $I_b$ is derived in the same way as \eqref{eq:lap_int_u_2d}. Also, the Laplace transform of $I_u$ is evaluated similarly to \eqref{eq:lap_int_u_2d} but the interference fields $\Psi_i$ in this case are inhomogeneous PPPs with density function $\la_{b,u,i} \left(1-\exp(-\pi\la x^2)\right)$ \cite{GOY2, JA2} which ensures that $\Psi_i$ contain only the uplink users from other cells. Therefore, using the same steps as in Appendix \ref{prf:out_prob_2d} we have
\begin{align*}
&\E_{I_u}[e^{-sI_u}] = \prod_{i\in\{1,2,3,4\}} \E_{\Psi_i}\left[\prod_{j\in\Psi_i}\E_k\left[\exp\left(-sP_u \G_{b,u,i} |k|^2 D_j^{-\alpha_2}\right)\right]\right]\\
&= \prod_{i\in\{1,2,3,4\}}\exp\left\{-2\pi\la_{b,u,i} \int_0^\infty \left(1-\exp(-\pi\la y^2)\right)\left(1-\E_{k}\left[\exp\left(-sP_u \G_{b,u,i} |k|^2 y^{-\alpha_2}\right)\right] \right) y \dd y \right\}\\
&= \prod_{i\in\{1,2,3,4\}}\exp\left\{-2\pi\la_{b,u,i} \left(\int_0^\infty \frac{s P_u \G_{b,u,i}}{s P_u \G_{b,u,i} + y^{\alpha_2}} y\dd y + \int_0^\infty \frac{s P_u \G_{b,u,i}}{s P_u \G_{b,u,i} + y^{\alpha_2}} e^{-\pi\la y^2} y \dd y\right) \right\},
\end{align*}
where $s = \frac{\tau r^{\alpha_1}}{P_u G_b G_u}$; the first integral is evaluated similarly to the proof of expression \eqref{eq:lap_int_u_3d} and the result follows by the change of variable $y^2 = z$. Finally, for $\La_{I_\ell}(s) = \E\left[e^{-sI_\ell}\right]$ where $I_\ell$ is given by \eqref{eq:li_3u}, we have
\begin{align}
\La^{\THU}_{I_\ell}\left(s\right)
&= \E\left[\exp\left(-s P_u G_b^2 |h_\ell|^2 \supfn(B_0 + \g_b (1-B_0))\right)\right]\\
&= \frac{1}{M_b}\E\left[\exp\left(-s P_u G_b^2 |h_\ell|^2\right)\right] + \frac{M_b-1}{M_b}\E\left[\exp\left(-s P_u G_b H_b |h_\ell|^2 \supfn\right)\right],\label{eq:bern_step}
\end{align}
with $\theta \neq 0$. Expression \eqref{eq:bern_step} follows from $f_\ell(0, \theta_{\rm max}) = 1$, $\theta_{\rm max} \in \left[\frac{\pi}{2},\pi\right]$  and the Bernoulli random variable $B_0$ with parameter $\displaystyle\frac{1}{M_b}$. Since each angle $\theta \neq 0$ between the two sectors occurs with probability $\displaystyle\frac{1}{M_b}$ and using the MGF of an exponential random variable we have,
\[\La^{\THU}_{I_\ell}(s) = \frac{1}{M_b}\frac{1}{1+ s P_b G_b^2 \sigma_\ell^2} +
\frac{1}{M_b}\mathlarger{\sum}\limits_{\substack{\theta \in [-\pi, \pi) \setminus \{0\}\\\theta \equiv 0 \Mod{\frac{2\pi}{M_b}}}}\frac{1}{1 + s P_b G_b H_b \sigma_\ell^2 \supfn}.\]
By substituting \eqref{eq:ang_fun} to the above expression we get \eqref{eq:lap_li_3u}.

\section{Proof of Proposition \ref{prop:out_multi_down}}\label{prf:out_multi_down}
Starting from \eqref{eq:exp} in Appendix \ref{prf:out_prob_2d} with $\sigma_n^2 = 0$ we have,
\begin{align*}
&\PP[\sinr \geq \tau ~ | ~ r]
= \E_{I_\ell,\Phi,\Psi}\left[\exp\left(-\frac{\tau r^{\alpha_1}}{P_b\G_{u,b,1}}(I_\ell + I_b + I_u)\right)\right]\\
&= p^{\TWN}\E_{I_\ell,\Phi,\Psi^{\TWN}}\left[\exp\left(-\frac{\tau r^{\alpha_1}}{P_b\G_{u,b,1}}(I_\ell + I_b + I_u)\right)\right]+ p^{\THN}\E_{\Phi,\Psi^{\THN}}\left[\exp\left(-\frac{\tau r^{\alpha_1}}{P_b\G_{u,b,1}}(I_b + I_u)\right)\right]\\
&= p^{\TWN}\La^{\TWD}_{I_b}\left(\frac{\tau r^{\alpha}}{P_b\G_{u,b,1}}\right)\La^{\TWD}_{I_\ell}\left(\frac{\tau r^{\alpha}}{P_b\G_{u,b,1}}\right) \La^{\TWD}_{I_u}\left(\frac{\tau r^{\alpha}}{P_b\G_{u,b,1}}\right)
 + p^{\THN} \La^{\THD}_{I_b}\left(\frac{\tau r^{\alpha}}{P_b\G_{u,b,1}}\right) \La^{\THD}_{I_u}\left(\frac{\tau r^{\alpha}}{P_b\G_{u,b,1}}\right),
\end{align*}
where $\La_{I_b}\left(\frac{\tau r^{\alpha}}{P_b\G_{u,b,1}}\right)$, $\La^{\TWD}_{I_u}\left(\frac{\tau r^{\alpha}}{P_b\G_{u,b,1}}\right)$ and $\La^{\THD}_{I_u}\left(\frac{\tau r^{\alpha}}{P_b\G_{u,b,1}}\right)$ are given by \eqref{eq:lap_int_b_2d}, \eqref{eq:lap_int_u_2d_multi} and \eqref{eq:lap_int_u_3d_multi} respectively. Since $\Pi_d = 1 - 2\pi \la\int_0^\infty \PP[\sinr \geq \tau ~ | ~ r] re^{-\la \pi r^2}\dd r$ the result follows.

\end{document}